\def\halfthinspace{\relax\ifmmode\mskip.5\thinmuskip\relax\else\kern.8888em\fi}\let \hts=\halfthinspace
\def\({\left (}
\def\){\right )}
\def\ajoin{{\hts + \hts}}
\def\aunit{1}
\def\ameet{{\hts \cdot \hts}}
\def\azero{0}
\def\acompl#1{\overline{#1}}
\def\acompo{{\hts ; \hts}}
\def\aid{1^\prime}
\def\aconv#1{\setbox13\hbox{$#1$}\ifdim\wd13<12pt\breve{#1}\else{\(#1\)}\breve{\ }\fi}
\def\aclosure#1{{#1}^{*}}
\def\pjoin{\cup}
\def\punit#1{#1}
\def\pmeet{\cap}
\def\pzero{\emptyset}
\def\pcomple{\mbox{$^{\mbox{--}}$}}
\def\pcompo{{\hts \circ \hts}}
\def\pid{Id}
\def\pconv#1{\setbox13\hbox{$#1$}\ifdim\wd13<10pt\stackrel{\smile}{#1}\else{\(#1\)}^{\smile}\fi}
\def\pconve{^{\smile}}
\def\pclosure#1{{#1}^{\underline{*}}}
\def\setof#1#2{\left \{\, #1 : #2 \,\right \}}
\def\set#1{\left \{\, #1 \,\right \}}
\def\pair#1#2{\left \langle #1, #2 \right \rangle}
\def\nat{\Longrightarrow}
\def \op{{\sf op}}
\def \<{\left\langle}
\def \>{\right\rangle}
\def \({\left(}
\def \){\right)}
\def \setof#1#2{\setbox1\hbox{$#1$}
                \setbox2\hbox{$#2$}
                \ifdim \ht1 > \ht2
                   \left \{ \left . \, #1 \, \right \vert \, #2 \, \right \}
                \else
                   \left \{ \, #1 \, \left \vert \, #2 \, \right . \right \} 
                \fi}
\def \set#1{\left \{\, #1 \,\right \}}
\def \pair#1#2{\left \langle #1, #2 \right \rangle}
\newcommand{\X}{\mathsf{X}}        
\newcommand{\U}{\mathsf{U}}
\newcommand{\Ex}{\mathsf{EX}}
\newcommand{\EG}{\mathsf{EG}}
\newcommand{\eee}{\mathsf{E}}
\newcommand{\NAT}{ I\hspace{-0.12cm}N } 
\newtheorem{definition}{Definition}
\newtheorem{example}{Example}
\newtheorem{proposition}{Proposition}
\newtheorem{theorem}{Theorem}
\newtheorem{fact}{Fact}
\newtheorem{corollary}{Corollary}
\newtheorem{lemma}{Lemma}
\def \qed{\ifmmode\rule{5pt}{5pt}\else{\nobreak\hfil\penalty50\hskip1em\null\nobreak\hfil\rule{5pt}{5pt}\parfillskip=0pt\finalhyphendemerits=0\endgraf}\fi}
\newenvironment{proof}{\emph{\textbf{Proof:}}\rm}{\qed \par \medskip}
\definecolor{LavanderPosta}{HTML}{ebd9fc}
\title{A proof theoretic basis for relational semantics\thanks{Carlos G.\ Lopez Pombo's research is supported by Universidad de Buenos Aires through grant UBACyT 20020170100544BA and Agencia Nacional de Promoci\'on de la Investigaci\'on, el Desarrollo Tecnol\'ogico y la Innovaci\'on Cient\'{\i}fica through grant PICT-2019-2019-01793. Thomas S.E. Maibaum's research is supported by the Canada Research Chairs Program through grant 950-225524, General Motors of Canada Ltd through contract PO \# 4300478369, Natural Sciences and Engineering Research Council through grant CRDPJ-515486-2017, and Ontario Centres of Excellence through grant OCE \# 30040; Carlos G. Lopez Pombo's research was also supported by the following grants to Thomas S. E. Maibaum: Ontario Ministry of Economic Development, Job Creation and Trade, Ontario Research Fund - Research Excellence Program ORF-RE 03-045, Natural Sciences and Engineering Research Council, Automotive Partnership Canada GRFN APCPJ386797-09, Natural Sciences and Engineering Research Council of Canada GRFN STPGP430575-12, Natural Sciences and Engineering Research Council GRFN RGPIN26775-11, Ontario Ministry of Research and Innovation, Ontario Research Fund - Research Excellence Program ORF-RE 05-044.}}
\author{
Carlos~G.~Lopez~Pombo\\
\small{Department of Computing, FCEyN, Universidad de Buenos Aires}\\
\small{Consejo Nacional de Investigaciones Cient{\'{\i}}ficas y Tecnol{\'o}gicas (CONICET)}\\
\small{\url{clpombo@dc.uba.ar}}
\and 
Thomas~S.E.~Maibaum\\
\small{Department of Computing and Software, McMaster University}\\
\small{\url{tom@maibaum.org}}
}
\date{\today}
\begin{document}

\maketitle

\begin{abstract}
Logic has proved essential for formally modeling software based systems. Such formal descriptions, frequently called specifications, have served not only as requirements documentation and formalisation, but also for providing the mathematical foundations for their analysis and the development of automated reasoning tools.\\
Logic is usually studied in terms of its two inherent aspects: syntax and semantics. The relevance of the latter resides in the fact that producing logical descriptions of real-world phenomena, requires people to agree on how such descriptions are to be interpreted and understood by human beings, so that systems can be built with confidence in accordance with their specification. On the more practical side, the metalogical relation between syntax and semantics, determines important aspects of the conclusions one can draw from the application of certain analysis techniques, like model checking.\\
Abstract model theory (i.e., the mathematical perspective on semantics of logical languages) is of little practical value to software engineering endeavours. From our point of view, values (those that can be assigned to constants and variables) should not be just points in a platonic domain of interpretation, but elements that can be named by means of terms over the signature of the specification. In a nutshell, we are not interested in properties that require any semantic information not representable using the available syntax.\\
In this paper we present a framework supporting the proof theoretical formalisation of classes of relational models for behavioural logical languages, whose domains of discourse are guaranteed to be formed exclusively by nameable values.
\end{abstract}

\section{Introduction}
\label{introduction}

Logic has proved essential as a formal tool for describing, and then reasoning about, different aspects of the world we perceive. The formal modelling of software artefacts is a widely known, and accepted, example of its usefulness. Many formal languages have been devised in order to reflect different aspects of the behaviour of software systems; among many examples, one can mention linear time temporal logics, both propositional \cite{pnueli:ieee-focs77,pnueli:tcs-13_1} and first-order \cite{manna95}, branching time temporal logics \cite{benari:acm-sigplan-sigact81}, $CTL^*$ \cite{emerson:jacm-33_1} as a compromise between linear-time and branching-time logics, the many versions of dynamic logics \cite{fischer:stoc77,harel00} and dynamic linear temporal logic \cite{henriksen:apal-96_1_3} in order to try to capture the links between dynamic and linear temporal logics, higher-order logics \cite{vanbenthem:hlfcs83} for capturing several notions like higher-order functions, equational logic \cite{henkin:amm-84_8} for capturing abstract data types \cite{ehrig:cj-35_5}, etc. 

Software analysis is an area in software engineering (and computer science) concerned with the application of automatic, and semi-automatic, techniques aimed at proving the (relative) absence of (certain) errors, or the compliance with specific quality standards, resulting from the satisfaction of certain desired properties of their formal specifications. Many of the aforementioned logical languages have associated tools enabling their use, not only as specification languages for formalising the requirements or the designs of a system, but also for providing some sort of mechanised analysis.

As usual, formalising a software based system by resorting to a logical language requires people involved in the development process to agree on how such descriptions are to be interpreted and understood, so that a system can be built in accordance with its specification and, therefore, have the expected behaviour. Semantics plays a central role in this endeavour as it provides a way of substituting the perhaps drier and more esoteric forms provided by syntactic descriptions, by the more intuitive modes of understanding appealing to some naive form of set theory. Therefore, the metalogical relation between these two inherent aspects of logical languages, syntax and semantics, determines important aspects of the conclusions one can draw from the application of their associated analysis techniques, like model-checking, that could be implemented over a representation of models for which the reasoning is not complete.

Abstract model theory \cite{chang++:barwise90,goguen:jacm-39_1} (i.e., the mathematical perspective on semantics of logical languages, generally concerned with the understanding of a class of mathematical structures) is of little practical value to software engineering endeavours. The logicians' perspective on semantics generally relies on descriptions given in naive set theory and, what is more troubling, the unquestioned hypothesis that the intuition about such descriptions is shared by the whole of the community; for example, the K4 system is interpreted over Kripke frames whose accessibility relation is transitive. These mathematical, semiformal tools might be considered enough for agreeing on the validity of a certain mathematical property, but, in the case of mission critical systems like those running on unmanned autonomous vehicles, they cannot be adopted for the analysis and verification stages of the certifications of safety critical properties. From our point of view, when semantics is to be used as a reasoning tool, values (those that can be assigned to constants and variables), should not be just points in a platonic domain of interpretation, but elements that can be named by means of terms over the signature of the specification. In a nutshell, we are not interested in properties that require any semantic information not representable using the available syntax, but only of those whose truth status is determined by concrete elements whose existence is witnessed by whether they can be constructed or not through the invocation of the available functions.

Many different classes of concrete models can be devised, each of which might be useful in its own use context. A traditional example from the literature is the definition of set-based classes of algebras, usually referred to as proper, in some sort of formal version of set theory; a different approach can be found in Rabe's work \cite{rabe:jlc-27_6} in which the author proposes defining models as morphisms to maximally consistent theories capable of determining the truth/falsity of any formula of the logical language. From our point of view, the strength of Rabe's approach is twofold: 
\begin{inparaenum}[1)]
\item underpinning the existence of well-defined composition mechanisms for logical specifications and combination mechanisms for logical languages, and
\item its generality, resulting from the disregard of any interpretation of logical operators over elements external to the logical language's syntax.
\end{inparaenum}
One can locate its weakness, if any, in the lack of a more intuitive view of the logical structure of individual models, and the properties of classes of such models, as a consequence of focusing on the syntactic aspects of the language, thus debilitating the idea that semantics plays an important role in the possibility of building agreements on the meaning of a logical language.

The direction we pursue in this paper shares its motivation with the one presented by Schlingloff and Heinle in \cite{schlingloff:brink97} (i.e., that in the context of the verification of safety critical systems, the model theory of behavioural specifications has to be formally defined). Instead of focussing on the study of modal logics from a relational algebraic point of view, we focus on providing a general and versatile, proof theoretically supported, framework for defining classes of relational models. These classes of models are widely used to provide semantics to a variety of modal \cite{kripke:apf-16}, hybrid \cite{areces:phdthesis} and deontic logics \cite{aqvist:hpl01,vonwright:mns-60_237}, ubiquitous in computer science and software engineering.

The framework we propose has two desirable properties. First, it is capable of capturing a wide range of logics (many of which are shown as examples in Sec.~\ref{examples}), making the effort of the formalisation worthwhile and, second, the semantics resulting from the use of the framework has to be widely understandable so as to enable the process of agreement about the intended meaning of the syntactic descriptions in the engineering context. (This is one of the major triumphs of the conventional naive set theory based approaches to semantics.)


As we mentioned in the preceding paragraphs, faithful to the standpoint of computer science and, more specifically, to that of software engineering, our proposal focuses on building a framework for describing classes of relational models, whose interpretation of both rigid and flexible symbols is done over concrete values (i.e., they can be denoted by terms).

The semantics of many specification languages used in the description of software artefacts, among which we can find many modal and hybrid logics, is defined over relational models \cite{kripke:zmlgm-9_56,kripke:ttm65}. In general, it is given in terms of the following common elements:
\begin{inparaenum}[1)] 
\item an \emph{interpretation} of a subset of symbols whose meaning is fixed for all states in which the system can be, usually referred to as \emph{rigid symbols};
\item a relational structure whose places are considered \emph{states} of the systems (also referred to as \emph{worlds}), serving the purpose of providing meaning for the symbols whose meaning can vary, usually referred to as \emph{flexible symbols};
\item some form of structuring of states (for example, infinite sequences of states in linear temporal logics, states in branching time temporal logics or dynamic logics, etc.) capable of interpreting the behaviour of the modal operators of the logical language; and
\item a notion of \emph{satisfaction} relating a structure (usually consisting of an interpretation, a relational structure and a specific structuring of states) to a formula depending on whether the latter is true when it is interpreted within the context of the former.
\end{inparaenum}
In general, the ordering of states must be coherent with the accessibility relation determined by the relational structure.

The contribution of this paper is centred on the definition of a unified framework for providing relational semantics to logical languages, within the field of Institutions \cite{goguen:cmwlp84,goguen:jacm-39_1}. Institutions have proved useful as a formal tool for:
\begin{inparaenum}[1)]
\item  providing a neat structuring of the relevant concepts of model theory, by resorting to tools coming from the field of category theory, and
\item  providing mechanisms for understanding concepts relevant to software engineering, such as modularity, parameterisation, heterogeneous description, etc.
\end{inparaenum}
On the one hand, we propose a formalisation of interpretations and sates as theories in equational logic \cite{henkin:amm-84_8}, extended with non-logical predicate symbols, providing the means for: 
\begin{inparaenum}[1)]
\item representing the atomic formulae (i.e., equality of terms like $t_1 = t_2$, provided that $t_1$ and $t_2$ are ground terms, and predicate symbols applied to an appropriate number of terms like $P (t_1, \ldots, t_n)$, provided that $t_1, \ldots, t_n$ are ground terms) that hold in a state, and consequently
\item representing values as ground terms.
\end{inparaenum}
On the other hand, we propose the formalisation of the relational structure of models as the models of a theory presentation in Tarski's \emph{Elementary Theory of (Binary) Relations} \cite{tarski:jsl-6_3}, extended with reflexive and transitive closure (and the necessary sentential elements required in order to obtain a complete axiomatisation of this new relational operator).

The paper is organised as follows: Sec.~\ref{institutions} presents the formal background required to understand the rest of the paper, Sec.~\ref{languages} present the basic logical languages over which the framework will be constructed,  Sec.~\ref{relmodels} present the definition of the framework, in Sec.~\ref{examples} we show the use of the framework for providing concrete semantics to several well-known modal logics and, finally in Sec.~\ref{conclusions} we draw some conclusions and outline further lines research.

\section{Institutions and General logics}
\label{institutions}
From now on we assume that the reader has a nodding acquaintance with category theory and is familiar with the basic definitions of the field. (See \cite{vanoosten:BRICS-LS-95-1} for a quick reference or \cite{maclane71,pierce91} for a more thorough presentation.)

The theory of institutions, initially presented by Goguen and Burstall in \cite{goguen:cmwlp84}, provides a formal and generic definition of what a logical system is, from an abstract model theoretical point of view. This work evolved in many directions: in \cite{meseguer:lc87}, Meseguer complemented the theory of institutions by providing a categorical characterisation for the notions of entailment system (also called $\pi$-institutions by Fiadeiro et. al. in \cite{fiadeiro:icdcwtfm93}) and the corresponding notion of proof calculi; in \cite{goguen:jacm-39_1,tarlecki:sadt-rtdts95} Goguen and Burstall, and Tarlecki, respectively, extensively investigated the ways in which institutions can be related, among which theoroidal co-morphisms have a distinguished role by providing a notion of semantics preserving representation of a logical system into another.

Let us review the definitions we will need throughout the present work.

An \emph{entailment system} is defined by identifying a family of \emph{syntactic} consequence relations. Each of the elements in this family is associated with a signature. These relations are required to satisfy reflexivity, monotonicity and transitivity.

\begin{definition}[Entailment system \cite{meseguer:lc87}] 
\label{entailment-system}
An \emph{entailment system} is a structure of the form $\< \mathsf{Sign}, \mathbf{Sen}, \{\vdash_{\Sigma}\}_{\Sigma \in |\mathsf{Sign}|} \>$ satisfying the following conditions:
\begin{itemize}[$-$]
\item $\mathsf{Sign}$ is a category of signatures,
\item $\mathbf{Sen}: \mathsf{Sign} \to \mathsf{Set}$ is a functor.
\item $\{\vdash_{\Sigma}\}_{\Sigma \in |\mathsf{Sign}|}$, where $\vdash_{\Sigma} \subseteq 2^{\mathbf{Sen}(\Sigma)} \times \mathbf{Sen}(\Sigma)$, is a family of binary relations such that for any $\Sigma, \Sigma' \in |\mathsf{Sign}|$, $\{\phi\} \cup \{\phi_i\}_{i \in  \mathcal{I}} \subseteq \mathbf{Sen}(\Sigma)$, $\Gamma, \Gamma' \subseteq \mathbf{Sen}(\Sigma)$, the following conditions are satisfied:
\begin{enumerate}
\item \label{cond-1} reflexivity: $\{\phi\} \vdash_\Sigma \phi$,
\item \label{cond-2} monotonicity: if $\Gamma \vdash_\Sigma \phi$ and $\Gamma \subseteq \Gamma'$, then $\Gamma' \vdash_\Sigma \phi$,
\item \label{cond-3} transitivity: if $\Gamma \vdash_\Sigma \phi_i$ for all $i \in \mathcal{I}$ and $\{\phi_i\}_{i \in \mathcal{I}} \vdash_\Sigma \phi$, then $\Gamma \vdash_\Sigma \phi$, and
\item \label{cond-4} $\vdash$-translation: if $\Gamma \vdash_\Sigma \phi$, then for all $\sigma: \Sigma \to \Sigma' \in ||\mathsf{Sign}||$, $\mathbf{Sen}(\sigma)(\Gamma) \vdash_{\Sigma'} \mathbf{Sen}(\sigma)(\phi)$. 
\end{enumerate}
\end{itemize}
\end{definition}

\begin{definition}[Theory \cite{meseguer:lc87}]
Let $\< \mathsf{Sign}, \mathbf{Sen}, \{\vdash_{\Sigma}\}_{\Sigma \in |\mathsf{Sign}|} \>$ be an entailment system and $\<\Sigma, \Gamma\> \in |{\sf Th}|$. We define the function $^\bullet: 2^{\mathbf{Sen}(\Sigma)} \to 2^{\mathbf{Sen}(\Sigma)}$ as $\Gamma^\bullet = \setof{\gamma}{\Gamma \vdash_\Sigma \gamma}$. This function is extended to elements of ${\sf Th}$, by defining it as follows: $\<\Sigma, \Gamma\>^\bullet = \<\Sigma, \Gamma^\bullet\>$. $\Gamma^\bullet$ is called the theory generated by $\Gamma$.
\end{definition}

\begin{definition}[Theory presentations \cite{meseguer:lc87}]
Let us consider the entailment system $\< \mathsf{Sign}, \mathbf{Sen}, \{\vdash_{\Sigma}\}_{\Sigma \in |\mathsf{Sign}|} \>$. Then, its category of theories is a structure $\< \mathcal{O}, \mathcal{A} \>$ (generally denoted by $\mathsf{Th}$) such that:
\begin{itemize}[$-$] 
\item $\mathcal{O} = \left\{ \< \Sigma, \Gamma \>\ |\ \Sigma \in |\mathsf{Sign}|\ \mbox{and}\ \Gamma \subseteq \mathbf{Sen}(\Sigma) \right\}$, and
\item $\mathcal{A} = \left\{ \sigma: \<\Sigma, \Gamma\> \to \<\Sigma', \Gamma'\> \ {\Big |}\ \< \Sigma, \Gamma \>, \< \Sigma', \Gamma' \> \in \mathcal{O}, \right.$

\hfill $\left. \sigma: \Sigma \to \Sigma' \in ||\mathsf{Sign}||, \mbox{for all}\ \gamma \in \Gamma, \Gamma' \vdash_{\Sigma'} \mathbf{Sen}(\sigma)(\gamma) \right\}$.
\end{itemize}
\end{definition}


An \emph{institution} is defined in a similar way, by identifying a class of models and a family of \emph{semactic} consequence relations, instead of a family of syntactic consequence relations.

\begin{definition}[Institutions \cite{goguen:cmwlp84}]
\label{institution}
A structure of shape $\< \mathsf{Sign}, \mathbf{Sen}, \mathbf{Mod}, \{\models_{\Sigma}\}_{\Sigma \in |\mathsf{Sign}|} \>$ is an \emph{institution} if and only if it satisfies the following conditions:
\begin{itemize}[$-$]
\item $\mathsf{Sign}$ is a category of signatures,
\item $\mathbf{Sen}: \mathsf{Sign} \to \mathsf{Set}$ is a functor.
\item $\mathbf{Mod}: \mathsf{Sign}^\op \to \mathsf{Cat}$ is a functor.
\item $\{\models_{\Sigma}\}_{\Sigma \in |\mathsf{Sign}|}$, where $\models_{\Sigma} \subseteq |\mathbf{Mod}(\Sigma)| \times \mathbf{Sen}(\Sigma)$, is a family of binary relations,
\end{itemize}
\noindent and for any $\sigma: \Sigma \to \Sigma' \in ||\mathsf{Sign}||$, $\Sigma$-sentence $\phi \in \mathbf{Sen}(\Sigma)$ and $\Sigma'$-model $\mathcal{M}' \in |\mathbf{Mod}(\Sigma)|$, the following
$\models$-invariance condition holds:
$$\mathcal{M}' \models_{\Sigma'} \mathbf{Sen}(\sigma)(\phi)\quad \mbox{
  iff }\quad \mathbf{Mod}(\sigma)(\mathcal{M}')
  \models_{\Sigma} \phi\ .$$
\end{definition}

Roughly speaking, the last condition above says that \emph{the notion of truth is invariant with respect to the change of notation (non-logical symbols)}. Given $\Sigma \in |\mathsf{Sign}|$ and $\Gamma \subseteq \mathbf{Sen} (\Sigma)$, $\mathbf{Mod} (\Sigma, \Gamma)$ denotes the full subcategory of $\mathbf{Mod} (\Sigma)$ determined by those models $\mathcal{M} \in |\mathbf{Mod} (\Sigma)|$ such that $\mathcal{M} \models_\Sigma \gamma$, for all $\gamma \in \Gamma$. The relation $\models^\Sigma$ between sets of formulae and formulae is defined in the following way: given $\Sigma \in |\mathsf{Sign}|$, $\Gamma \subseteq \mathbf{Sen} (\Sigma)$ and $\alpha \in \mathbf{Sen} (\Sigma)$, $\Gamma \models_\Sigma \alpha\quad$ if and only if $\quad \mathcal{M} \models_\Sigma \alpha$, for all $\mathcal{M} \in |\mathbf{Mod} (\Sigma, \Gamma)|$.

Now, if we consider the definition of $\mathbf{Mod}$ extended to signatures and sets of sentences, we get a functor $\mathbf{Mod}: \mathsf{Th}^\op \to \mathsf{Cat}$ defined as follows: let $T = \<\Sigma, \Gamma\> \in |\mathsf{Th}|$, then $\mathbf{Mod} (T) = \mathbf{Mod} (\Sigma, \Gamma)$.

Now, from Defs.~\ref{institution}~and~\ref{entailment-system}, it is possible to give a definition of \emph{logic} by relating both its model-theoretic and proof-theoretic characterisations; a coherence between the semantic and syntactic relations is required, reflecting the soundness and completeness of standard deductive relations of logical systems.

\begin{definition}[Logic \cite{meseguer:lc87}]
\label{logic}
 A structure of shape $\< \mathsf{Sign}, \mathbf{Sen}, \mathbf{Mod}, \{\vdash_{\Sigma}\}_{\Sigma \in |\mathsf{Sign}|}, \{\models_{\Sigma}\}_{\Sigma \in |\mathsf{Sign}|} \>$ is a  \emph{logic} if and only if it satisfies the following conditions:
\begin{itemize}[$-$]
\item $\< \mathsf{Sign}, \mathbf{Sen}, \{\vdash_{\Sigma}\}_{\Sigma \in |\mathsf{Sign}|} \>$ is an entailment system, 
\item $\< \mathsf{Sign}, \mathbf{Sen}, \mathbf{Mod}, \{\models_{\Sigma}\}_{\Sigma \in |\mathsf{Sign}|} \>$ is an institution, and 
\item the following \emph{soundness} condition is satisfied: for any $\Sigma \in |\mathsf{Sign}|$, $\phi \in \mathbf{Sen}(\Sigma)$, $\Gamma \subseteq \mathbf{Sen}(\Sigma)$: $\Gamma \vdash_\Sigma \phi\quad \mbox{implies}\quad \Gamma \models_\Sigma \phi\ .$
\end{itemize}
A logic is \emph{complete} if, in addition, the following condition is also satisfied: for any $\Sigma \in |\mathsf{Sign}|$, $\phi \in \mathbf{Sen}(\Sigma)$, $\Gamma \subseteq \mathbf{Sen}(\Sigma)$: $\Gamma \models_\Sigma \phi\quad\mbox{implies}\quad\Gamma \vdash_\Sigma \phi \ .$
\end{definition}

Given an entailment system $\<\mathsf{Sign}^{\sf L}, \mathbf{Sen}^{\sf L}, \{\vdash^{\sf L}_\Sigma\}_{\Sigma \in |\mathsf{Sign}^{\sf L}|}\>$, or an institution $\<\mathsf{Sign}^{\sf L}, \mathbf{Sen}^{\sf L}, \mathbf{Mod}^{\sf L}, \{\models^{\sf L}_\Sigma\}_{\Sigma \in |\mathsf{Sign}^{\sf L}|}\>$, for the logic ${\sf L}$, the structure containing it's first two components (i.e. $\<\mathsf{Sign}^{\sf L}, \mathbf{Sen}^{\sf L}\>$) will be referred to as the language of ${\sf L}$.

\section{Logical languages underlying the framework}
\label{languages}
In this section we review Equational Logic by formulating it within the theory of institutions, and present an extension of the Elementary Theory of Relation with reflexive and transitive closure of relational terms. These two logical languages constitute the formal background underlying the framework used in the forthcoming sections for formalising classes of relational models.

\subsection* {Equational logic}
\emph{Equational logic} \cite{henkin:amm-84_8,enderton72} has been studied for a long time as it is the best suited tool for characterising, and studying, the behaviour of a set of functions. This is because equational logic has the minimum logical structure\footnote{The term \emph{logical structure} is used to denote the invariant aspects of the interpretation of symbols in terms of mathematical structures. In the case of equational logic the only logical symbol is the equality ``$=$''; in first-order logic, the logical symbols are the boolean connectives ``$\neg$'', ``$\lor$'', ``$\exists$'' and the remaining operators that can be defined in terms of them.} a logical system can have in order to be used as the formal specification of the behaviour of functions. We will extend the traditional definition of equational logic by adding, as part of the signature, a set of extralogical predicate symbols.

\begin{definition}[The language of Equational Logic \cite{henkin:amm-84_8}]
\label{eq}
The language of \emph{Equational Logic} is a structure $\langle {\sf Sign^{Eq}}, {\bf Sen}^{\sf Eq} \rangle$ (denoted ${\sf Eq}$ for short) such that: 
\begin{itemize}[$-$]
\item ${\sf Sign^{Eq}} = \langle \mathcal{O}, \mathcal{A} \rangle$ where:
\begin{itemize}[$-$]
\item $\mathcal{O}$ is the class of structures $\langle \{C_j\}_{j \in J}, \{f_i\}_{i \in I}, \{P_k\}_{k \in K} \rangle$ where $I$, $J$, and $K$ are contable sets, and
\item $\mathcal{A} = \left\{ \langle\sigma_C, \sigma_f, \sigma_P\rangle: \langle C, F, P \rangle \to \langle C', F', P' \rangle \ {\Big |}\ \right.$

\hfill $\left. \sigma_C: C \to C', \sigma_F: F \to F', \sigma_P: P \to P' \mbox{ are total functions} \right\}$
\end{itemize}
\item $\mathbf{Sen}^{\sf Eq}: {\sf Sign^{Eq}} \to \mathbf{Set}$ is defined as follows:
\begin{itemize}[$-$]
\item let $\Sigma = \langle C, F, P \rangle \in |{\sf Sign^{Eq}}|$, then $\mathit{Term}(\Sigma)$ is the smallest set satisfying:
\begin{itemize}[$-$]
\item $C \subseteq \mathit{Term}(\Sigma)$, and
\item for all $f \in F$, $t_1, \ldots t_{\mathit{ar}(f)} \in \mathit{Term}(\Sigma)$, then $f (t_1, \ldots t_{\mathit{ar}(f)}) \in \mathit{Term}(\Sigma)$, and
\end{itemize}
\noindent and ${\bf Sen}^{\sf Eq}(\Sigma) = \{ t = t' \ | \ t, t' \in \mathit{Term}(\Sigma)\} \cup \{ P(t_1, \ldots, t_{\mathit{ar}(P)}) \ | \ t_1, \ldots, t_{\mathit{ar}(f)} \in \mathit{Term}(\Sigma)\}$.
\item let $\Sigma = \<C, F, P\>$ and $\Sigma' = \<C', F', P'\>$ and $\sigma = \<\sigma_C, \sigma_F, \sigma_P\>: \Sigma \to \Sigma' \in ||{\sf Sign^{Eq}}||$, then we define $\sigma^*: \mathit{Term}(\Sigma) \to \mathit{Term}(\Sigma')$ as follows:
\begin{itemize}[$-$]
\item for all $c \in C$, $\sigma^* (c) = {\sigma_C} (c)$, and
\item for all $f (t_1, \ldots, t_{\mathit{ar}(f)}) \in Term(\Sigma)$, \\
$\sigma^* (f (t_1, \ldots, t_{\mathit{ar}(f)})) = {\sigma_F} (f) (\sigma^* (t_1), \ldots, \sigma^* (t_{\mathit{ar}(f)}))$.
\end{itemize}
Then, we define ${\bf Sen}^{\sf Eq}(\sigma)$ as follows:
\begin{itemize}[$-$]
\item ${\bf Sen}^{\sf Eq}(\sigma)(t = t') = \sigma^* (t) = \sigma^* (t')$, and 
\item ${\bf Sen}^{\sf Eq}(\sigma)(P (t_1, \ldots, t_{\mathit{ar}(P)})) = {{\sigma_P}_k} (P) (\sigma^* (t_1), \ldots, \sigma^* (t_{\mathit{ar}(P)}))$.
\end{itemize}
\end{itemize}
\end{itemize}
Given $\Sigma = \langle C, F, P \rangle \in |{\sf Sign}^{\sf Eq}|$, we use $f \in \Sigma$ as a short for $f \in C$, $f \in F$, or $f \in P$. Also, we assume the existence of a function $\mathit{ar}: \Sigma \to \NAT$ such that if $\Sigma \in |{\sf Sign}|$, then for every symbol $f \in \Sigma$, $\mathit{ar}(f)$ is the number of arguments to which $f$ is supposed to apply.
\end{definition}
%
%
\begin{fact}
\label{fact:eq-entail}
Let ${\sf Eq} = \<\mathsf{Sign}^{\sf Eq}, \mathbf{Sen}^{\sf Eq}\>$ be the language of Def.~\ref{eq}, then the structure $\<\mathcal{L}^{\sf Eq}, \{\vdash^{\sf Eq}_\Sigma\}_{\Sigma \in |\mathsf{Sign}^{\sf Eq}|}\>$, where $\vdash^{\sf Eq} \subseteq 2^{\mathbf{Sen}^{\sf Eq}} \times \mathbf{Sen}^{\sf Eq}$ is the standard deduction relation for equational logic, is an entailment system.
\end{fact}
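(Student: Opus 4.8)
The plan is to verify that the four conditions in Definition~\ref{entailment-system} hold for the structure $\<\mathcal{L}^{\sf Eq}, \{\vdash^{\sf Eq}_\Sigma\}_{\Sigma \in |\mathsf{Sign}^{\sf Eq}|}\>$, where $\vdash^{\sf Eq}$ is the usual derivability relation of equational logic (the congruence closure of the axioms, together with the obvious propagation rules for the extralogical predicates). Conditions~\ref{cond-1}--\ref{cond-3}, reflexivity, monotonicity and transitivity, are essentially immediate from the standard proof-theoretic presentation of equational logic: reflexivity holds because $\phi$ itself is a one-line derivation from $\{\phi\}$; monotonicity holds because enlarging the hypothesis set cannot invalidate an existing derivation; and transitivity (in the infinitary, Tarski-style formulation used here, quantifying over a family $\{\phi_i\}_{i\in\mathcal{I}}$) follows by substituting, for each use of a $\phi_i$ as a leaf in the derivation of $\phi$, the given derivation of $\phi_i$ from $\Gamma$, and then appealing to the fact that any single equational derivation uses only finitely many hypotheses.

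The only condition that requires genuine (though still routine) work is \ref{cond-4}, the $\vdash$-translation property: if $\Gamma \vdash^{\sf Eq}_\Sigma \phi$ then $\mathbf{Sen}^{\sf Eq}(\sigma)(\Gamma) \vdash^{\sf Eq}_{\Sigma'} \mathbf{Sen}^{\sf Eq}(\sigma)(\phi)$ for every signature morphism $\sigma : \Sigma \to \Sigma'$. The natural approach is induction on the length (or structure) of the derivation of $\phi$ from $\Gamma$. For the base cases one checks that $\mathbf{Sen}^{\sf Eq}(\sigma)$ sends an axiom/hypothesis to a hypothesis and sends each instance of reflexivity $t = t$ to the instance $\sigma^*(t) = \sigma^*(t)$. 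For the inductive step one checks that $\mathbf{Sen}^{\sf Eq}(\sigma)$ commutes with each inference rule — symmetry, transitivity of $=$, congruence under each function symbol, replacement of equals in predicate atoms — and here the key algebraic observation is that $\sigma^*$ is a homomorphism of term algebras, i.e. $\sigma^*(f(t_1,\dots,t_{\mathit{ar}(f)})) = \sigma_F(f)(\sigma^*(t_1),\dots,\sigma^*(t_{\mathit{ar}(f)}))$, which is exactly how $\mathbf{Sen}^{\sf Eq}(\sigma)$ was defined in Definition~\ref{eq}. Consequently $\mathbf{Sen}^{\sf Eq}(\sigma)$ transports substitution instances to substitution instances, so each rule application in the original derivation lifts to a rule application in $\Sigma'$, yielding a derivation of $\mathbf{Sen}^{\sf Eq}(\sigma)(\phi)$ from $\mathbf{Sen}^{\sf Eq}(\sigma)(\Gamma)$.

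I do not expect any real obstacle; the statement is a "folklore" sanity check that equational logic, extended harmlessly with extralogical predicate symbols, forms an entailment system in the sense of Meseguer. The one point deserving a word of care is the presence of the predicate symbols $P$: one must make sure the chosen deduction relation includes a congruence/replacement rule allowing equals to be substituted inside atoms $P(t_1,\dots,t_{\mathit{ar}(P)})$, since otherwise the system would be incomplete relative to the intended semantics; but this does not affect the entailment-system axioms, which are purely structural. A second minor subtlety is the infinitary form of transitivity in Definition~\ref{entailment-system}: one must invoke finiteness of individual equational proofs to reduce the infinite index set $\mathcal{I}$ to the finitely many premises actually used. With those remarks in place, the proof is a short verification and can be dispatched by citing the standard metatheory of equational logic and checking the four clauses in turn.
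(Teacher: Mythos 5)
Your verification is correct, and it is exactly the routine check one would expect: the paper states this as a \emph{Fact} without any proof, treating it as folklore, so your clause-by-clause verification of reflexivity, monotonicity, (infinitary) transitivity via finiteness of individual derivations, and $\vdash$-translation by induction on derivations using that $\sigma^*$ is a term-algebra homomorphism is the standard argument and fills the gap faithfully. Your two cautionary remarks (the congruence rule for the extralogical predicate atoms, and reducing the infinite index set $\mathcal{I}$ to the finitely many premises actually used) are apt and do not reveal any defect.
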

%

\subsection*{Elementary theory of relations with closure}
\label{etrc}
An outstanding effort to create an algebra in which logical reasoning can be carried out is due to Charles Sanders Peirce \cite{peirce:maas-9}. Peirce's work was deeply influenced by De Morgan's ``fourth memoir'' \cite{demorgan+:tcps-10}, where he sketched the theory of dyadic relations under the name ``the logic of relations''. This effort gave birth to the algebra of binary relations, originally as an attempt to obtain an algebraization of first-order predicate logic. It was in \cite{peirce:peirce1883} where Peirce gave the \emph{algebras of binary relations} its final shape, at that time, under the name ``the logic of relatives''. After that, Peirce's system for the algebras of binary relations was extensively developed by Schr\"{o}der in \cite{schroder1895}.

In \cite{tarski:jsl-6_3}, Tarski calls our attention to the fact that there was almost no research being carried out in the field until Whitehead and Russell \cite{whitehead27} included the algebras of binary relations in the whole of logic. It was he who committed to the development of a calculus for relations and, along the way, introduced the \emph{elementary theory of (binary) relations} as a logical formalisation of the algebra of binary relations. It was also Tarski who, in \cite{tarski:cm58}, introduced a predicate logic with infinitely long expressions, later developed by many authors like Barwise \cite{barwise:jsl-34_2} and Karp \cite{karp64}, who thoroughly studied many variants of this infinitary logical system. As Goldblatt points out in \cite{goldblatt:lncs-130}, it was Engeler who, in \cite{engeler:mst-1_2}, first used infinitely long formulae to provide formal meaning to the construct of iteration in programming languages. We will use it in an analogous way as the means for restricting the possible interpretations of the Kleene closure operator to those models in which it is interpreted as the reflexive and transitive closure.

We name \emph{elementary theory of (binary) relations with closure} the logical system obtained by enriching Tarski's \emph{elementary theory of (binary) relations} \cite{tarski:jsl-6_3} with the relational operator of reflexive and transitive closure (``${\ }^*$''), denumerable infinite disjunction (``$\bigvee$'') and denumerable infinite conjunction (``$\bigwedge$''). From now on we will omit the reference to the term binary as it is the only kind of relation formalised in Tarski's language.

\begin{definition}[Elementary theory of relations with closure]
\label{etr*}
The language of \emph{Elementary theory of relations with closure} is defined as a structure $\langle {\sf Sign^{ETR*}}, {\bf Sen}^{\sf ETR*} \rangle$ (denoted ${\sf ETR*}$ for short) such that: 
\begin{itemize}[$-$]
\item ${\sf Sign^{ETR*}} = \langle \mathcal{O}, \mathcal{A} \rangle$ where:
\begin{itemize}[$-$]
\item $\mathcal{O} = \{\{R_i\}_{i \in \mathcal{I}}\ |\ |\mathcal{I}|\ = \omega\}$, and
\item $\mathcal{A} = \left\{ \sigma: \{R_i\}_{i \in \mathcal{I}} \to \{R'_i\}_{i \in \mathcal{I}'} \ {\Big |}\ \sigma: \mathcal{I} \to \mathcal{I}' \mbox{ is a total function.} \right\}$;
\end{itemize}
\item $\mathbf{Form}^{\sf ETR*}: {\sf Sign^{ETR*}} \to \mathbf{Set}$ is defined as follows:
\begin{itemize}[$-$]
\item let $\Sigma = \{R_i\}_{i \in \mathcal{I}} \in |\mathsf{Sign}^{\sf ETR*}|$ then, $\mathit{Term}(\Sigma)$ is the smallest set satisfying:
\begin{itemize}[$-$]
\item $\Sigma \cup \set{\azero, \aunit, \aid} \subseteq \mathit{Term}(\Sigma)$, and
\item if $R, S \in \mathit{Term}(\Sigma)$, then $\set{R \ajoin S, R \ameet S, \acompl{R}, R \acompo S, \aconv{R}, \aclosure{R}} \subseteq \mathit{Term}(\Sigma)$.
\end{itemize} 
If $\Sigma = \{R_i\}_{i \in \mathcal{I}}$ is a signature $|{\sf Sign^{ETR*}}|$ and $\mathcal{X}$ is a countable set of flexible symbols for individuals (i.e., variable symbols) then $\mathit{AtForms}(\Sigma, \mathcal{X})$ is the smallest set satisfying:
\begin{itemize}[$-$]
\item if $x, y \in \mathcal{X}$ and $R \in \mathit{Term}(\Sigma)$, then $x\; R\; y \in \mathit{AtForms}(\Sigma, \mathcal{X})$, and
\item if $R, S \in \mathit{Term}(\Sigma)$, then $R = S \in \mathit{AtForms}(\Sigma, \mathcal{X})$.
\end{itemize}
Then, ${\bf Form}^{\sf ETR*}(\Sigma)$ then defined as the smallest set satisfying:
\begin{itemize}[$-$]
\item $\mathit{AtForm}(\Sigma, \mathcal{X}) \subseteq {\bf Form}^{\sf ETR*}(\Sigma)$, 
\item if $\{\alpha\} \cup S \subseteq {\bf Form}^{\sf ETR*}(\Sigma)$ and $x \in \mathcal{X}$, then $\set{\neg f, \bigvee S, (\exists x)(\alpha)} \subseteq {\bf Form}^{\sf ETR*}(\Sigma)$.\footnote{The rest of the propositional operators, such as denumerable infinite conjunction ($\bigwedge$) and implication ($\Longrightarrow$), are defined in terms of the negation ($\neg$) and denumerable infinite disjunction ($\bigvee$) operators as usual. The universal quantifier ($\forall$) is defined as the dual of the existential quantifier.}
\end{itemize}
\item if $\sigma: \{R_i\}_{i \in \mathcal{I}} \to \{R'_i\}_{i \in \mathcal{I}'} \in ||\mathsf{Sign}^{\sf ETR*}||$ then, $\sigma^\star: \mathit{Term}(\{R_i\}_{i \in \mathcal{I}}) \to \mathit{Term}(\{R'_i\}_{i \in \mathcal{I}'})$ is defined as follows:
\[
\begin{array}{rcl}
\sigma^\star(R_j ) & = & R'_{\sigma(j)} \mbox{, for all $R_j \in \{R_i\}_{i \in \mathcal{I}}$},\\
\sigma^\star(\azero) & = & \azero,\\
\sigma^\star(S \ajoin T) & = & \sigma^\star(S) \ajoin \sigma^\star(T),\\
\sigma^\star(\aunit) & = & \aunit,\\
\sigma^\star(S \ameet T) & = & \sigma^\star(S) \ameet \sigma^\star(T),\\
\sigma^\star(\acompl{S}) & = & \acompl{\sigma^\star(S)},\\
\sigma^\star(S \acompo T) & = & \sigma^\star(S) \acompo \sigma^\star(T),\\
\sigma^\star(\aconv{S}) & = & \aconv{\sigma^\star(S)}, \\
\sigma^\star(\aclosure{S}) & = & \aclosure{\sigma^\star(S)}.
\end{array}
\]
\noindent and if $\alpha \in \mathbf{Form}^{\sf ETR*}(\Sigma)$, we define the function ${\bf Form}^{\sf ETR*}(\sigma): {\bf Form}^{\sf ETR*}(\Sigma) \to {\bf Form}^{\sf ETR*}(\Sigma')$ as follows:
\[
\begin{array}{l}
{\bf Form}^{\sf ETR*}(\sigma)(x\; R\; y) = x\; \sigma^\star(R)\; y,\\
{\bf Form}^{\sf ETR*}(\sigma)(R = S) = \sigma^\star(R) = \sigma^\star(S),\\
{\bf Form}^{\sf ETR*}(\sigma)(\neg \varphi) = \neg {\bf Form}^{\sf ETR*}(\sigma)(\varphi),\\
{\bf Form}^{\sf ETR*}(\sigma)(\bigvee S) = \bigvee \{ {\bf Form}^{\sf ETR*}(\sigma)(\varphi)\ |\ \varphi \in S\},\\
{\bf Form}^{\sf ETR*}(\sigma)((\exists x)\varphi) = (\exists x){\bf Form}^{\sf ETR*}(\sigma)(\varphi).
\end{array}
\]
\end{itemize}
\end{itemize}
As usual, $\mathbf{Sen}^{\sf ETR*}: {\sf Sign^{ETR*}} \to \mathbf{Set}$ is defined as the restriction of $\mathbf{Form}^{\sf ETR*}$ to formulae with no free variables, referred to as a \emph{sentences}.
\end{definition}

The reader should note the closeness between the sentential fragment of elementary theory of relations with closure, presented before, and $L_{{\omega_1}, \omega}$ \cite[Sec.~11.4]{karp64} (i.e., first-order predicate logic with equality with denumerable infinitely long formulae). As we mentioned before, it was Pierce who introduced the class of algebras of binary relations as the target semantics of the language for describing relations and the more modern name of \emph{proper relation algebras} was introduced by Tarski. The next definition extends this class of algebras with reflexive and transitive closure.

\begin{definition}[Proper closure relation algebras]
\label{def:pcra}
A \emph{proper closure relation algebra} on a set $U$ (usually referred to as the base set of the algebra) is a structure $\< U, A, \pjoin, \pmeet, \pcomple, \pzero, \punit{E}, \pcompo, \pconve, \pid, \pclosure{} \>$ in which $A$ is a set of binary relations on $U$, $\pjoin$, $\pmeet$ and $\pcompo$ are binary operations, $\pcomple$ and $\pconve$ are unary operations and $\pzero$, $\punit{E}$ and $\pid$ are distinguished elements of $A$ satisfying:
\begin{itemize}[$-$]
\item $\pzero \in A$ is the empty relation on the set $U$,
\item $A$ is closed under $\pjoin$ (i.e. set union),
\item $\punit{E} \in A$ and $\bigcup_{r \in A} r \subseteq \punit{E}$,
\item $A$ is closed under $\pmeet$ (i.e. set intersection),
\item $A$ is closed under $\pcomple$ (i.e. set complement with respect to $\punit{E}$),
\item $\pid \in A$ is the identity relation on the set $U$.
\item $A$ is closed under $\pcompo$ (i.e. relation composition), defined as 

$x \pcompo y  = \setof{\pair{a}{b} \in U \times U}{ (\exists c)(\pair{a}{c} \in x \land \pair{c}{b} \in y)}$,
\item $A$ is closed under $\pconve$ (i.e. relation transposition), defined as: 

$\pconv{x} = \setof{\pair{a}{b} \in U \times U}{\pair{b}{a} \in x}$,
\item A is closed under $\pclosure{}$ (i.e. relation reflexive and transitive closure), defined as: 

$\pclosure{r} = \bigcup_{0 \leq i} r^{i}$, where $r^{0} = \pid$, and $r^{n+1} = r \pcompo r^{n}$.
\end{itemize}
A proper closure relation algebra is called \emph{full} if $A = U \times U$.
\end{definition}

\begin{proposition}
\label{prop:uniquefull}
Let $U$ a set, then there exists exactly one full proper closure relation algebra of cardinality $2^{U \times U}$ up-to isomorphisms.
\end{proposition}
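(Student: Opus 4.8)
The plan is to separate existence from uniqueness. For existence, take the concrete structure $\mathsf{F}_U$ with base set $U$, carrier $\mathcal{P}(U\times U)$, and all operations interpreted exactly as the set-theoretic operations named in Definition~\ref{def:pcra}: $\cup$, $\cap$, complementation relative to $E = U\times U$, $\emptyset$, $E$, relational composition, transposition, the identity relation $Id$, and $r \mapsto \bigcup_{0\le i} r^{i}$. Verifying that $\mathsf{F}_U$ is a proper closure relation algebra is immediate, since every clause of Definition~\ref{def:pcra} merely asks that the carrier contain one of $\emptyset, Id, E$ or be closed under an operation whose result is again a binary relation on $U$; all of these hold because the carrier is, by construction, the set of \emph{all} binary relations on $U$. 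Furthermore $\bigcup_{r\in \mathcal{P}(U\times U)} r = U\times U = E$, so $\mathsf{F}_U$ is full, and its carrier has cardinality $|\mathcal{P}(U\times U)| = 2^{|U\times U|}$. This settles the ``at least one'' part.

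For uniqueness I would proceed in two steps. First, any two full proper closure relation algebras with equinumerous base sets are isomorphic: given a full algebra $\mathcal{B}$ with base set $V$ and a bijection $h : U \to V$, the map $\hat h : \mathcal{P}(U\times U)\to\mathcal{P}(V\times V)$, $\hat h(r) = \setof{\pair{h(a)}{h(b)}}{\pair{a}{b}\in r}$, is a bijection, and a routine check --- using only bijectivity of $h$ --- shows it preserves $\emptyset$, $Id$, $E$, all Boolean operations, composition and transposition, and therefore also the closure operator, since that operator is definable from composition and countable unions, both of which $\hat h$ respects. Hence $\hat h$ is an isomorphism $\mathsf{F}_U \to \mathcal{B}$. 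Second, I would show that a proper closure relation algebra whose carrier has cardinality $2^{|U\times U|}$ is necessarily full: if $\mathcal{D}$ is such an algebra with top element $E_{\mathcal{D}}$, then its carrier is a Boolean subalgebra of $\mathcal{P}(E_{\mathcal{D}})$ containing $Id$ and closed under composition and transposition, and one has to deduce from the cardinality hypothesis that this subalgebra is all of $\mathcal{P}(E_{\mathcal{D}})$ and that $E_{\mathcal{D}}$ is in bijective correspondence with $U\times U$; combining the two steps then yields the isomorphism with $\mathsf{F}_U$.

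The hard part is this last deduction. Cardinality alone does not suffice: for infinite $X$ there exist \emph{proper} Boolean subalgebras of $\mathcal{P}(X)$ of full cardinality $2^{|X|}$ (for instance the subalgebra generated by an independent family of size $2^{|X|}$, which is atomless, hence proper). So the argument must genuinely exploit the relational structure --- that the carrier contains $Id$ and is closed under composition and transposition, which lets one build, from any relation already present, enough ``point relations'' to separate the elements of $\mathcal{P}(E_{\mathcal{D}})$ --- in order to rule out the proper-subalgebra case. In addition, if the proposition is read as quantifying over proper closure relation algebras on arbitrary base sets, one must also settle the cardinal-arithmetic point that $2^{|V\times V|} = 2^{|U\times U|}$ forces $|V| = |U|$, which is clear under GCH but needs care otherwise. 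I would isolate these two facts as separate lemmas and derive the proposition from them together with the two steps above.
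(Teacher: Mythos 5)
Your existence check and your first uniqueness step are, taken together, exactly the paper's proof: the paper's entire argument consists of the observation that any bijection $h\colon U \to U'$ between base sets extends to an isomorphism of the full proper closure relation algebras over $U$ and $U'$ (with preservation of the closure operator coming for free from preservation of composition and unions, as you note), the existence and cardinality of the full algebra on $U$ being taken as immediate from Def.~\ref{def:pcra}.

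Where you go beyond this --- trying to show that \emph{any} proper closure relation algebra whose carrier has cardinality $2^{|U\times U|}$ must be the full algebra over a base set equinumerous with $U$ --- you have set yourself an obligation that you do not discharge, and that in fact cannot be discharged in ZFC. You already flag the cardinal-arithmetic point; to make it concrete: it is consistent that $2^{\aleph_0}=2^{\aleph_1}$, and in that case the full algebras over base sets of sizes $\aleph_0$ and $\aleph_1$ have carriers of equal cardinality but are not isomorphic, since in a full algebra the atoms below $\aid$ are in bijection with the base set and any isomorphism preserves $\aid$ and atoms. So under the literal strong reading the proposition is not provable, and your proposed second step (fullness plus equinumerosity from carrier cardinality) is a genuine gap that cannot be closed. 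The reading the paper intends --- and the only one it actually uses, namely in Lemma~\ref{lemma:complete}, where the base set of a full algebra is replaced by an equinumerous set of states --- is that for a fixed base-set cardinality $|U|$ the full proper closure relation algebra is unique up to isomorphism, which is precisely what your bijection-extension step establishes. In short: the part of your proposal that matches the paper is correct and complete; the additional program should be dropped (or replaced by a remark that the statement must be read as fixing the base-set cardinality), not pursued.
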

\begin{proof}
It is easy to see that given $U$ and $U'$ sets, such that $|U| = |U'|$, any bijection $h: U \to U'$ can be extended to an isomorphism between the proper closure relation algebras over $U$ and $U'$.
\end{proof}

\begin{definition}[Models over proper closure relation algebras]
\label{def:etr*-models}
Let $U$ be a set, $\Sigma = \{R_i\}_{i \in \mathcal{I}} \in |\mathsf{Sign}^{\sf ETR*}|$ and $\mathcal{A}$ a proper closure relation algebra, then a \emph{model} for $\Sigma$ is a structure $\< \mathcal{A}, \{R^\mathcal{A}_i\}_{i \in \mathcal{I}} \>$.

Let $\{R_i\}_{i \in \mathcal{I}}, \{R'_i\}_{i \in \mathcal{I}'} \in |\mathsf{Sign}^{\sf ETR*}|$ be relational signatures and $\sigma: \{R_i\}_{i \in \mathcal{I}} \to \{R'_i\}_{i \in \mathcal{I}'} \in ||\mathsf{Sign}^{\sf ETR*}||$, $\mathbf{Mod}^{\sf ETR*}\(\sigma^\op\) \(\<\mathcal{A}, \{{R'_i}^\mathcal{A}\}_{i \in \mathcal{I}'}\>\) = \<\mathcal{A}, \left\{{\sigma \(R_i\)}^\mathcal{A}\right\}_{i \in \mathcal{I}}\>$.
\end{definition}

\begin{definition}[Satisfaction relation for the elementary theory of relations with closure]
\label{def:etr*-satisfaction}
Let $U$ be a set, $\mathcal{X}$ is a countable set of flexible symbols for individuals, $\Sigma = \{R_i\}_{i \in \mathcal{I}} \in |\mathsf{Sign}^{\sf ETR*}|$ and $\mathcal{M} = \<\mathcal{A}, \{R^\mathcal{A}_i\}_{i \in \mathcal{I}}\> \in |\mathbf{Mod}^{\mathsf{ETR*}} (\Sigma)|$, then we define the satisfaction relation $\models^{\sf ETR*}_\Sigma \subseteq |\mathbf{Mod}^{\sf ETR*}(\Sigma)| \times \mathbf{Sen}^{\sf ETR*}(\Sigma)$ as:
\begin{center}
$\mathcal{M} \models^{\sf ETR*}_\Sigma \alpha$ if and only if $\<\mathcal{M}, \emptyset\> \models^{\sf ETR*}_\Sigma \alpha$
\end{center}
\noindent with $\models^{\sf ETR*}_\Sigma \subseteq |\mathbf{Mod}^{\sf ETR*}(\Sigma)| \times \mathbf{Form}^{\sf ETR*}(\Sigma)$ defined as follows:
\[
\begin{array}{l}
\<\mathcal{M}, v\> \models^{\sf ETR*}_\Sigma x\ R\ y \text{ iff } \<v (x), v (y)\> \in m_\mathcal{M} (R)\\
\<\mathcal{M}, v\> \models^{\sf ETR*}_\Sigma R = S \text{ iff } \mbox{for all $x, y \in U$},  \<x, y\> \in m_\mathcal{M}(R) \text{ iff } \<x, y\> \in m_\mathcal{M}(S), \\
\<\mathcal{M}, v\> \models^{\sf ETR*}_\Sigma \neg \alpha \text{ iff } \<\mathcal{M}, v\> \models^{\sf ETR*}_\Sigma \alpha \mbox{ does not hold},\\
\<\mathcal{M}, v\> \models^{\sf ETR*}_\Sigma \bigvee S \text{ iff } \mbox{ there exists $\alpha \in S$, } \<\mathcal{M}, v\> \models^{\sf ETR*}_\Sigma \alpha, \\
\<\mathcal{M}, v\> \models^{\sf ETR*}_\Sigma (\exists x)\alpha \text{ iff } \mbox{ there exists $u \in U$,}  \<\mathcal{M}, v|^u_x\> \models^{\sf ETR*}_\Sigma \alpha.
\end{array}
\]
\noindent where $m_\mathcal{M}: \mathit{Term}(\Sigma) \to |\mathcal{A}|$ is defined as follows:
\[
\begin{array}{l}
m_\mathcal{M} (R) = R^\mathcal{A}, \mbox{ for all $R \in \Sigma$.}\\
m_\mathcal{M} (\azero) = \emptyset\\
m_\mathcal{M} (S \ajoin T) = m_\mathcal{M} (S) \cup m_\mathcal{M} (T)\\
m_\mathcal{M} (\aunit) = \punit{E}\\
m_\mathcal{M} (S \ameet T) = m_\mathcal{M} (S) \cap m_\mathcal{M} (T)\\
m_\mathcal{M} (\acompl{S}) = \overline{m_\mathcal{M} (S)}\\
m_\mathcal{M} (\aid) = \mathit{Id}\\
m_\mathcal{M} (S \acompo T) = m_\mathcal{M} (S) \circ m_\mathcal{M} (T)\\
m_\mathcal{M} (\aconv{S}) = {m_\mathcal{M} (S)}^t\\
m_\mathcal{M} (\aclosure{S}) = \pclosure{m_\mathcal{M} (S)}
\end{array}
\]
Given $\Gamma \subseteq \mathbf{Sen}^{\sf ETR*} (\Sigma)$ and $\mathcal{M} \in |\mathbf{Mod}^{\mathsf{ETR*}} (\Sigma)|$, $\mathcal{M} \models^{\sf ETR*}_\Sigma \Gamma$ if and only if for all $\alpha \in \Gamma$, $\mathcal{M} \models^{\sf ETR*}_\Sigma \alpha$. 

Let $\{\alpha\} \cup \Gamma \subseteq \mathbf{Sen}^{\sf ETR*} (\Sigma)$, $\alpha$ is a semantic consequence of $\Gamma$ (denoted as $\Gamma \models^{\sf ETR*}_\Sigma \alpha$), if for all $\mathcal{M} \in |\mathbf{Mod}^{\mathsf{ETR*}} (\Sigma)|$, $\mathcal{M} \models^{\sf ETR*}_\Sigma \Gamma$ implies $\mathcal{M} \models^{\sf ETR*}_\Sigma \alpha$.

Finally, if $\alpha \in \mathbf{Sen}^{\sf ETR*} (\Sigma)$, $\alpha$ is said to be \emph{valid} (denoted as $\models^{\sf ETR*}_\Sigma \alpha$), if for all $\mathcal{M} \in |\mathbf{Mod}^{\mathsf{ETR*}} (\Sigma)|$, $\mathcal{M} \models^{\sf ETR*}_\Sigma \alpha$. 
\end{definition}

\begin{lemma}
\label{lemma:full}
Let $U$ be a set, $\mathcal{X}$ is a countable set of flexible symbols for individuals, $\Sigma = \{R_i\}_{i \in \mathcal{I}} \in |\mathsf{Sign}^{\sf ETR*}|$, $\mathcal{M} = \<\mathcal{A}, \{R^\mathcal{A}_i\}_{i \in \mathcal{I}}\> \in |\mathbf{Mod}^{\mathsf{ETR*}} (\Sigma)|$, with $\mathcal{A}$ a proper closure relation algebra on a set $U$, $\mathcal{M}^\mathit{full} = \<\mathcal{A}^\mathit{full}, \{R^\mathcal{A}_i\}_{i \in \mathcal{I}}\> \in |\mathbf{Mod}^{\mathsf{ETR*}} (\Sigma)|$, with $\mathcal{A}^\mathit{full}$ the full proper closure relation algebra on a set $U$, and $\alpha \in \mathbf{Form}^{\sf ETR*} (\Sigma)$. Then, for all $v: \mathcal{X} \to U$, if $\<\mathcal{M}, v\> \models^{\sf ETR*}_\Sigma \alpha$ then $\<\mathcal{M}^\mathit{full}, v\> \models^{\mathsf{ETR*}}_\Sigma \alpha$.
\end{lemma}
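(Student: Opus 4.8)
The plan is to argue by structural induction on $\alpha$, for an arbitrary valuation $v : \mathcal{X} \to U$, after first pinning down how the two term-interpretation functions $m_\mathcal{M}$ and $m_{\mathcal{M}^\mathit{full}}$ are related. Since $\mathcal{A}$ and $\mathcal{A}^\mathit{full}$ sit over the same base set $U$, the relation symbols are interpreted identically in $\mathcal{M}$ and $\mathcal{M}^\mathit{full}$ (both send $R_i$ to $R_i^\mathcal{A}$), and the operations $\pjoin$, $\pmeet$, $\pcompo$, $\pconve$ and $\pclosure{}$ are all defined pointwise over $U \times U$ independently of the ambient algebra (in particular $\pclosure{r} = \bigcup_{i\ge 0} r^i$ with $r^0 = \pid$ and $r^{n+1} = r\pcompo r^n$ depends only on $r$ and $U$), a routine induction on the structure of a relational term $R \in \mathit{Term}(\Sigma)$ should establish the auxiliary claim that $m_\mathcal{M}(R)$ and $m_{\mathcal{M}^\mathit{full}}(R)$ coincide on every term not involving $\aunit$ or $\acompl{\cdot}$, and that in general $m_\mathcal{M}(R) \subseteq m_{\mathcal{M}^\mathit{full}}(R)$; the only places the two algebras can differ are $m_\mathcal{M}(\aunit) = \punit{E} \subseteq U\times U = m_{\mathcal{M}^\mathit{full}}(\aunit)$ and the complements formed over them. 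I would record this auxiliary claim before turning to the main induction.

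With the auxiliary claim in hand, the main induction on $\alpha$ is mostly mechanical. For the atom $x\; R\; y$: if $\langle v(x),v(y)\rangle \in m_\mathcal{M}(R)$ then, by the auxiliary claim, $\langle v(x),v(y)\rangle \in m_{\mathcal{M}^\mathit{full}}(R)$, so $\langle\mathcal{M}^\mathit{full},v\rangle \models^{\sf ETR*}_\Sigma x\; R\; y$. For the atom $R = S$ one needs $m_\mathcal{M}(R) = m_\mathcal{M}(S)$ to transfer to $\mathcal{M}^\mathit{full}$. For $\bigvee S$ one takes the witnessing disjunct and applies the inductive hypothesis; for $(\exists x)\beta$ one takes the witness $u\in U$ and applies the inductive hypothesis to the updated valuation $v|^u_x$, which is legitimate precisely because $U$ is literally the same set in both models. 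For $\neg\beta$, from $\langle\mathcal{M},v\rangle\not\models^{\sf ETR*}_\Sigma\beta$ we have to conclude $\langle\mathcal{M}^\mathit{full},v\rangle\not\models^{\sf ETR*}_\Sigma\beta$.

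The step I expect to be the main obstacle is exactly the negation case, together with the term-constructor $\acompl{\cdot}$ and the atom $R=S$: the one-sided inclusion $m_\mathcal{M}(R)\subseteq m_{\mathcal{M}^\mathit{full}}(R)$ propagates through the positive connectives but is reversed under complementation, so to push the induction through one needs the auxiliary claim strengthened to the genuine equality $m_\mathcal{M}(R)=m_{\mathcal{M}^\mathit{full}}(R)$ for the relational terms occurring in $\alpha$. Establishing that equality reduces to checking that the unit $\punit{E}$ of $\mathcal{A}$ and $U\times U$ cannot be distinguished by the terms in play, which is where the hypotheses on $\mathcal{A}$ must be used: from $\bigcup_{r\in A} r \subseteq \punit{E}$, together with closure under $\pconve$, $\pcompo$ and $\aid$, one gets that $\punit{E}$ is an equivalence relation on $U$ containing every $R_i^\mathcal{A}$, so all the term-interpretations of interest are confined to $\punit{E}$ and behave identically whether complements are taken relative to $\punit{E}$ or to $U\times U$. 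Once the equality of term-interpretations is secured, both $\neg$ and $R=S$ go through symmetrically, and the argument in fact yields the equivalence $\langle\mathcal{M},v\rangle\models^{\sf ETR*}_\Sigma\alpha \iff \langle\mathcal{M}^\mathit{full},v\rangle\models^{\sf ETR*}_\Sigma\alpha$, of which the stated implication is one half.
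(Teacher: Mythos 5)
Your diagnosis of where the difficulty sits is accurate: the paper's own proof is exactly the two-level induction you sketch first (an induction on relational terms for the atomic case $x\ R\ y$, then an induction on formulas), and it is precisely the $\neg$/$\acompl{\phantom{R}}$/$R=S$ cases that a one-sided inclusion cannot survive, a point the paper glosses over. The genuine gap is in your proposed repair. From $\bigcup_{r \in A} r \subseteq \punit{E}$ and closure under $\pconve$, $\pcompo$ and the presence of $\pid$ you do get that $\punit{E}$ is an equivalence relation containing every $R_i^{\mathcal{A}}$, but this does not make the two term interpretations agree: $m_{\mathcal{M}}(\aunit) = \punit{E}$ while $m_{\mathcal{M}^\mathit{full}}(\aunit) = U \times U$, and complementation relative to $\punit{E}$ genuinely differs from complementation relative to $U \times U$ whenever $\punit{E} \subsetneq U \times U$; confinement of the interpretations to $\punit{E}$ does not neutralise this, because the two complements of the same subset of $\punit{E}$ differ by all of $(U\times U)\setminus\punit{E}$.

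Concretely, take $U=\{a,b\}$, $\punit{E}=\pid$, $A=\{\emptyset,\{(a,a)\},\{(b,b)\},\pid\}$ (a legitimate proper closure relation algebra), $R_0^{\mathcal{A}}=\emptyset$, $\alpha = \neg(x\ \aunit\ y)$ and $v(x)=a$, $v(y)=b$: then $\langle\mathcal{M},v\rangle \models^{\sf ETR*}_\Sigma \alpha$ but $\langle\mathcal{M}^\mathit{full},v\rangle \not\models^{\sf ETR*}_\Sigma \alpha$. So the strengthened equality $m_{\mathcal{M}}(R)=m_{\mathcal{M}^\mathit{full}}(R)$ you need is simply false in this generality (even your weaker inclusion $m_{\mathcal{M}}(R)\subseteq m_{\mathcal{M}^\mathit{full}}(R)$ can fail once complements and compositions interleave), and with it the negation step of the main induction. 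What would close the argument is the additional hypothesis $\punit{E}=U\times U$, i.e.\ that the unit of $\mathcal{A}$ is the full relation on $U$ --- which is evidently the intended reading (it is what Ax.~3 forces in any model of the ${\sf ETR*}$ axioms, and under it the equality of term interpretations, hence the full biconditional you state at the end, follows by a trivial induction) --- but that hypothesis is neither in the statement you were given nor derivable by the equivalence-relation argument you offer, so as written the step ``behave identically whether complements are taken relative to $\punit{E}$ or to $U\times U$'' does not hold.
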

\begin{proof}
The proof is trivial by induction on the structure of the formula $\alpha$, requiring also to prove that if $\<\mathcal{M}, v\> \models^{\sf ETR*}_\Sigma x\ R\ y$ then $\<\mathcal{M}^\mathit{full}, v\> \models^{\mathsf{ETR*}}_\Sigma x\ R\ y$, by induction on the structure of the relational terms.
\end{proof}

\begin{proposition}
Let $U$ be a set, $\mathcal{X}$ is a countable set of flexible symbols for individuals, $\Sigma = \{R_i\}_{i \in \mathcal{I}} \in |\mathsf{Sign}^{\sf ETR*}|$, $\mathcal{M} = \<\mathcal{A}, \{R^\mathcal{A}_i\}_{i \in \mathcal{I}}\> \in |\mathbf{Mod}^{\mathsf{ETR*}} (\Sigma)|$, with $\mathcal{A}$ a proper closure relation algebra on a set $U$, $\mathcal{M}^\mathit{full} = \<\mathcal{A}^\mathit{full}, \{R^\mathcal{A}_i\}_{i \in \mathcal{I}}\> \in |\mathbf{Mod}^{\mathsf{ETR*}} (\Sigma)|$, with $\mathcal{A}^\mathit{full}$ the full proper closure relation algebra on a set $U$, and $\alpha \in \mathbf{Sen}^{\sf ETR*} (\Sigma)$. Then, if $\mathcal{M} \models^{\sf ETR*}_\Sigma \alpha$ then $\mathcal{M}^\mathit{full} \models^{\mathsf{ETR*}}_\Sigma \alpha$.
\end{proposition}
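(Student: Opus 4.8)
The plan is to obtain this statement as the specialisation of Lemma~\ref{lemma:full} from arbitrary formulae to sentences. The content of Lemma~\ref{lemma:full} is already exactly the implication we want, only phrased relative to a fixed total valuation $v \colon \mathcal{X} \to U$, whereas satisfaction of a sentence is defined in Def.~\ref{def:etr*-satisfaction} through the empty partial valuation. So the only real work is to bridge that bookkeeping gap, and for that I would first isolate the routine auxiliary fact that, when $\alpha$ has no free variables, the truth of $\langle \mathcal{M}, v \rangle \models^{\sf ETR*}_\Sigma \alpha$ does not depend on $v$. This I would prove by induction on the structure of $\alpha$, with a sub-induction on relational terms exactly as in the proof of Lemma~\ref{lemma:full}: the interpretation map $m_\mathcal{M}$ never consults $v$, so the atomic clauses $R = S$ and $x\,R\,y$ (the latter absent from a sentence, since $x$ would be free) are handled at once; the Boolean and infinitary clauses merely propagate the hypothesis; and for $(\exists x)\beta$ one uses that $v|^u_x$ and $v'|^u_x$ agree on all the free variables of $\beta$.

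With that fact available the argument is a short chain. If $\mathcal{X} = \emptyset$ the empty valuation is itself total and Lemma~\ref{lemma:full} applies verbatim; if $U = \emptyset$ then on an empty base set the only proper closure relation algebra is the full one, so $\mathcal{M} = \mathcal{M}^\mathit{full}$ and there is nothing to prove. Otherwise fix any total valuation $v_0 \colon \mathcal{X} \to U$. From $\mathcal{M} \models^{\sf ETR*}_\Sigma \alpha$, i.e. $\langle \mathcal{M}, \emptyset \rangle \models^{\sf ETR*}_\Sigma \alpha$, the auxiliary fact gives $\langle \mathcal{M}, v_0 \rangle \models^{\sf ETR*}_\Sigma \alpha$; Lemma~\ref{lemma:full} then yields $\langle \mathcal{M}^\mathit{full}, v_0 \rangle \models^{\sf ETR*}_\Sigma \alpha$; and a second appeal to the auxiliary fact, now applied to $\mathcal{M}^\mathit{full}$, gives $\langle \mathcal{M}^\mathit{full}, \emptyset \rangle \models^{\sf ETR*}_\Sigma \alpha$, that is, $\mathcal{M}^\mathit{full} \models^{\sf ETR*}_\Sigma \alpha$. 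Note that only the one-directional form of Lemma~\ref{lemma:full} is needed, which matches the one-directional form of the Proposition.

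The step I expect to be the crux --- modest though it is --- is precisely this valuation-independence lemma together with the reconciliation of the empty partial valuation of Def.~\ref{def:etr*-satisfaction} with the total valuations demanded by Lemma~\ref{lemma:full}; everything downstream is a two-line composition of implications. If one would rather avoid even that small detour, an alternative is to rerun the inductive argument of Lemma~\ref{lemma:full} directly at the level of sentences, carrying a partial valuation throughout and re-establishing the relational-term inclusion along the way; but since Lemma~\ref{lemma:full} is already in hand, the corollary route above is the shorter one.
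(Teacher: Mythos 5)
Your proposal is correct and follows essentially the same route as the paper, whose proof is simply the observation that the statement follows directly from Lemma~\ref{lemma:full}. The valuation-independence fact and the reconciliation of the empty assignment with total valuations that you spell out are exactly the routine bookkeeping the paper leaves implicit, so your argument just makes the paper's one-line proof explicit.
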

\begin{proof}
The proof follows directly from Lemma~\ref{lemma:full}.
\end{proof}

\begin{corollary}
\label{coro:full}
Let $U$ be a set, $\mathcal{X}$ is a countable set of flexible symbols for individuals, $\Sigma = \{R_i\}_{i \in \mathcal{I}} \in |\mathsf{Sign}^{\sf ETR*}|$, $\mathcal{M} = \<\mathcal{A}, \{R^\mathcal{A}_i\}_{i \in \mathcal{I}}\> \in |\mathbf{Mod}^{\mathsf{ETR*}} (\Sigma)|$, with $\mathcal{A}$ a proper closure relation algebra on a set $U$, $\mathcal{M}^\mathit{full} = \<\mathcal{A}^\mathit{full}, \{R^\mathcal{A}_i\}_{i \in \mathcal{I}}\> \in |\mathbf{Mod}^{\mathsf{ETR*}} (\Sigma)|$, with $\mathcal{A}^\mathit{full}$ the full proper closure relation algebra on a set $U$, and $\Gamma \subseteq \mathbf{Sen}^{\sf ETR*} (\Sigma)$. Then, if $\mathcal{M} \models^{\sf ETR*}_\Sigma \Gamma$ then $\mathcal{M}^\mathit{full} \models^{\mathsf{ETR*}}_\Sigma \Gamma$.
\end{corollary}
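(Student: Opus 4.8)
The plan is to reduce the set-level claim to the sentence-level statement already proved in the preceding Proposition. First I would unfold the hypothesis using the definition of $\models^{\sf ETR*}_\Sigma$ on sets of sentences given in Def.~\ref{def:etr*-satisfaction}: the assumption $\mathcal{M} \models^{\sf ETR*}_\Sigma \Gamma$ means precisely that $\mathcal{M} \models^{\sf ETR*}_\Sigma \alpha$ holds for every $\alpha \in \Gamma$.

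Next I would fix an arbitrary $\alpha \in \Gamma$ and invoke the preceding Proposition. Its hypotheses are met verbatim: $\mathcal{M} = \<\mathcal{A}, \{R^\mathcal{A}_i\}_{i \in \mathcal{I}}\>$ is built over a proper closure relation algebra on $U$, $\mathcal{M}^\mathit{full} = \<\mathcal{A}^\mathit{full}, \{R^\mathcal{A}_i\}_{i \in \mathcal{I}}\>$ is built over the full proper closure relation algebra on the same $U$ with the identical interpretations of the flexible relational symbols, and $\alpha \in \mathbf{Sen}^{\sf ETR*}(\Sigma)$. Hence the Proposition yields $\mathcal{M}^\mathit{full} \models^{\mathsf{ETR*}}_\Sigma \alpha$.

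Finally, since $\alpha \in \Gamma$ was arbitrary, reassembling via the same set-level definition of $\models^{\sf ETR*}_\Sigma$ gives $\mathcal{M}^\mathit{full} \models^{\mathsf{ETR*}}_\Sigma \Gamma$, as required. There is no genuine obstacle: all the mathematical content is carried by the preceding Proposition (and, underneath it, by Lemma~\ref{lemma:full}, which performs the double induction on the structure of formulae and of relational terms); the corollary is simply the universal closure of that statement over the members of $\Gamma$, so the only moves are unfolding and re-folding the two instances of the set-level satisfaction definition.
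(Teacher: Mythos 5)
Your argument is correct and is exactly the route the paper intends: the corollary is stated without proof precisely because it follows by unfolding the set-level satisfaction definition from Def.~\ref{def:etr*-satisfaction}, applying the preceding Proposition to each $\alpha \in \Gamma$, and refolding. Nothing is missing.
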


\begin{definition}[Calculus for the elementary theory of relations with closure]
\label{def:etr*-entail}
\noindent Calculus for first order operators:  let $\mathcal{X}$ be a denumerable set of symbols for individuals and $x, y \in \mathcal{X}$
$$
\begin{array}{ccc}
\mbox{
\AXC{$\Gamma, \alpha \vdash \alpha$}
\DP
}
&
,
&
\mbox{
\AXC{$\Gamma \vdash \alpha$}
\AXC{$\Gamma', \alpha \vdash \beta$}
\LL{[Cut]}
\BinaryInfC{$\Gamma, \Gamma' \vdash \beta$}
\DP
}
\end{array}
$$
$$
\mbox{
\AXC{$\{\Gamma_i \vdash \alpha_i\}_{i \in \mathcal{I}}$}
\LL{[$\bigwedge$-intro]}
\UnaryInfC{$\bigcup_{i \in \mathcal{I}} \Gamma_i \vdash \bigwedge \{\alpha_i\ |\ i \in \mathcal{I}\}$}
\DP
}
$$
$$
\mbox{
\AXC{$\Gamma \vdash \bigwedge S$}
\LL{[$\bigwedge$-elim]}\RL{$[\alpha \in S]$}
\UnaryInfC{$\Gamma \vdash \alpha$}
\DP
}
$$
$$
\mbox{
\AXC{$\Gamma \vdash \alpha$}
\LL{[$\bigvee$-intro]}\RL{[$\alpha \in S$]}
\UnaryInfC{$\Gamma \vdash \bigvee S$}
\DP
}
$$
$$
\mbox{
\AXC{$\Gamma \vdash \bigvee \{\alpha_i\ |\ i \in \mathcal{I}\}$}
\AXC{$\{\Gamma_i, \alpha_i \vdash \varphi\}_{i \in \mathcal{I}}$}
\LL{[$\bigvee$-elim]}
\BinaryInfC{$\Gamma \cup \bigcup_{i \in \mathcal{I}} \Gamma_i \vdash \varphi$}
\DP
}
$$
$$
\mbox{
\AXC{$\Gamma, \varphi \vdash \alpha$}
\AXC{$\Gamma', \varphi \vdash \neg \alpha$}
\LL{[$\neg$-intro]}
\BinaryInfC{$\Gamma, \Gamma' \vdash \neg \varphi$}
\DP
}
$$
$$
\mbox{
\AXC{$\Gamma \vdash \neg\neg\alpha$}
\LL{[$\neg$-elim]}
\UnaryInfC{$\Gamma \vdash \alpha$}
\DP
}
$$
$$
\mbox{
\AXC{$\Gamma \vdash \alpha$}
\LL{[$\forall$-intro]}\RL{$x \not\in \mathit{FreeVar}(\Gamma)$}
\UnaryInfC{$\Gamma \vdash (\forall x)\alpha$}
\DP
}
$$
$$
\mbox{
\AXC{$\Gamma \vdash (\forall x)\alpha$}
\LL{[$\forall$-elim]}\RL{$x \not\in \mathit{FreeVar}(t)$}
\UnaryInfC{$\Gamma \vdash \alpha|^t_x$}
\DP
}
$$
$$
\mbox{
\AXC{$\Gamma \vdash \alpha|^t_x$}
\LL{[$\exists$-intro]}\RL{$\mathit{Var}(t) \subseteq \mathit{FreeVar}(\alpha)/\{x\}$}
\UnaryInfC{$\Gamma \vdash (\exists x)\alpha$}
\DP
}
$$
$$
\mbox{
\AXC{$\Gamma \vdash (\exists x)\alpha$}
\AXC{$\Gamma', \alpha \vdash \beta$}
\LL{[$\exists$-elim]}\RL{$x \not\in \mathit{FreeVar}(\beta)$}
\BinaryInfC{$\Gamma, \Gamma' \vdash \beta$}
\DP
}
$$
\noindent Axioms for the relational operators: let $\mathcal{R}$ be a denumerable set of relational symbols and $R, S \in \mathcal{R}$
$$
\begin{array}{lcl}
\mbox{Ax.~1} & \quad & (\forall x,y)(\neg (x\ \azero\ y))\\ 
\mbox{Ax.~2} &\quad & (\forall x,y)(x\ R \ajoin S\ y \iff x\ R\ y \lor x\ S\ y)\\ 
\mbox{Ax.~3} & \quad & (\forall x,y)(x\ \aunit\ y)\\ 
\mbox{Ax.~4} &\quad & (\forall x,y)(x\ R \ameet S\ y \iff x\ R\ y \land x\ S\ y)\\ 
\mbox{Ax.~5} &\quad & (\forall x,y)(x\ \acompl{R}\ y \iff \neg (x\ R\ y))\\ 
\mbox{Ax.~6} & \quad & (\forall x)(x\ \aid\ x)\\ 
\mbox{Ax.~6'} & \quad & (\forall x, y, z)(x\ R\ y \land y\ \aid\ z \implies x\ R\ z)\\ 
\mbox{Ax.~7} &\quad & (\forall x,y)(x\ R \acompo S\ y \iff (\exists z)(x\ R\ z \land z\ S\ y))\\ 
\mbox{Ax.~8} &\quad & (\forall x,y)(x\ \aconv{R}\ y \iff y\ R\ x)\\ 
\mbox{Ax.~9} &\quad & (\forall x,y)(x\ \aclosure{R}\ y \iff x\ \bigvee \{R^{;i}\ |\ i \in \NAT\}\ y)\\
                       &          & \qquad\mbox{, where $R^{;0} = \aid$, and $R^{;n+1} = R \acompo R^{;n}$}\\ 
\mbox{Ax.~10} &\quad & R = S \iff (\forall x,y)(x\ R\ y \iff x\ S\ y)
\end{array}
$$
\end{definition}
\begin{theorem}[Elementary theory of relations with closure]
\label{etr*-is-logic}
Let ${\sf ETR*}$ be the language $\<\mathsf{Sign}^{\sf ETR*}, \mathbf{Sen}^{\sf ETR*}\>$ from Def.~\ref{etr*}, $\mathbf{Mod}^{\sf ETR*}: {\mathsf{Sign}^{\sf ETR*}}^\op \to \mathsf{Cat}$ be the functor from Def.~\ref{def:etr*-models}, $\{\vDash^{\sf ETR*}_{\Sigma}\}_{\Sigma \in |\mathsf{Sign}^{\sf ETR*}|}$ the family of consequence relations from Def.~\ref{def:etr*-satisfaction}, and $\{\vdash^{\sf ETR*}_{\Sigma}\}_{\Sigma \in |\mathsf{Sign}^{\sf ETR*}|}$ the family of entailment relations from Def.~\ref{def:etr*-entail}, then:
\begin{enumerate}
\item \label{thm:institution2} $\< \mathsf{Sign}^{\sf ETR*}, \mathbf{Sen}^{\sf ETR*}, \mathbf{Mod}^{\sf ETR*}, \{\models^{\sf ETR*}_{\Sigma}\}_{\Sigma \in |\mathsf{Sign}^{\sf ETR*}|}\>$ is an institution,
\item \label{thm:entailment2} $\< \mathsf{Sign}^{\sf ETR*}, \mathbf{Sen}^{\sf ETR*}, \{\vdash^{\sf ETR*}_{\Sigma}\}_{\Sigma \in |\mathsf{Sign}^{\sf ETR*}|}\>$ is an entailment system, and
\item \label{thm:sound+complete2} $\< \mathsf{Sign}^{\sf ETR*}, \mathbf{Sen}^{\sf ETR*}, \mathbf{Mod}^{\sf ETR*}, \{\vdash^{\sf ETR*}_{\Sigma}\}_{\Sigma \in |\mathsf{Sign}^{\sf ETR*}|}, \right.$\\
$\left. \qquad\{\models^{\sf ETR*}_{\Sigma}\}_{\Sigma \in |\mathsf{Sign}^{\sf ETR*}|}\>$ is a sound and complete logic.
\end{enumerate}
\end{theorem}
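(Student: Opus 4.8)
The three clauses are of quite different natures: clauses~(1) and~(2) are verifications of the defining conditions of Definitions~\ref{institution} and~\ref{entailment-system}, whereas the substance of clause~(3) is soundness and completeness, and completeness is where the real work lies. The plan is to dispatch (1) and (2) by routine inductions, prove soundness by induction on derivations, and obtain completeness by transferring Karp's completeness theorem for $L_{\omega_1,\omega}$ along a translation that eliminates the relational operators.

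For clause~(2): reflexivity is the axiom scheme $\Gamma,\alpha\vdash\alpha$ with $\Gamma=\emptyset$; monotonicity follows from one application of $[\mathrm{Cut}]$ to $\Gamma\vdash\phi$ against the axiom $(\Gamma'\setminus\Gamma),\phi\vdash\phi$; the generalised transitivity of condition~\ref{cond-3} follows by the standard derivation-splicing argument, replacing in a derivation of $\{\phi_i\}_{i\in\mathcal I}\vdash\phi$ each leaf $\Delta,\phi_i\vdash\phi_i$ by a derivation of $\Gamma\cup\Delta\vdash\phi_i$ obtained from $\Gamma\vdash\phi_i$ by monotonicity and then propagating down the well-founded, possibly infinitely branching, derivation tree; and $\vdash$-translation is an induction on derivations, since every rule is schematic in the relation symbols, every instance of Ax.1--Ax.10 is carried by $\mathbf{Sen}^{\sf ETR*}(\sigma)$ to an instance of the same axiom, and the individual-variable side conditions are left untouched by translation. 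For clause~(1): an easy induction on terms and formulae gives $\mathbf{Form}^{\sf ETR*}(\mathit{id})=\mathit{id}$ and $\mathbf{Form}^{\sf ETR*}(\sigma\circ\tau)=\mathbf{Form}^{\sf ETR*}(\sigma)\circ\mathbf{Form}^{\sf ETR*}(\tau)$ (the translation merely relabels relation symbols, so having no free variables is preserved), so $\mathbf{Sen}^{\sf ETR*}$ is a functor; $\mathbf{Mod}^{\sf ETR*}$ is a functor on $(\mathsf{Sign}^{\sf ETR*})^{\op}$ since reducts compose; and the $\models$-invariance condition is proved in two layers, first by induction on a relational term $T$ that $m_{\mathbf{Mod}^{\sf ETR*}(\sigma^{\op})(\mathcal M')}(T)=m_{\mathcal M'}(\sigma^\star(T))$, and then by induction on a formula $\alpha$ that $\langle\mathbf{Mod}^{\sf ETR*}(\sigma^{\op})(\mathcal M'),v\rangle\models^{\sf ETR*}_\Sigma\alpha$ iff $\langle\mathcal M',v\rangle\models^{\sf ETR*}_{\Sigma'}\mathbf{Form}^{\sf ETR*}(\sigma)(\alpha)$, which restricts to sentences.

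For clause~(3), soundness is proved by induction on derivations. The first-order rules preserve ``every model whose assignment satisfies the left-hand sides of all premises also satisfies the conclusion'' exactly as for $L_{\omega_1,\omega}$; for the relational axioms one checks each against Definitions~\ref{def:etr*-models}--\ref{def:etr*-satisfaction}: Ax.1, Ax.2, Ax.4, Ax.6, Ax.6', Ax.7, Ax.8 and Ax.10 hold because $m_{\mathcal M}$ is defined by the matching set-theoretic operations, Ax.9 is precisely the defining identity $\pclosure{m_{\mathcal M}(R)}=\bigcup_{i}m_{\mathcal M}(R)^{i}$, and Ax.3 and Ax.5 use that $\punit E$ is the full relation $U\times U$, so that complement is genuine complement; by Corollary~\ref{coro:full} one may anyway restrict to full proper closure relation algebras throughout.

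Completeness is the main obstacle. Fix $\Sigma=\{R_i\}_{i\in\mathcal I}$, let $\Sigma^\circ$ be the $L_{\omega_1,\omega}$-vocabulary with one binary symbol per $R_i$ plus equality, and define for each relational term $T$ and variables $x,y$ an $L_{\omega_1,\omega}$-formula $T^\circ(x,y)$ with free variables among $\{x,y\}$ by reading off the right-hand sides of Ax.1--Ax.8 (so $\azero^\circ$ is $\neg(x=x)$, $\aunit^\circ$ is $(x=x)$, ${\aid}^\circ$ is $(x=y)$, $(R\acompo S)^\circ(x,y)$ is $(\exists z)(R^\circ(x,z)\wedge S^\circ(z,y))$ with $z$ fresh, converse swaps the two variables, and so on), and for closure $\aclosure{R}$ becomes $\bigvee_{n\in\NAT}(R^{;n})^\circ(x,y)$ following Ax.9; then extend $(\cdot)^\circ$ to $\mathbf{Sen}^{\sf ETR*}(\Sigma)$ by $(x\,T\,y)^\circ=T^\circ(x,y)$, $(R=S)^\circ=(\forall x,y)(R^\circ(x,y)\leftrightarrow S^\circ(x,y))$ and homomorphically on $\neg,\bigvee,\exists$. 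By Proposition~\ref{prop:uniquefull} the full proper closure relation algebra over a set $U$ is unique, so full ${\sf ETR*}$-models over $U$ correspond bijectively to $L_{\omega_1,\omega}$-structures $(U,\{R_i^{\mathcal A}\})$, and an induction on terms (using fullness so that complement is complement and $\aunit$ is interpreted as $U\times U$) followed by an induction on formulae shows that for full $\mathcal M$, $\langle\mathcal M,v\rangle\models^{\sf ETR*}_\Sigma\alpha$ iff $(U,\{R_i^{\mathcal A}\})\models_{L_{\omega_1,\omega}}\alpha^\circ[v]$; together with Corollary~\ref{coro:full}, which lets one test (un)satisfiability over full algebras only, this yields
\[
\Gamma\models^{\sf ETR*}_\Sigma\phi\quad\text{iff}\quad\{\,\gamma^\circ : \gamma\in\Gamma\,\}\models_{L_{\omega_1,\omega}}\phi^\circ .
\]
Karp's completeness theorem~\cite{karp64} then supplies an $L_{\omega_1,\omega}$-derivation of $\phi^\circ$ from $\{\gamma^\circ : \gamma\in\Gamma\}$, which is transferred back to the ${\sf ETR*}$-calculus: reading each atom $P_i(s,t)$ as $s\,R_i\,t$ and each $s=t$ as $s\,\aid\,t$, every $L_{\omega_1,\omega}$ rule is an instance of, or derivable from, the first-order rules of Definition~\ref{def:etr*-entail} --- with reflexivity and full substitutivity of $\aid$ recovered from Ax.6, Ax.6' and Ax.8 --- while Ax.1--Ax.10 make each relational term provably equivalent to the ${\sf ETR*}$-reading of its own translation, so that the reading of $\alpha^\circ$ is ${\sf ETR*}$-interderivable with $\alpha$; hence $\Gamma\vdash^{\sf ETR*}_\Sigma\phi$. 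I expect the genuine work to sit in three places: making the closure clause well-defined in $L_{\omega_1,\omega}$ (each $(R^{;n})^\circ$ needs its own block of fresh bound variables, so the countable disjunction uses countably many variables, which $L_{\omega_1,\omega}$ permits); showing that reflexivity and substitution of $\aid$ into arbitrary formulae are ${\sf ETR*}$-derivable from Ax.6, Ax.6' and Ax.8 (the last used to move between argument positions); and checking that the back-transfer of an $L_{\omega_1,\omega}$-derivation never leaves the fragment of ${\sf ETR*}$-readable formulae nor appeals to a rule beyond Definition~\ref{def:etr*-entail}. Soundness, established above, gives the converse implication, completing clause~\ref{thm:sound+complete2}.
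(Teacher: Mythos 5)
Your proposal follows essentially the same route as the paper: parts (1) and (2) are dispatched as routine verifications, soundness by direct inspection, and completeness by using Axioms~1--10 to reduce relational formulae to $L_{\omega_1,\omega}$ and invoking Karp's completeness theorem. You simply work out in explicit detail (the translation $(\cdot)^\circ$, the correspondence between full algebras and $L_{\omega_1,\omega}$-structures via Proposition~\ref{prop:uniquefull} and Corollary~\ref{coro:full}, and the back-transfer of derivations) what the paper leaves as a brief sketch.
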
  
\begin{proof}
The proofs of Parts~\ref{thm:institution2}~and~\ref{thm:entailment2} follow directly from Defs.~\ref{etr*},~\ref{def:etr*-models},~\ref{def:etr*-satisfaction}~and~\ref{def:etr*-entail} and are analogous to the many examples of definitions in the literature of institutions and entailment systems. 

A straightforward proof for Part~\ref{thm:sound+complete2} follows by observing that $L_{{\omega_1},\omega}$ is strong complete (a direct consequence of \cite[\S 11.2, Deduction Theorem 11.2.4 and \S 11.4, Completness Theorem 11.4.1]{karp64}) and that every formula involving complex relational operators can be reduced using Axioms~$1$~to~$10$ to a formula in $L_{{\omega_1},\omega}$.
\end{proof}


The previous result proves the strong completeness of the entailment relations $\vdash^{\sf ETR*}_{\Sigma}$ for the class of models determined by the satisfaction relation $\models^{\sf ETR*}_{\Sigma}$, for all $\Sigma \in |\mathsf{Sign}^{\sf ETR*}|$. An important note on this result that will be of use in the forthcoming sections is that $L_{{\omega_1},\omega}$-theories are of cardinality $\omega_1$ as it not only involves formulae of the first-order predicate logic, but also disjunctions / conjunctions over denumerable infinite sets of formulae \cite[Sec.~1.1]{karp64}/\cite[\S1]{engeler:ma-151}. In \cite{karp64}, $L_{{\omega_1},\omega}$ is an example of the definition of $(\alpha, \beta, o, \pi)$-languages \cite[Sec.~8.1]{karp64}, referred to as ``Systems of formulas of infinite length''. In $(\alpha, \beta, o, \pi)$-languages, with $\alpha$ and $o$ regular infinite cardinals, and $\beta$ and $\pi$ infinite cardinals\footnote{A cardinal $\kappa$ is said to be \emph{regular} if and only if every unbounded subset $C \subseteq \kappa$ has cardinality $\kappa$ \cite[Chap.~9, pp.~254]{enderton77}; in other words, it is equal to its own cofinality \cite[Chap.~9, pp.~257]{enderton77}.}, $\alpha$ is the upper bound\footnote{Upper bound, in this context, means the least cardinal, strictly bigger.} for the cardinality of conjunctions / disjunctions, $\beta$ for the amount of quantified variables, $o$ for the length of the terms and $\pi$ for the arity of predicates. The regularity of a cardinal, in the case of $\alpha$ being $\omega_1$, plays a central role in guarantying that length of the conjunctions / disjunctions is less than $\omega_1$ (i.e., at most $\omega$).

\begin{theorem}[Skolem-L{\"{o}}wenheim Theorem for Sets of Infinitary Sentences, \cite{karp64}, Thm.~10.3.8]
Let $\Gamma$ be a set of $(\alpha, \beta, o, \pi)$-sentences with $\alpha \leq |\Gamma|$ and $|\Gamma|\ \mathit{exp}\ \epsilon = |\Gamma|$ for all $\epsilon < \beta$. Then $\Gamma$ is satisfiable if and only if $\Gamma$ is satisfiable in a model of power at most $|\Gamma|$.
\end{theorem}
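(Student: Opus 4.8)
The statement is quoted verbatim from \cite[Thm.~10.3.8]{karp64}, so one option is simply to defer to that reference; what follows is the underlying argument, which is the classical downward Skolem--L\"owenheim construction carried out in the infinitary $(\alpha,\beta,o,\pi)$ setting. The reverse implication is trivial, so assume $\Gamma$ is satisfiable and fix $\mathcal{M}$ with $\mathcal{M} \models \Gamma$. The plan is to extract a substructure $\mathcal{N} \subseteq \mathcal{M}$ with $|N| \le |\Gamma|$ that is elementary for the fragment $\Phi$ consisting of all subformulae of members of $\Gamma$ (and their free-variable instances), so that in particular $\mathcal{N} \models \Gamma$.

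First I would bound $|\Phi|$: counting the tree structure of an $(\alpha,\beta,o,\pi)$-sentence, using that conjunctions and disjunctions have size $<\alpha\le|\Gamma|$, that quantifier blocks bind $<\beta$ variables, that terms have length $<o$ and predicates arity $<\pi$, one obtains $|\Phi| \le |\Gamma|$ (here $\alpha\le|\Gamma|$ is first used, and the regularity of $\alpha$ keeps $<\alpha$-ary conjunctions over $\le|\Gamma|$-many formulae at cardinality $|\Gamma|$). Next, for each $(\exists\bar{x})\varphi(\bar{x},\bar{y})\in\Phi$, with $\bar{x}$ a block of $<\beta$ variables, fix (by choice) a Skolem function $f_\varphi$ on $M^{|\bar{y}|}$ selecting a witnessing tuple when one exists; there are at most $|\Gamma|$ such functions. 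Then build an increasing chain of subsets of $M$: let $X_0$ be the set of interpretations of all closed terms (at most $|\Gamma|$ many, by the bound on $\Phi$ and the term-length bound $o$), let $X_{\xi+1}$ be the closure of $X_\xi$ under all the $f_\varphi$ applied to $<\beta$-tuples from $X_\xi$, and take unions at limits. Since $|\Gamma|^{\epsilon}=|\Gamma|$ for every $\epsilon<\beta$ --- precisely the hypothesis $|\Gamma|\ \mathit{exp}\ \epsilon = |\Gamma|$ --- each successor step multiplies the cardinality by at most $|\Gamma|^{<\beta}\cdot|\Gamma|=|\Gamma|$, so running the construction for enough stages that every $<\beta$-tuple of the final set already lies in some $X_\xi$ (and hence closure is achieved) and setting $N=\bigcup_\xi X_\xi$ gives $|N|\le|\Gamma|$.

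Finally I would prove by induction on $\varphi\in\Phi$ the Tarski--Vaught equivalence: for every assignment $v$ of the free variables of $\varphi$ into $N$, $\mathcal{N}\models\varphi[v]$ iff $\mathcal{M}\models\varphi[v]$. The atomic case holds because $N$ is closed under term interpretations and is a substructure; negation and the $<\alpha$-ary conjunctions and disjunctions follow immediately from the induction hypothesis; and the existential case is handled exactly by the Skolem closure, since $f_\varphi(v(\bar{y}))\in N$ furnishes a witness whenever one exists in $\mathcal{M}$. Specialising $\varphi$ to the sentences of $\Gamma$ yields $\mathcal{N}\models\Gamma$. The main obstacle is the infinitary bookkeeping: keeping $|\Phi|$ and the number of Skolem functions at $|\Gamma|$ needs the regularity of $\alpha$, closing under $<\beta$-ary Skolem functions without inflating cardinality needs the exponentiation hypothesis on $\beta$, and one must iterate the closure long enough that every $<\beta$-tuple of $N$ is captured at some stage --- a subtlety with no analogue in the finitary proof.
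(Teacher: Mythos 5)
The paper does not prove this statement at all: it is imported verbatim from Karp's monograph and the citation \cite{karp64} (Thm.~10.3.8) is the entire justification, with only the $L_{\omega_1,\omega}$ instance (Coro.~\ref{coro-skolem-lowenheim}) actually used later. So your primary option --- defer to the reference --- is exactly what the paper does, and your sketch of the downward Skolem--L\"owenheim hull is the standard argument behind Karp's theorem; it is the right shape, and you correctly isolate where the hypotheses $\alpha \leq |\Gamma|$ and $|\Gamma|\ \mathit{exp}\ \epsilon = |\Gamma|$ ($\epsilon < \beta$) enter.

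Two points in your sketch need repair before it is a proof. First, your closure stages $X_{\xi+1}$ close only under the Skolem functions $f_\varphi$, yet the atomic case of your Tarski--Vaught induction asserts that $N$ ``is closed under term interpretations and is a substructure''; as constructed it need not be, since $X_0$ handles only closed terms and nothing forces $\overline{f}(\bar{a}) \in N$ for $\bar{a}$ a tuple of elements produced at later stages. You must also close each stage under the interpretations of the function symbols occurring in $\Gamma$ (equivalently, under evaluation of all terms of $\Phi$ with parameters from the current stage), and check that this extra closure respects the cardinality bound in Karp's $(\alpha,\beta,o,\pi)$ setting. Second, the step you flag but leave open --- iterating ``long enough'' that every $<\beta$-tuple of $N$ lies in some stage --- needs an explicit choice of iteration length: take a regular cardinal $\gamma$ with $\beta \leq \gamma \leq |\Gamma|$ and iterate $\gamma$ times, so that regularity of $\gamma$ captures every $<\beta$-tuple at some stage while $|N| \leq |\Gamma|\cdot\gamma = |\Gamma|$. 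Such a $\gamma$ exists: the hypothesis $|\Gamma|\ \mathit{exp}\ \epsilon = |\Gamma|$ for all $\epsilon < \beta$ forces $\beta \leq |\Gamma|$ (else $\epsilon = |\Gamma|$ gives $2^{|\Gamma|} = |\Gamma|$), and if $\beta < |\Gamma|$ take $\gamma = \beta^{+}$, while if $\beta = |\Gamma|$ the same hypothesis together with K\"onig's theorem rules out $|\Gamma|$ singular, so $\gamma = |\Gamma|$ works. With these two patches your argument goes through; for the only case the paper needs, $(\omega_1,\omega,\omega,\omega)$, both subtleties trivialise ($\gamma = \omega$, finitary functions).
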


In the previous theorem, given cardinals $\alpha$ and $\beta$, $\alpha\ \mathit{exp}\ \beta$ refers to cardinal exponentiation, equivalent to $|\alpha^\beta|$. Then, applying the theorem to $L_{{\omega_1}, \omega}$, we obtain the following corollary, guarantying the existence of models of cardinality $\omega_1$ for theory presentations in $\mathsf{Th}^{\mathsf{ETR*}}$.

\begin{corollary}
\label{coro-skolem-lowenheim}
Let $\Gamma$ be a set of $(\omega_1, \omega, \omega, \omega)$-sentences (i.e. $\Gamma \subseteq L_{{\omega_1},\omega}$). Then $\Gamma$ is satisfiable if and only if $\Gamma$ is satisfiable in a model of power at most $\omega_1$.
\end{corollary}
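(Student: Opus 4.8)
The plan is to derive the corollary directly from the preceding Skolem--L\"owenheim theorem (Karp, Thm.~10.3.8) by specialising its four parameters to $\alpha = \omega_1$, $\beta = \omega$, $o = \omega$, $\pi = \omega$ --- which are exactly the parameters of $L_{\omega_1,\omega}$, so that the $(\omega_1,\omega,\omega,\omega)$-sentences are precisely the $L_{\omega_1,\omega}$-sentences. The right-to-left implication is trivial: a model of power at most $\omega_1$ satisfying $\Gamma$ already witnesses that $\Gamma$ is satisfiable. Hence all the content sits in the forward direction, which the cited theorem supplies once its two side conditions have been verified at these parameters.

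First I would dispatch the cardinal-arithmetic side condition: $|\Gamma|\ \mathit{exp}\ \epsilon = |\Gamma|$ for $\epsilon < \beta = \omega$ is just closure of $|\Gamma|$ under finite exponentiation, which holds for every infinite cardinal. That leaves only the hypothesis $\alpha \le |\Gamma|$, i.e.\ $\omega_1 \le |\Gamma|$, as something that has to be arranged. Since (as recorded in the discussion preceding the theorem) an $L_{\omega_1,\omega}$-theory has cardinality $\le \omega_1$, the only two cases to treat are $|\Gamma| = \omega_1$ and $|\Gamma| < \omega_1$.

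If $|\Gamma| = \omega_1$, the cited theorem applies to $\Gamma$ as it stands and delivers a model of power at most $|\Gamma| = \omega_1$. If $|\Gamma| < \omega_1$, I would pad: set $\Gamma' = \Gamma \cup \Theta$ with $\Theta = \setof{\psi \lor \neg\psi}{\psi \in \Psi}$, where $\Psi$ is any family of $\omega_1$ syntactically distinct $L_{\omega_1,\omega}$-sentences; such a $\Psi$ exists because $L_{\omega_1,\omega}$ contains $2^{\aleph_0} \ge \aleph_1$ syntactically distinct sentences (for instance the countable conjunctions over a fixed countably infinite set of sentences). Every member of $\Theta$ is valid, so $\Gamma$ and $\Gamma'$ have exactly the same models, while $|\Gamma'| = \omega_1$; applying the theorem to $\Gamma'$ then yields a model of $\Gamma'$, hence of $\Gamma$, of power at most $\omega_1$. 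Combining the two cases gives the corollary.

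The one point needing care --- the ``main obstacle'', such as it is --- is precisely this reduction to the hypothesis $\alpha \le |\Gamma|$: one must check that enlarging $\Gamma$ by valid sentences preserves the model class \emph{on the nose} (immediate from the definition of validity) and that enough syntactically distinct $L_{\omega_1,\omega}$-sentences are available to build $\Theta$ of size $\omega_1$. Beyond that book-keeping there is nothing to do --- no induction on the structure of formulas and no model construction, since all of that is internal to the cited theorem.
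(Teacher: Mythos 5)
Your proposal is correct and takes essentially the same route as the paper, which derives the corollary simply by specialising Karp's Thm.~10.3.8 to the parameters $(\omega_1,\omega,\omega,\omega)$, relying on its preceding remark that $L_{\omega_1,\omega}$-theories have cardinality $\omega_1$ so that the hypothesis $\alpha \le |\Gamma|$ causes no trouble. Your padding of $\Gamma$ with $\omega_1$ tautologies is just a harmless piece of book-keeping that makes explicit how to satisfy that hypothesis when $|\Gamma| < \omega_1$, a detail the paper leaves implicit.
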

\section{Relational semantics from a proof theory standpoint}
\label{relmodels}
In this section we present the main contribution of this article by defining a framework in which it is possible to provide a formal proof theoretical characterisation of classes of relational models like those used to provide semantics to many modal \cite{kripke:apf-16}, hybrid \cite{areces:phdthesis} and deontic logics \cite{aqvist:hpl01}, among others. Such relational models are generally referred to as Kripke structures \cite{kripke:ttm65}. Besides their many differences, derived from the semantic needs of the syntactic features of each of these logical frameworks, their underlying structures can be thought of as \emph{Labelled transition systems} where locations (usually called states or worlds) are considered to be places where a formula can be assigned a truth value, and transitions respond to the need of interpreting modal operators as specific traverses of the relational structure connecting locations.

\begin{definition}[Labeled transition system]
\label{def:lts}
A \emph{labeled transition system} is a structure $\<S, \{R_i\}_{i \in \mathcal{I}}\>$ such that for all $i \in \mathcal{I}$, $R_i \subseteq S \times S$.
\end{definition}

The generality of the previous definition ensures that subclasses satisfying specific properties can be obtained by formalising them within a logical language of choice. Consider, as an example, the class of Kripke structures for giving semantics to linear temporal logics \cite{pnueli:tcs-13_1}. Formulae are defined over a set of propositional variables $\{p_i\}_{i \in \mathcal{I}}$ and the relational structure has the shape $\<S, S_0, T, l\>$ with $S$ being a set of states, $S_0 \subseteq S$ a set of initial states, $T \subseteq S \times S$ a binary relation called the accessibility relation, and $l: S \to 2^{\{p_i\}_{i \in \mathcal{I}}}$ a labelling function assigning a subset of the propositions to each state in $S$. Notice that Def.~\ref{def:lts} provides no support for tagging a state as initial, or satisfying a subset of the propositions so such additional properties will have to be built on top of the notion of relation surpassing its role solely as accessibility relation. In general, a relational structure is extended to a \emph{relational model} by adding an explicit reference to the particular state (or structure of states) over which the formulae are supposed to be evaluated. In the case of linear temporal logic, given a Kripke structure $\mathcal{K} = \<S, S_0, T, l\>$, the satisfaction of a formula is defined over infinite sequences $[s_1, s_2, \ldots]$ such that:
\begin{inparaenum}[a.]
\item for all $i \in \NAT$, $s_i \in S$,
\item for all $i \in \NAT$, $1 < i$ implies $\<s_i, s_{i+1}\> \in T$, and
\item $s_1 \in S_0$. 
\end{inparaenum}

The rationale behind our contribution is that the classical fragment of the logic will be assigned semantics as an interpretation of the rigid non-logical symbols, and an interpretation of the flexible non-logical symbols, both formalised as equational theory presentation, while the operators determining the relational behaviour will be characterised by the models of a theory presentation in the elementary theory of relations with closure.

Figure~\ref{figure:diagram} depicts a graphical view of the framework we propose, as a guide to be used by the reader throughout this section.
\begin{figure*}[!t]%
\centering
\includegraphics[width=0.90\textwidth]{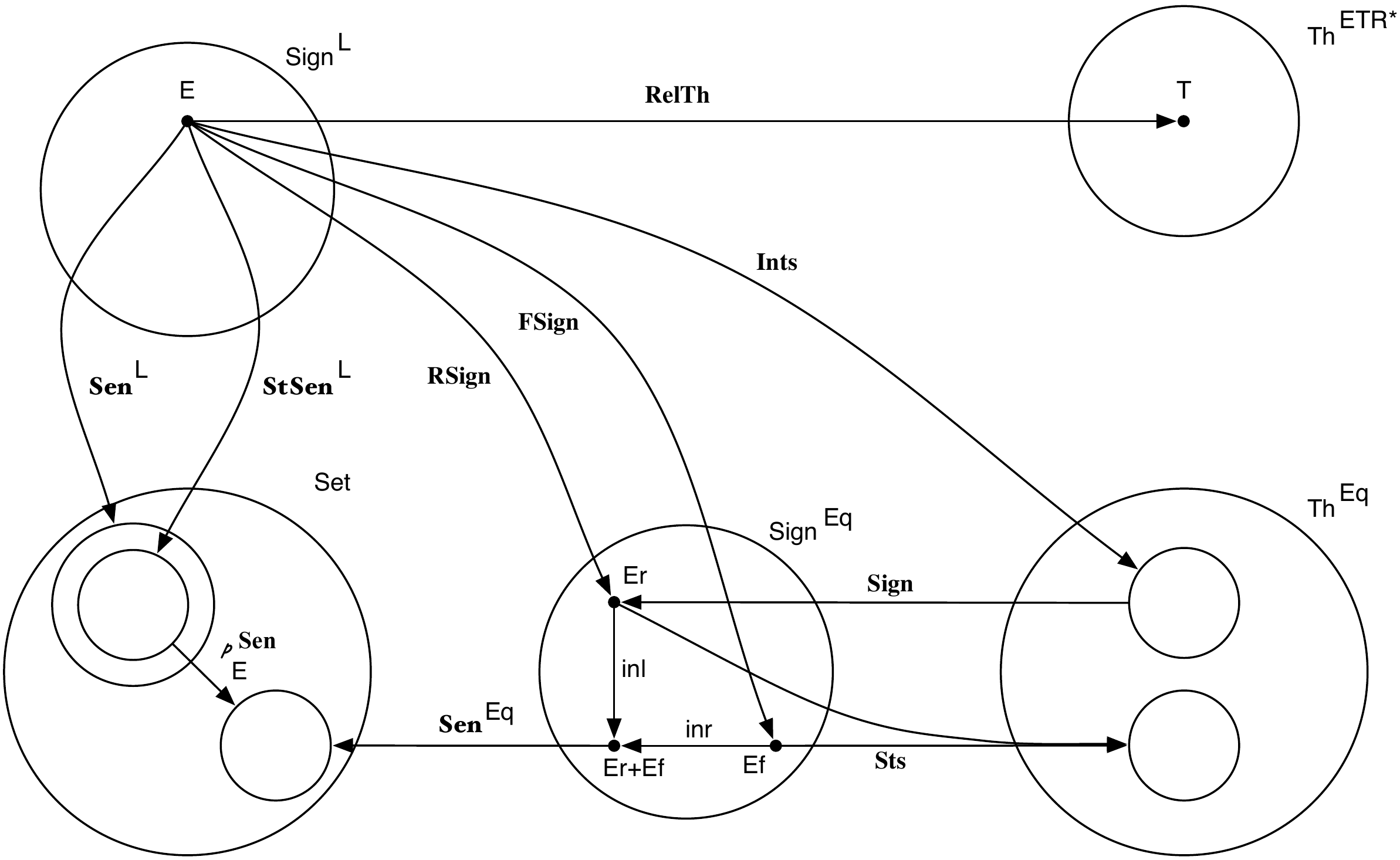}
\caption{Formal framework for axiomatising classes of relational models}\label{figure:diagram}
\end{figure*}

We have chosen first order dynamic logic (${\sf FODL}$ for short) \cite[Part~III, Chap.~11]{harel00} as a running example for the constructions presented in the paper, so let us start by summarising it's syntax and semantics as a quick reference for the features of the logical language. 

Signatures are structures of the form $\<C, F, P, A\>$ containing denumerable sets of symbols for constant, functions, predicates and atomic programs, respectively. Analogous to the presentation of any first order logical language, we start by defining the notion of term; let $\Sigma = \<C, F, P, A\>$ be a signature and $X$ a set of first order variable symbols, then $\mathit{TermFODL} (\Sigma)$ is the smallest set such that:
\begin{itemize}[$-$]
\item $C \cup X \subseteq \mathit{TermFODL} (\Sigma)$, and
\item if $f \in F$ and $\{t_1, \ldots, t_{\mathit{ar} (f)}\} \subseteq \mathit{TermFODL} (\Sigma)$, then $f (t_1, \ldots, t_2) \in \mathit{TermFODL} (\Sigma)$
\end{itemize}
Next, we mutually define formulae and programs, as the smallest sets $\mathit{FormFODL} (\Sigma)$ and $\mathit{PrgFODL} (\Sigma)$ such that:
\begin{itemize}[$-$]
\item if $t_1, t_2 \in \mathit{TermFODL} (\Sigma)$, then\\ $t_1 = t_2 \in \mathit{FormFODL} (\Sigma)$, 
\item if $p \in P$ and $\{t_1, \ldots, t_{\mathit{ar} (p)}\} \subseteq \mathit{TermFODL} (\Sigma)$, then $p (t_1, \ldots, t_2) \in \mathit{FormFODL} (\Sigma)$,
\item if $a \in \mathit{PrgFODL}(\Sigma)$, $x \in X$ and $\alpha, \beta \in \mathit{FormFODL} (\Sigma)$, then $\{\neg\alpha, \alpha\lor\beta, (\exists x)\alpha, \<a\>\alpha\} \subseteq \mathit{FormFODL} (\Sigma)$, 
\item $A \subseteq \mathit{PrgFODL} (\Sigma)$, and
\item if $\alpha \in \mathit{FormFODL}$ and $P, Q \mathit{PrgFODL} (\Sigma)$, then\\ $\{\alpha?, P+Q, P;Q, P^*\} \subseteq \mathit{PrgFODL} (\Sigma)$.
\end{itemize}

As it is well known, the language of regular programs over an alphabet of actions, like the one defined above, is expressive enough for defining the usual programming constructions:
\begin{itemize}[$-$]
\item $\mathit{if}\ \alpha\ \mathit{then}\ P\ \mathit{else}\ Q = \(\alpha?;P\)+\((\neg\alpha)?;Q\)$,
\item $\mathit{while}\ \alpha\ \mathit{do}\ P = \(\alpha?;P\)^*;(\neg\alpha)?$,
\end{itemize}

The semantics of ${\sf FODL}$ formulae over a signature $\<C, F, P, A\>$ and a set of first order variable symbols $X$ is given, as usual, in terms of an interpretation $\<S, \{\overline{c}\}_{c \in C}, \{\overline{f}\}_{f \in F}, \{\overline{p}\}_{p \in P}\>$ of the first order signature $\<C, F, P\>$, satisfying:
\begin{inparaenum}[1)]
\item $\overline{c} \in S$,
\item $\overline{f} \in [S^{\mathit{ar} (f)} \to S]$, for all $f \in F$, and
\item $\overline{p} \subseteq S^{\mathit{ar} (p)} \times S$, for all $p \in P$,
\end{inparaenum}
\noindent and states of a Kripke structure $\mathfrak{K} = \<W, W_0, \mathcal{L}\>$ such that:
\begin{inparaenum}[a)]
\item $W_0 \subseteq W$, and
\item $\mathcal{L}: W \to [X \to S]$.
\end{inparaenum}
Therefore, if $\alpha \in \mathit{FormFODL} (\<C, F, P, A\>)$, then $\alpha$ is satisfied by $w \in W$ in $\mathfrak{K}$ (written $\mathfrak{K}, w \models \alpha$) is inductively defined as follows:
\[
\begin{array}{l}
\mathfrak{K}, w \models t_1 = t_2 \text{ iff } m_{\<\mathfrak{K}, w\>} (t_1) = m_{\<\mathfrak{K}, w\>} (t_2)\\
\mathfrak{K}, w \models p (t_1, \ldots, t_{\mathit{ar} (p)}) \text{ iff } \<m_{\<\mathfrak{K}, w\>} (t_1), \ldots, m_{\<\mathfrak{K}, w\>} (t_{\mathit{ar} (p)})\> \in \overline{P}\\
\mathfrak{K}, w \models \neg\alpha \text{ iff } \mathfrak{K}, w \models \alpha \text{ does not hold}\\
\mathfrak{K}, w \models \alpha\lor\beta \text{ iff } \mathfrak{K}, w \models \alpha \text{ or } \mathfrak{K}, w \models \beta\\
\mathfrak{K}, w \models (\exists x) \alpha \text{ iff } \mbox{ there exists $w' \in W, s \in S$ such that:}\\
\hfill \mathcal{L} (w') = \mathcal{L} (w)[x \mapsto s] \mbox{ and } \mathfrak{K}, w' \models \alpha\\
\mathfrak{K}, w \models \<P\> \alpha \text{ iff } \mbox{ there exists $w' \in W$ such that:}\\
\hfill \<w, w'\> \in m_{\mathfrak{K}} (P) \text{ and } \mathfrak{K}, w' \models \alpha
\end{array}
\]
\noindent where $m_{\<\mathfrak{K}, w\>}$ and $m_{\mathfrak{K}}$ are defined as follows:\\
\[
\begin{array}{l}
m_{\<\mathfrak{K}, w\>} (c) = \overline{c}\\
m_{\<\mathfrak{K}, w\>} (x) = \mathcal{L} (w) (x)\\
m_{\<\mathfrak{K}, w\>} (f (t_1, \ldots, t_{\mathit{ar}(f)})) =  \overline{f} (m_{\<\mathfrak{K}, w\>} (t_1), \ldots, m_{\<\mathfrak{K}, w\>} (t_{\mathit{ar}(f)})).
\end{array}
\]
\[
\begin{array}{l}
m_{\mathfrak{K}} (a) = \overline{a}\\
m_{\mathfrak{K}} (\alpha?) = \{\<w, w\>\ |\ \mathfrak{K}, w \models \alpha\}\\
m_{\mathfrak{K}} (P;Q) = m_{\mathfrak{K}} (P) \circ m_{\mathfrak{K}} (Q)\\
m_{\mathfrak{K}} (P+Q) = m_{\mathfrak{K}} (P) \cup m_{\mathfrak{K}} (Q)\\
m_{\mathfrak{K}} (P^*) = \bigcup_{i \in \NAT} \(m_{\mathfrak{K}} (P)\)^{i}  \mbox{; where $r^{0} = \mathit{id}$, and $r^{n+1} = r \circ r^{n}$}
\end{array}
\]

From here on we assume that $\<\mathsf{Sign}^{\sf FODL}, \mathbf{Sen}^{\sf FODL}\>$ is the language, presented within the theory of general logics, for ${\sf FODL}$.

\subsection{On rigid and flexible symbols}
In most logical languages, the symbols defined in the signature are referred to as non-logical (or extralogical) and are known as rigid designators, or rigid symbols, as they designate the same object in every possible state. In contrast, a symbol is said to be a flexible designator (originally called flaccid designator, term coined by Saul Kripke in his 1970 lecture series at Princeton University, later published as the book \emph{Naming and Necessity} \cite[pp.~22]{kripke80}) when the object it designates depends on the specific state in which satisfaction is being evaluated. In first order languages the values of flexible symbols only range over individuals, while higher order formal languages might have flexible function and predicate symbols; higher order logic \cite{vanbenthem:hlfcs83}, \emph{abstract state machines} \cite{maibaum:ijcis-9_1} and relational databases \cite{codd:cacm-13_6}, are examples of systems making use of such higher order flexible symbols. Both flexible and rigid symbols are assigned values coming from the same domains; the only difference is whether they are interpreted over the frame or the state, hinting at a general and homogeneous view of what we will call \emph{interpretations} and \emph{states}.

In order to provide a proof-theoretical formalisation for relational semantics, we first pursue a formal characterisation of the values that symbols are to be assigned when they are interpreted. As we mentioned before, we are only interested in those values that can be designated by syntactic terms (i.e., values that can be named through algebraic terms). If we confine ourselves to first order logical languages, equational logic provides the means for completely axiomatising the behaviour of function symbols (and predicate symbols, if we consider the extension presented in Sec.~\ref{languages}), but higher order logical languages must rely on a more complex notion of interpretation and state, thus requiring the use of higher order equational logic \cite{meinke:tcs-100_2}, presented in full detail in Sec.~\ref{sec:hoeq}. Note that, by the way in which higher order equational logic is presented, the definitions and results presented in this section can be easily generalised to logical languages of higher order.

\begin{definition}
Let $L = \<\mathsf{Sign}^{\sf L}, \mathbf{Sen}^{\sf L}\>$ be a language then, the \emph{state sublanguage} of ${\sf L}$ is a structure $\<\mathbf{StSen}, \mathbf{RSign}, \mathbf{FSign}, \rho^{Sen}\>$ such that:
\begin{itemize}[$-$]
\item $\mathbf{StSen}^{\sf L}: \mathsf{Sign}^{\sf L} \to \mathsf{Set}$ is a subfunctor of $\mathbf{Sen}^{\sf L}$, 
\item $\mathbf{RSign}: \mathsf{Sign}^{\sf L} \to \mathsf{Sign}^{\sf Eq}$ and $\mathbf{FSign}: \mathsf{Sign}^{\sf L} \to \mathsf{Sign}^{\sf Eq}$ are functors, and
\item $\rho^{Sen}: \mathbf{StSen}^{\sf L} \nat \mathbf{Sen}^{\sf Eq} \circ (\mathbf{RSign}+\mathbf{FSign})$ is a natural transformation.
\end{itemize}
\end{definition}

The above definition aims to characterise the sublanguage of a logical language, containing those formulae whose evaluation only depends on the interpretation of the rigid symbols and the current assignment of values to the flexible symbols (i.e., that do not require the examination of any other state of the relational structure). As such, the functors $\mathbf{RSign}$ and $\mathbf{FSign}$ can be understand as classifiers of the symbols appearing in the $L$-signature, as rigid or flexible symbols, respectively.

\begin{example}[State formulae of first order dynamic logic]
\label{ex:state-sublanguage}
Let $X$ be a countable set of variables, $\<\mathsf{Sign}^{\sf FODL}, \mathbf{Sen}^{\sf FODL}\>$ be the language of first order dynamic logic as it was defined above, and $\Sigma = \<C, F, P, A\> \in |\mathsf{Sign}^{\sf FODL}|$, then we define $\<\mathbf{StSen}, \mathbf{RSign}, \mathbf{FSign}, \rho^{Sen}\>$ the state sublanguage of ${\sf FODL}$: 
\begin{itemize}[$-$]
\item $\mathbf{StSen}^{\sf FODL} (\Sigma)$ is all the atomic formulae of the form:
\begin{inparaenum}[1)]
\item $t_1 = t_2$, for all $t_1, t_2 \in \mathit{TermFODL} (\Sigma)$, and 
\item $p (t_1, \ldots, t_{\mathit{ar}(p)})$, for all $p \in P$ and $\{t_1, \ldots, t_{\mathit{ar}(p)}\} \in \mathit{TermFODL} (\Sigma)$.
\end{inparaenum} 
\item $\mathbf{RSign} (\Sigma) = \Sigma$. As usual the rigid symbols are those appearing in the $\mathsf{FODL}$ signature and interpreted in the same way in all the worlds in the model.
\item $\mathbf{FSign} (\Sigma) = \<X, \emptyset, \emptyset\>$. Flexible symbols are an equational signature containing only symbols that must be interpreted as individuals.
\item $\rho^{Sen}_\Sigma: \mathbf{StSen}^{\sf FODL} (\Sigma) \to \mathbf{Sen}^{\sf Eq} \circ (\mathbf{RSign}+\mathbf{FSign}) (\Sigma)$, is defined as follows:
\[
\begin{array}{l}
\rho^{\mathit{Sen}}_\Sigma (t = t') = \rho^{\mathit{Term}}_\Sigma (t) = \rho^{\mathit{Term}}_\Sigma (t')\\
\rho^{\mathit{Sen}}_\Sigma (p(t_1, \ldots, t_{\mathit{ar}(p)})) =  \mathit{in}_l (p) (\rho^{\mathit{Term}}_\Sigma (t_1), \ldots, \rho^{\mathit{Term}}_\Sigma (t_{\mathit{ar}(p)})) \mbox{, for all $p \in \Sigma$}
\end{array}
\]
\[
\begin{array}{l}
\rho^{\mathit{Term}}_\Sigma (c) = \mathit{in}_l (c) \mbox{, for all $c \in \Sigma$}\\
\rho^{\mathit{Term}}_\Sigma (x) = \mathit{in}_r (x) \mbox{, for all $x \in X$}\\
\rho^{\mathit{Term}}_\Sigma (f(t_1, \ldots, t_{\mathit{ar}(f)})) =  \mathit{in}_l (f)(\rho^{\mathit{Term}}_\Sigma (t_1), \ldots, \rho^{\mathit{Term}}_\Sigma (t_{\mathit{ar}(f)})) \mbox{, for all $f \in \Sigma$}
\end{array}
\]
\end{itemize}
\end{example}

\subsection{On interpretations and states}
As usual in logics, the first step in providing semantics to a logic is to provide carrier sets for interpreting individuals, and as a consequence of this, for the symbols in the signature (i.e., the rigid symbols), while states, are assignments of concrete values from these carrier sets to the flexible symbols, that can vary across the model.

We are interested in providing a notion of model where individuals are interpreted over values that can be denoted by terms. That is how we guarantee that such values can effectively be constructed through the operations declared in the signature of the systems under consideration. If $\<\mathsf{Sign}^{\sf L}, \mathbf{Sen}^{\sf L}\>$ is a language and $\Sigma \in |\mathsf{Sign}^{\sf L}|$ then, we would like rigid symbols to be interpreted as constants, functions and relations, respectively, over the carrier set of the term algebra $\mathbf{T}_{\mathbf{RSign}(\Sigma)}(\emptyset)$ (see \cite[Chap.~II, \S10, Def.~10.4]{burris81}) (i.e., $\mathit{Term} (\mathbf{RSign}(\Sigma))$, as defined in Def.~\ref{eq}). From a proof theoretical standpoint, an \emph{interpretation} will be an equational theory presentation over the signature $\mathbf{RSign}(\Sigma)$ providing an axiomatisation of rigid symbols in terms of the equality (i.e., when terms can be considered to be equal, thus being interchangeable within any syntactic construction). 

\begin{definition}[Interpretations]
\label{def:interpretations}
Let ${\sf L} = \<\mathsf{Sign}^{\sf L}, \mathbf{Sen}^{\sf L}\>$ be a language and $\Sigma \in |\mathsf{Sign}^{\sf L}|$, and ${\sf Eq} = \<\mathsf{Sign}^{\sf Eq}, \mathbf{Sen}^{\sf Eq}\>$ the language of equational logic, then we define the functor $\mathbf{Ints}: \mathsf{Sign}^{\sf L} \to \mathsf{Th}^{\sf Eq}$ as follows: 
\begin{itemize}[$-$]
\item let $\Sigma \in |\mathsf{Sign}^{\sf L}|$ then $\mathbf{Ints}(\Sigma)$ is the collection $\setof{\<\mathbf{RSign} (\Sigma), \Gamma\>}{\Gamma \subseteq \mathbf{Sen}^{\sf{Eq}} \circ \mathbf{RSign} (\Sigma)}$ (notice that equational theory presentations are given over ground terms as Def.~\ref{eq} do not consider the availability of flexible symbols), and
\item let $\sigma: \Sigma \to \Sigma' \in ||\mathsf{Sign}^{\sf L}||$ then \\
$\mathbf{Ints}(\sigma): \mathsf{Th}^{\sf Eq}_{\mathbf{RSign} (\Sigma)} \to \mathsf{Th}^{\sf Eq}_{\mathbf{RSign} (\Sigma')}$\footnote{Given $\Sigma \in |\mathsf{Sign}^{\sf Eq}|$, $\mathsf{Th}^{\sf Eq}_{\Sigma}$ denotes the full subcategory of $\mathsf{Th}^{\sf Eq}$ with signature $\Sigma$.} is a functor determined by $\widehat{\sigma}: \mathbf{RSign} (\Sigma) \to \mathbf{RSign} (\Sigma') \in ||\mathsf{Sign}^{\sf Eq}||$ such that: 
\begin{itemize}[$-$]
\item $\mathbf{Ints}(\sigma) (\<\mathbf{RSign}(\Sigma), \Gamma\>) = \<\mathbf{RSign}(\Sigma'), \Gamma'\>$ implies $\mathbf{Sen}^{\sf Eq} (\widehat{\sigma}) (\Gamma) \subseteq \Gamma'$, and
\item for all $\sigma_T: \<\Sigma, \Gamma_1\> \to \<\Sigma, \Gamma_2\> \in ||\mathbf{Ints}(\Sigma)||$, $\mathbf{Ints}(\sigma)(\sigma_T)$ is the unique morphism $\sigma_T'$ such that $\mathbf{Ints} (\sigma) \circ \sigma_T = \sigma'_T \circ \mathbf{Ints} (\sigma)$.
\end{itemize}
\end{itemize}
\end{definition}

Next, we introduce the notion of \emph{definition} as a way of fixing the value of a symbols of a signature, as a term over a different signature.

\begin{definition}[Definitions]
Let ${\sf Eq} = \<\mathsf{Sign}^{\sf Eq}, \mathbf{Sen}^{\sf Eq}\>$ the language of equational logic, we define the bifunctor $\mathbf{Defs}: \mathsf{Sign}^{\sf Eq} \times \mathsf{Sign}^{\sf Eq} \to \mathsf{Set}$ as follows: let $\Sigma = \<C, F, P\>, \Sigma' = \<C', F', P'\> \in |\mathsf{Sign}^{\sf Eq}|$, then $\mathbf{Defs} (\Sigma, \Sigma')$ is the set:
\[
\begin{array}{l}
\left\{c = t \ {\Big |}\  c \in C\ \mathit{and}\ t \in \mathit{Term} (\Sigma')\right\} \cup\\
\left\{f(t_1, \ldots, t_{\mathit{ar}(f)})=t \ {\Big| }\ f \in F\ \text{ and } t_1, \ldots, t_{\mathit{ar}(f)}, t \in \mathit{Term} (\Sigma') \right\} \cup\\
\left\{p(t_1, \ldots, t_{\mathit{ar}(P)}) \ {\Big |}\  p \in P\ \text{ and } t_1, \ldots, t_{\mathit{ar}(f)} \in \mathit{Term} (\Sigma')\right\}
\end{array}
\]
Let $\Sigma_1 = \<C_1, F_1, P_1\>$, $\Sigma'_1 = \<C_1', F_1', P_1'\>$, $\Sigma_2 = \<C_2, F_2, P_2\>$, $\Sigma'_2 = \<C_2', F_2', P_2'\>$ $\in |\mathsf{Sign}^{\sf Eq}|$ be equational signatures, $\sigma_1:  \Sigma_1 \to \Sigma'_1$, $\sigma_2: \Sigma_2 \to \Sigma'_2 \in ||\mathsf{Sign}^{\sf Eq}||$ be equational signature morphisms and $\alpha \in \mathbf{Defs}(\Sigma_1, \Sigma_2)$, then $\mathbf{Defs} (\sigma_1, \sigma_2)$ is defined as follows:
\[
\begin{array}{l}
\mathbf{Defs} (\sigma_1, \sigma_2)(c = t) = \sigma_1(c) = \sigma^*_2 (t)\\
\mathbf{Defs} (\sigma_1, \sigma_2)(f(t_1, \ldots, t_{\mathit{ar}(f)}) = t) =  \sigma_1(f) (\sigma^*_2 (t_1), \ldots, \sigma^*_2 (t_{\mathit{ar}(f)})) = \sigma^*_2 (t)\\
\mathbf{Defs} (\sigma_1, \sigma_2)(p(t_1, \ldots, t_{\mathit{ar}(p)})) =  \sigma_1(p) (\sigma^*_2 (t_1), \ldots, \sigma^*_2 (t_{\mathit{ar}(p)}))
\end{array}
\]
\noindent where, $\sigma^*: \mathbf{Sen}^{\sf Eq}(\Sigma) \to \mathbf{Sen}^{\sf Eq}(\Sigma')$ is the homomorphic extension of $\sigma:\Sigma \to \Sigma' \in ||\mathsf{Sign}^{\sf Eq}||$, to the structure of the formula.
\end{definition}

The intuition behind the previous definition is that a symbol in the signature passed as the first argument is assigned a term over the signature passed as the second argument. This is done by considering three different kinds of formulae:
\begin{inparaenum}[1.]
\item a constant symbol from the first signature is to be interpreted as a term over the second signature,
\item a function symbol from the first signature is defined by the values (a term over the second signature) it yields when it is applied to, as many terms, over the second signature, as the arity of that function symbol prescribes, and,
\item analogously, a predicate symbol from the first signature is defined by the tuples of terms, over the second signature, of the size that the arity of that predicate symbol prescribes, over which it holds.
\end{inparaenum}

Following this approach, \emph{states} are definitions of flexible symbols as terms over rigid symbols. Note that this is consistent with what is done in the traditional model theory where carrier sets, interpreting domains, are fixed by the interpretation forcing flexible symbols (i.e., variables) to take values from those carrier sets.

\begin{definition}[States]
\label{def:states}
States are characterised by the bifunctor $\mathbf{Sts}: \mathsf{Sign}^{\sf Eq} \times \mathsf{Sign}^{\sf Eq} \to \mathsf{Cat}$, defined as: 
\begin{itemize}[$-$]
\item let $\Sigma, \Sigma' \in |\mathsf{Sign}^{\sf Eq}|$ then $\mathbf{Sts}(\Sigma, \Sigma')$ is the collection $\{\<\Sigma + \Sigma', \Gamma\>\ |\ \Gamma \subseteq \mathbf{Defs}(\Sigma, \Sigma')\}$, and 
\item let $\sigma: \Sigma_1 \to \Sigma_2, \sigma': \Sigma'_1 \to \Sigma'_2 \in ||\mathsf{Sign}^{\sf Eq}||$ and $\sigma_T: \<\Sigma_1 + \Sigma'_1, \Gamma\> \to \<\Sigma_1 + \Sigma'_1, \Gamma'\> \in ||\mathbf{Sts}(\Sigma)||$ then \\
$\mathbf{Sts}(\sigma + \sigma')(\<\Sigma_1 + \Sigma'_1, \Gamma\>) = \<\Sigma_2 + \Sigma'_2, \mathbf{Sen}^{\sf Eq}(\sigma_1+\sigma_2)(\Gamma)\>$ and $\mathbf{Sts}(\sigma_1+\sigma_2)(\sigma_T)$ is the unique morphism $\sigma_T'$ such that $\<\sigma + \sigma', [\sigma, \sigma']\> \circ \sigma_T = \sigma'_T \circ \<\sigma + \sigma', [\sigma, \sigma']\>$,
where $[\sigma, \sigma']: \mathbf{Sen}^{\sf Eq}(\Sigma_1 + \Sigma'_1) \to \mathbf{Sen}^{\sf Eq}(\Sigma_2 + \Sigma'_2)$ is defined as follows:
\[
\begin{array}{l}
{[}\sigma, \sigma'{]} (\mathit{in}_l (c) = t) = \mathit{in}_l \circ \sigma (c) = {\sigma'}^* (t)\\
{[}\sigma, \sigma'{]} (p (t_1, \ldots t_n)) = \mathit{in}_l \circ \sigma (p) ({\sigma'}^* (t_1), \ldots, {\sigma'}^* (t_{\mathit{ar}(p)}))\\
{[}\sigma, \sigma'{]} (f (t_1, \ldots t_n) = t) = \mathit{in}_l \circ \sigma (f) ({\sigma'}^* (t_1), \ldots, {\sigma'}^* (t_{\mathit{ar}(p)})) = {\sigma'}^* (t) \\
\ \\
{\sigma'}^* (\mathit{in}_r (c)) = \mathit{in}_r \circ \sigma' (c)\\
{\sigma'}^* (f (t_1, \ldots t_n)) = \mathit{in}_r \circ \sigma' (f) ({\sigma'}^* (t_1), \ldots, {\sigma'}^* (t_{\mathit{ar}(p)}))
\end{array}
\]
\end{itemize}
\end{definition}

Let ${\sf L}$ be a logic, then the functor characterising the category of states for ${\sf L}$ is $\mathbf{States}^{\sf L} = \mathbf{Sts} \circ (\mathbf{RSign}\times\mathbf{FSign})$.

\begin{example}[Interpretations and states]
Let ${\sf L} = \<\mathsf{Sign}^{\sf L}, \mathbf{Sen}^{\sf L}\>$ be a language and $\Sigma \in |\mathsf{Sign}^{\sf L}|$ such that $\mathbf{FSign} (\Sigma) = \Sigma_1$ and $\mathbf{RSign} (\Sigma) = \Sigma_2$

$\Sigma_1 = \<\{x, y\}, \emptyset, \emptyset\>, \Sigma_2 = \<\{0, 1\}, \{+, \cdot\}, \{<\}\> \in |\mathsf{Sign}^{\sf Eq}|$ then, $I \in |\mathbf{Ints}(\Sigma)|$ and $S \in |\mathbf{Sts}(\Sigma_1, \Sigma_2)|$.
$$
\begin{array}{l}
\begin{array}{clc}
& I =\\
\langle & \Sigma_2, \\ 
           & \left\{\begin{array}{c}0 + t = t,\\ 1 \cdot t = t\end{array}\right\}_{t \in \mathit{Term}(\Sigma)} \cup\\
           & \left\{\begin{array}{c}t + t' = t' + t,\\ t \cdot t' = t' \cdot t\end{array}\right\}_{t, t' \in \mathit{Term}(\Sigma)} \cup\\
            & \left\{\begin{array}{c}t + (t' \cdot t'') = (t + t') \cdot (t + t''),\\ t \cdot (t' + t'') = (t \cdot t') + (t \cdot t'')\end{array}\right\}_{t, t', t'' \in \mathit{Term} (\Sigma)} \cup\\
           & \{0 < t\}_{t \in \mathit{Term}(\Sigma), t \not= 0} \cup\\
           & \{t < t + t'\}_{t, t' \in \mathit{Term}(\Sigma), t' \not= 0} & \rangle
\end{array}
\\
\ \\
\begin{array}{clc}
& S =\\
\langle & \Sigma_1 + \Sigma_2, \\ 
            & \{\mathit{in}_l (x) = \mathit{in}_r (0) \mathit{in}_r (+) \mathit{in}_r (1),\\
            & \ \ \mathit{in}_l (y) = \mathit{in}_r (0) \mathit{in}_r (+) \mathit{in}_r (1) \mathit{in}_r (+) \mathit{in}_r (1)\} & \rangle
\end{array}
\end{array}
$$
The reader should note that the equational theory $I$ is expressed in a compact way for the sake of presentation, but it is an infinite set of formulae.
\end{example}

\subsection{On the relational structure of models}
Algebraisations of relational models have been studied for a long time; examples of this are the study of modal logic from a relation algebraic perspective  by Schlingloff and Heinle in \cite{schlingloff:brink97} and the many interpretability results of a variety of modal, and multimodal, logics \cite{frias:jancl-8,frias:relmics01+,frias:relmics03,frias:jlap-66_2} in $\omega$-closure fork algebras with urelements \cite[Def.~7]{frias:relmics01+}.
While all these works share the common interest of building an algebraic framework in which it is possible to embed logical systems of a modal nature, they also share the feature that such an embedding is obtained in an \emph{ad hoc} manner by finding workarounds for representing their syntactic features (i.e., terms, formulae, etc.) as equations and their semantics (i.e., models) as algebras. In this work we dig into this previous experience in order to rescue a more general methodology for obtaining proof-theoretically supported classes of models.

The first step in providing a representation for a class of relational models is to formally determine the class of relational structures over which such models will be defined. Therefore, we will first concentrate on building a formalisation of Kripke structures; it is immediate to see that a labelled transition systems over a given set $S$ (see Def.~\ref{def:lts}) can be thought of as the interpretation of a specific $\mathsf{ETR*}$-signature, over the full proper closure relation algebras with base set $S$. On the other hand, if we assume that Kripke structures are labelled transition systems satisfying certain properties in order to fit the needs of the logic being interpreted, then we can consider using $\mathsf{ETR*}$-theory presentations for formalising classes of Kripke structures, and the calculus for $\mathsf{ETR*}$ as a formal tool for reasoning about their properties.

To accomplish this, there is a gap to bridge between the intended semantics for a logic and the actual class of models determined by the $\mathsf{ETR*}$ formalisation of such an intuitive notion of a relational model. More concretely, given a logic $\mathsf{L}$, for which we have a functor $\mathbf{RelTh}: \mathsf{Sign}^{\sf L} \to \mathsf{Th}^{\sf ETR*}$ mapping its signatures to $\mathsf{ETR*}$-theory presentations, we need to prove that, given $\Sigma \in |\mathsf{Sign}^\mathsf{L}|$, the class of models in $\mathbf{Mod}^\mathsf{ETR*} (\mathbf{RelTh} (\Sigma))$ conforms to the intuition we have of the semantics for $\mathsf{L}$, meaning, that they are (a specific class of) labelled transition systems over what we consider states of the system (i.e., objects in $|\mathbf{States}^{\sf L} (\Sigma)|$). 

Let $\mathsf{L}$ be a logic with language $\<\mathsf{Sign}^{\sf L}, \mathbf{Sen}^{\sf L}\>$ and category of theories $\mathsf{Th}^{\sf L}$, $\mathbf{Sign}^{\sf L}: \mathsf{Th}^{\sf L} \to \mathsf{Sign}^{\sf L}$ is the forgetful functor such that: if $\<\Sigma, \Gamma\> \in |\mathsf{Th}^{\sf L}|$ then $\mathbf{Sign}^{\sf L} (\<\Sigma, \Gamma\>) = \Sigma$ and if $\sigma: \<\Sigma, \Gamma\> \to \<\Sigma', \Gamma'\> \in ||\mathsf{Th}^{\sf L}||$ then $\mathbf{Sign}^{\sf L} (\sigma) = \sigma$, the underlying signature morphism over which the theory morphism is defined. When it is clear from the context, the superscripting with ${\sf L}$ will be omitted.

The next definition provides the characterisation of the category of models of interest for the purpose of formalising relational models.

\begin{definition}
\label{spcra}
Let $\mathsf{L}$ be a logic, $\mathbf{RelTh}: \mathsf{Sign}^{\sf L} \to \mathsf{Th}^{\sf ETR*}$ a functor and $\Sigma \in |\mathsf{Sign}^\mathsf{L}|$, $\mathcal{A} \in |\mathbf{Mod}^\mathsf{ETR*} \circ \mathbf{RelTh} (\Sigma)|$ is said to be a \emph{state proper closure relation algebra} if its base set is a subset of $|\mathbf{States}^{\sf L} (\Sigma)|$. Given a state proper closure relation algebra $\mathcal{M}$, we will refer to its base set as $\mathcal{M}_\mathit{bs}$.

Let $T = \mathbf{RelTh} (\Sigma)$, we define $\mathbf{Mod}^\mathsf{sETR*} (T)$ as $\<\mathcal{O}, \mathcal{A}\>$ where:
\begin{itemize}[$-$]
\item $\mathcal{O}$ is the class of the state proper closure relation algebras, and 
\item $\mathcal{A}$ are those state proper closure relation algebras homomorphism $\gamma_h: \mathcal{M} \to \mathcal{M}' \in ||\mathbf{Mod}^\mathsf{ETR*} (\mathbf{RelTh} (\Sigma))||$, with $h: \mathcal{M}_\mathit{bs} \to {\mathcal{M}'}_\mathit{bs}$, satisfying: 
 \begin{itemize}[$-$]
 \item for all $s \in |\mathcal{M}|$, $s^\bullet = \(h (s)\)^\bullet$,
 \item {\bf forward condition:} for all $R \in |\mathcal{M}|$, if $\<s_1, s_2\> \in R$ then $\<h (s_1), h (s_2)\> \in \gamma (R)$, and
 \item {\bf backward condition:} for all $R \in |\mathcal{M}|$, if $\<h (s_1), s'_2\> \in \gamma (R)$ then there exists $s_2 \in \mathcal{M}_\mathit{bs}$ such that $h (s_2) = s'_2$ and $\<s_1, s_2\> \in R$.
\end{itemize}
\end{itemize}
\end{definition}

Definition~\ref{spcra} provides a relational characterisation of what is generally known as a \emph{Kripke frame}, the cornerstone of relational semantics (see, for instance, \cite[Def.~1.19]{blackburn01} for the case of basic modal logic or \cite[Sec.~5.2]{harel00} for propositional dynamic logic; in the cases of computational tree logic \cite[pp.~166]{benari:acm-sigplan-sigact81} and linear temporal logic \cite[pp.~52]{pnueli:tcs-13_1}, there is an implicit use of this notion of frame for defining trees and sequences of states of a system, respectively). The interesting part of the previous definition is that, as we are interested in algebraically characterising Kripke frames, not every state proper closure relation algebra homomorphism is considered to be a relational structure morphism, but those representing \emph{bounded morphism} \cite[Def.~2.12]{blackburn01}, the natural notion of morphism between such structures.

\begin{lemma}
Let $\mathsf{L}$ be a logic, $\mathbf{RelTh}: \mathsf{Sign}^{\sf L} \to \mathsf{Th}^{\sf ETR*}$ a functor and $\Sigma \in |\mathsf{Sign}^\mathsf{L}|$, if $T = \mathbf{RelTh} (\Sigma)$ then $\mathbf{Mod}^\mathsf{sETR*} (T)$ is a category.
\end{lemma}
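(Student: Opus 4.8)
The plan is to exhibit $\mathbf{Mod}^\mathsf{sETR*}(T)$ as a subcategory of the category $\mathbf{Mod}^\mathsf{ETR*}(\mathbf{RelTh}(\Sigma))$, which is a genuine category because, by Theorem~\ref{etr*-is-logic}, $\mathbf{Mod}^\mathsf{ETR*}$ is an institution whose model functor extends to ${\mathsf{Th}^{\sf ETR*}}^\op \to \mathsf{Cat}$. By Definition~\ref{spcra}, the objects of $\mathbf{Mod}^\mathsf{sETR*}(T)$ are exactly the state proper closure relation algebras (a subclass of $|\mathbf{Mod}^\mathsf{ETR*}(\mathbf{RelTh}(\Sigma))|$) and its morphisms are those tagged homomorphisms $\gamma_h$ satisfying the three side conditions (equality of the theories $s^\bullet$ and $(h(s))^\bullet$, the \emph{forward} condition, and the \emph{backward} condition). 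So the only things to check are that identities on the chosen objects lie in this morphism class and that the class is closed under composition; associativity of composition and the unit laws are then inherited verbatim from $\mathbf{Mod}^\mathsf{ETR*}(\mathbf{RelTh}(\Sigma))$.

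For identities: given a state proper closure relation algebra $\mathcal{M}$, I would take its ambient identity, whose base-set map is $h=\mathrm{id}_{\mathcal{M}_\mathit{bs}}$ and which acts as the identity on relations. Then $s^\bullet=(h(s))^\bullet$ trivially, and both the forward and the backward conditions collapse to the tautology that $\langle s_1,s_2\rangle\in R$ holds precisely when $\langle s_1,s_2\rangle\in R$ holds.

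For composition: given $\gamma_h\colon\mathcal{M}\to\mathcal{M}'$ and $\gamma'_{h'}\colon\mathcal{M}'\to\mathcal{M}''$ in $\mathbf{Mod}^\mathsf{sETR*}(T)$, their composite in the ambient category is again a state proper closure relation algebra homomorphism, with base-set map $h'\circ h$ and sending a relation $R\in|\mathcal{M}|$ to $\gamma'(\gamma(R))$. Theory preservation chains: $s^\bullet=(h(s))^\bullet=(h'(h(s)))^\bullet$. The forward condition chains in the same direction: $\langle s_1,s_2\rangle\in R$ yields $\langle h(s_1),h(s_2)\rangle\in\gamma(R)$ and then $\langle h'(h(s_1)),h'(h(s_2))\rangle\in\gamma'(\gamma(R))$. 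The backward condition is where the only real care is needed, since it has to be peeled off outermost-first: from $\langle (h'\circ h)(s_1),s''_2\rangle\in\gamma'(\gamma(R))$ the backward condition for $\gamma'_{h'}$ (applied to $\gamma(R)\in|\mathcal{M}'|$ at the point $h(s_1)$) produces $s'_2\in\mathcal{M}'_\mathit{bs}$ with $h'(s'_2)=s''_2$ and $\langle h(s_1),s'_2\rangle\in\gamma(R)$; then the backward condition for $\gamma_h$ (applied to $R\in|\mathcal{M}|$ at $s_1$) produces $s_2\in\mathcal{M}_\mathit{bs}$ with $h(s_2)=s'_2$ and $\langle s_1,s_2\rangle\in R$, whence $(h'\circ h)(s_2)=h'(s'_2)=s''_2$, as required. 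I expect this nesting of the two backward conditions — and, more pedantically, the verification that the base map of a composite tagged homomorphism is the composite of the base maps, so that the composite is again a legitimate morphism $\gamma_{h'\circ h}$ — to be the only nontrivial points; everything else, including associativity and the identity laws, is inherited from the already-available fact that $\mathbf{Mod}^\mathsf{ETR*}(\mathbf{RelTh}(\Sigma))$ is a category.
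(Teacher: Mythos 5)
Your proposal is correct and follows essentially the same route as the paper's own argument: exhibit $\mathbf{Mod}^\mathsf{sETR*}(T)$ inside the ambient category of $\mathsf{ETR*}$-models, check that identities satisfy the three side conditions and that the class of morphisms is closed under composition, and inherit associativity and the unit laws. The paper merely lists these four checks without detail, so your explicit chaining of the backward condition (outermost-first) is a welcome elaboration rather than a deviation.
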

\begin{proof}
The proof follows straightforwardly by observing that: 
\begin{inparaenum}[1)]
\item for every state proper closure relation algebra there exists an identity homomorphism and an identity function over its base set,
\item composition of algebra homomorphisms satisfying the conditions of Def.~\ref{spcra}, is an homomorphism, also satisfy those conditions, 
\item identity homomorphism behave as identities when they are composed, and 
\item composition of homomorphism from that class is associative.
\end{inparaenum}
\end{proof}

\begin{corollary}
Let $\mathsf{L}$ be a logic, $\mathbf{RelTh}: \mathsf{Sign}^{\sf L} \to \mathsf{Th}^{\sf ETR*}$ a functor and $\Sigma \in |\mathsf{Sign}^\mathsf{L}|$, if $T = \mathbf{RelTh} (\Sigma)$ then $\mathbf{Mod}^\mathsf{sETR*} (T) \subseteq \mathbf{Mod}^\mathsf{ETR*} (T)$
\qed
\end{corollary}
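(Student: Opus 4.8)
The plan is to unwind the definition of \emph{subcategory} and check its three requirements in turn: inclusion of the object classes, inclusion of the hom-classes, and compatibility of identities and of composition. Throughout, write $T = \mathbf{RelTh}(\Sigma)$, and recall that, by the extension of $\mathbf{Mod}^{\sf ETR*}$ to $\mathsf{Th}^{\sf ETR*}$ described in Sec.~\ref{institutions}, one has $\mathbf{Mod}^{\sf ETR*}(T) = \mathbf{Mod}^{\sf ETR*} \circ \mathbf{RelTh}(\Sigma)$; this identity is exactly what lets Definition~\ref{spcra} be read as speaking of models of the \emph{theory} $T$ rather than of its bare signature.

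First I would treat objects. By Definition~\ref{spcra}, every object of $\mathbf{Mod}^{\sf sETR*}(T)$ is a state proper closure relation algebra, which \emph{is}, by definition, an element of $|\mathbf{Mod}^{\sf ETR*} \circ \mathbf{RelTh}(\Sigma)| = |\mathbf{Mod}^{\sf ETR*}(T)|$ subject to the extra requirement that its base set be a subset of $|\mathbf{States}^{\sf L}(\Sigma)|$. Hence the object class of $\mathbf{Mod}^{\sf sETR*}(T)$ is contained in that of $\mathbf{Mod}^{\sf ETR*}(T)$. For morphisms the argument is the same: a morphism of $\mathbf{Mod}^{\sf sETR*}(T)$ is, again by Definition~\ref{spcra}, a state proper closure relation algebra homomorphism $\gamma_h$ that is \emph{required to belong to} $||\mathbf{Mod}^{\sf ETR*}(\mathbf{RelTh}(\Sigma))|| = ||\mathbf{Mod}^{\sf ETR*}(T)||$; the forward and backward conditions, together with the clause $s^\bullet = (h(s))^\bullet$, only single out a subclass of such homomorphisms and never produce anything outside $\mathbf{Mod}^{\sf ETR*}(T)$. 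So hom-classes embed, with domains and codomains preserved on the nose.

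It then remains to check that identities and composition in $\mathbf{Mod}^{\sf sETR*}(T)$ are inherited from $\mathbf{Mod}^{\sf ETR*}(T)$, which is precisely what was recorded in the proof of the preceding lemma: the identity morphism on a state proper closure relation algebra is the identity algebra homomorphism paired with the identity function on its base set --- the very identity it carries in $\mathbf{Mod}^{\sf ETR*}(T)$ --- and composition in $\mathbf{Mod}^{\sf sETR*}(T)$ is composition of algebra homomorphisms (with the underlying base functions composed alongside), the same operation used in $\mathbf{Mod}^{\sf ETR*}(T)$. Consequently the evident inclusion on objects and morphisms is functorial and exhibits $\mathbf{Mod}^{\sf sETR*}(T)$ as a (generally non-full) subcategory of $\mathbf{Mod}^{\sf ETR*}(T)$. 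I do not expect any genuine obstacle here; the only point demanding even minor care is the bookkeeping observation with which I began, namely that the ``$\circ\,\mathbf{RelTh}$'' in Definition~\ref{spcra} denotes $\mathbf{Mod}^{\sf ETR*}$ evaluated at the theory $T$ and not merely at its signature, which is immediate from $T = \mathbf{RelTh}(\Sigma)$ and the definition of $\mathbf{Mod}^{\sf ETR*}$ on $\mathsf{Th}^{\sf ETR*}$.
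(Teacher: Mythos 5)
Your proposal is correct and follows the same route the paper takes implicitly: the paper states this corollary with no written proof precisely because, as you observe, Definition~\ref{spcra} defines the objects and morphisms of $\mathbf{Mod}^\mathsf{sETR*}(T)$ as a subclass of those of $\mathbf{Mod}^\mathsf{ETR*}(T)$, with identities and composition inherited as recorded in the preceding lemma. Your more explicit unwinding of the subcategory conditions adds detail but no new idea.
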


The satisfaction relation $\models^\mathsf{sETR*}_{\mathbf{Sign} (T)} \subseteq |\mathbf{Mod}^\mathsf{sETR*} (T)| \times \mathbf{Sen}^\mathsf{ETR*} \circ \mathbf{Sign} (T)$ is the restriction of $\models^\mathsf{ETR*}_{\mathbf{Sign} (T)}$ to algebras in $\mathbf{Mod}^\mathsf{sETR*} (T)$.

The following fact about $\mathsf{ETR*}$-models follows trivially by observing that $\mathsf{ETR*}$-formulae cannot prescribe specific properties of the base set of proper closure relation algebras.

Then, after these results, we can only expect that, for every proper closure relation algebras satisfying the algebraic axiomatisation of the class of Kripke models for a logic $\mathsf{L}$, there exists a state proper closure relation algebras satisfying the same formulae. This is usually referred to as the representability of the class of proper closure relation algebras in the class of state proper closure relation algebras.

\begin{lemma}\ \\
\label{lemma:complete}
Let $\mathsf{L}$ be a logic, $\mathbf{RelTh}: \mathsf{Sign}^{\sf L} \to \mathsf{Th}^{\sf ETR*}$ a functor and $\Sigma \in |\mathsf{Sign}^\mathsf{L}|$, then for all $\mathcal{M} \in |\mathbf{Mod}^\mathsf{ETR*} (\mathbf{RelTh} (\Sigma))|$ there exists $\mathcal{M}' \in |\mathbf{Mod}^\mathsf{sETR*} (\mathbf{RelTh} (\Sigma))|$ such that for all $\alpha \in \mathbf{Sen}^\mathsf{ETR*} \circ \mathbf{Sign} (\mathbf{RelTh} (\Sigma))$, if $\mathcal{M} \models^\mathsf{ETR*}_{\mathbf{Sign} (\mathbf{RelTh} (\Sigma))} \alpha$ then $\mathcal{M}' \models^\mathsf{sETR*}_{\mathbf{Sign} (\mathbf{RelTh} (\Sigma))} \alpha$.
\end{lemma}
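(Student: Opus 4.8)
The plan is to build $\mathcal{M}'$ in two moves: first shrink $\mathcal{M}$ to an $\mathsf{ETR*}$-model of cardinality at most $\omega_1$, and then \emph{rename} the points of that small model so that its base set becomes a subset of $|\mathbf{States}^{\sf L}(\Sigma)|$. The whole argument rests on the observation recorded just before the lemma, namely that an $\mathsf{ETR*}$-sentence can only speak about the relational structure of a proper closure relation algebra and never about the identity of the elements of its base set; hence a bijective renaming of the base set preserves $\mathsf{ETR*}$-satisfaction, and nothing is lost by insisting that the base set live inside $|\mathbf{States}^{\sf L}(\Sigma)|$.

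In detail, let $\Sigma' = \mathbf{Sign}(\mathbf{RelTh}(\Sigma))$ and let $T = \setof{\alpha \in \mathbf{Sen}^{\mathsf{ETR*}}(\Sigma')}{\mathcal{M} \models^{\mathsf{ETR*}}_{\Sigma'} \alpha}$ be the $\mathsf{ETR*}$-theory of $\mathcal{M}$; since $\mathcal{M} \in |\mathbf{Mod}^{\mathsf{ETR*}}(\mathbf{RelTh}(\Sigma))|$, $T$ contains the axioms of $\mathbf{RelTh}(\Sigma)$. By Theorem~\ref{etr*-is-logic}, after eliminating the complex relational operators via Axioms~$1$--$10$, $T$ may be regarded as a set of $(\omega_1,\omega,\omega,\omega)$-sentences; it is satisfiable (by $\mathcal{M}$), so by Corollary~\ref{coro-skolem-lowenheim} it has a model $\mathcal{N}_0$ whose base set $V$ has cardinality at most $\omega_1$. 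I would then replace the underlying algebra of $\mathcal{N}_0$ by the full proper closure relation algebra over $V$, obtaining, by Corollary~\ref{coro:full}, a full model $\mathcal{N}$ over $V$ with $\mathcal{N} \models^{\mathsf{ETR*}}_{\Sigma'} T$; in particular $\mathcal{N} \in |\mathbf{Mod}^{\mathsf{ETR*}}(\mathbf{RelTh}(\Sigma))|$.

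The second move needs a lower bound on $|\mathbf{States}^{\sf L}(\Sigma)|$. Since $\mathbf{States}^{\sf L}(\Sigma) = \mathbf{Sts}(\mathbf{RSign}(\Sigma),\mathbf{FSign}(\Sigma))$ and the objects of the latter are in bijection with the subsets of $\mathbf{Defs}(\mathbf{RSign}(\Sigma),\mathbf{FSign}(\Sigma))$ --- a denumerably infinite set in every non-degenerate case --- we get $|\mathbf{States}^{\sf L}(\Sigma)| = 2^{\aleph_0} \geq \aleph_1 \geq |V|$. So I would fix an injection $j: V \hookrightarrow |\mathbf{States}^{\sf L}(\Sigma)|$ and take $\mathcal{M}'$ to be the full proper closure relation algebra over $j(V)$ with $R_i^{\mathcal{M}'} = \setof{\pair{j(a)}{j(b)}}{\pair{a}{b} \in R_i^{\mathcal{N}}}$. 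As in the proof of Proposition~\ref{prop:uniquefull}, $j$ extends to an $\mathsf{ETR*}$-isomorphism $\mathcal{N} \to \mathcal{M}'$, and a routine induction on the structure of relational terms and of formulae (Definition~\ref{def:etr*-satisfaction}) shows that isomorphic $\mathsf{ETR*}$-models satisfy exactly the same sentences; hence $\mathcal{M}' \models^{\mathsf{ETR*}}_{\Sigma'} T$. Consequently $\mathcal{M}' \in |\mathbf{Mod}^{\mathsf{ETR*}}(\mathbf{RelTh}(\Sigma))|$, and since $\mathcal{M}'_\mathit{bs} = j(V) \subseteq |\mathbf{States}^{\sf L}(\Sigma)|$ it is a state proper closure relation algebra, i.e.\ $\mathcal{M}' \in |\mathbf{Mod}^{\mathsf{sETR*}}(\mathbf{RelTh}(\Sigma))|$ by Definition~\ref{spcra}. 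Finally, if $\mathcal{M} \models^{\mathsf{ETR*}}_{\Sigma'} \alpha$ then $\alpha \in T$, so $\mathcal{M}' \models^{\mathsf{ETR*}}_{\Sigma'} \alpha$, and as $\models^{\mathsf{sETR*}}_{\Sigma'}$ is by definition the restriction of $\models^{\mathsf{ETR*}}_{\Sigma'}$, $\mathcal{M}' \models^{\mathsf{sETR*}}_{\Sigma'} \alpha$, which is the claim.

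The step I expect to be the crux is the cardinality bookkeeping: one must be certain that the L\"owenheim number of $\mathsf{ETR*}$ --- which is $\aleph_1$, not $\aleph_0$, precisely because of the denumerable infinitary connectives --- does not exceed $|\mathbf{States}^{\sf L}(\Sigma)|$, and this is exactly where both Corollary~\ref{coro-skolem-lowenheim} and the infiniteness of $\mathbf{Defs}(\mathbf{RSign}(\Sigma),\mathbf{FSign}(\Sigma))$ are essential; everything downstream (passing to the full algebra, transporting along $j$, invariance of satisfaction under isomorphism) is routine given the insensitivity of $\mathsf{ETR*}$ to the names of base-set points.
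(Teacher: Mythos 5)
Your proof is correct and follows essentially the same route as the paper's: downward Skolem--L\"owenheim (Coro.~\ref{coro-skolem-lowenheim}) to obtain a model of power at most $\omega_1$, passage to the full proper closure relation algebra (Coro.~\ref{coro:full}), and a bijective relabelling of the base set into $|\mathbf{States}^{\sf L}(\Sigma)|$ justified by the isomorphism-invariance of $\mathsf{ETR*}$-satisfaction (cf.\ Prop.~\ref{prop:uniquefull}). If anything, you are more explicit than the paper on two points it glosses over: you apply the corollary to the full $\mathsf{ETR*}$-theory of $\mathcal{M}$ (which is what the preservation claim actually requires, rather than just satisfiability of $\mathbf{RelTh}(\Sigma)$), and you verify that $|\mathbf{States}^{\sf L}(\Sigma)|$ is large enough to receive the base set via an injection.
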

\begin{proof}
By Coro.~\ref{coro-skolem-lowenheim}, if there exists a model $\mathcal{M} \in |\mathbf{Mod}^\mathsf{ETR*} (\mathbf{RelTh} (\Sigma))|$, then there exists $\mathcal{M}_1 \in |\mathbf{Mod}^\mathsf{ETR*} (\mathbf{RelTh} (\Sigma))|$ whose underlying proper closure relation algebra has a base set $U$ such that $|U| \leq \omega_1$. Then, by Coro~\ref{coro:full}, there exists $\mathcal{M}_2 \in |\mathbf{Mod}^\mathsf{ETR*} (\mathbf{RelTh} (\Sigma))|$ such that:
\begin{inparaenum}[1.]
\item its underlying proper closure relation algebra is full and has the same base set $U$, and
\item for all $\alpha \in \mathbf{Sen}^\mathsf{ETR*} \circ \mathbf{Sign} (\mathbf{RelTh} (\Sigma))$, $\mathcal{M}_1 \models^\mathsf{ETR*}_{\mathbf{Sign} (\mathbf{RelTh} (\Sigma))} \alpha$ implies $\mathcal{M}_2 \models^\mathsf{ETR*}_{\mathbf{Sign} (\mathbf{RelTh} (\Sigma))} \alpha$.
\end{inparaenum}

Finally, by Prop.~\ref{prop:uniquefull}, $\mathcal{M}_2$ is unique up-to isomorphism so we can consider $\mathcal{M}' \in |\mathbf{Mod}^\mathsf{ETR*} (\mathbf{RelTh} (\Sigma))|$, isomorphic to $\mathcal{M}_2$, but whose underlying full proper closure relation algebra has as its base set a subset of $|\mathbf{States}^{\sf L} (\Sigma)|$. Thus, $\mathcal{M}' \in |\mathbf{Mod}^\mathsf{sETR*} (\Sigma)|$.
\end{proof}

\begin{theorem}
\label{thm:complete}
Let $\mathsf{L}$ be a logic, $\mathbf{RelTh}: \mathsf{Sign}^{\sf L} \to \mathsf{Th}^{\sf ETR*}$ a functor and $\Sigma \in |\mathsf{Sign}^\mathsf{L}|$, then:
\begin{itemize}[$-$]
\item \textbf{Sound: } $\mathbf{RelTh} (\Sigma) \vdash^\mathsf{ETR*}_{\mathbf{Sign} (\mathbf{RelTh} (\Sigma))} \alpha$ implies \\
$\mathbf{RelTh} (\Sigma) \models^\mathsf{sETR*}_{\mathbf{Sign} (\mathbf{RelTh} (\Sigma))} \alpha$
\item \textbf{Complete: } $\mathbf{RelTh} (\Sigma) \models^\mathsf{sETR*}_{\mathbf{Sign} (\mathbf{RelTh} (\Sigma))} \alpha$ implies \\
$\mathbf{RelTh} (\Sigma) \vdash^\mathsf{ETR*}_{\mathbf{Sign} (\mathbf{RelTh} (\Sigma))} \alpha$
\end{itemize}
\end{theorem}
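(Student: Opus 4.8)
The plan is to obtain both halves directly from the soundness and completeness of $\mathsf{ETR*}$ itself (Thm.~\ref{etr*-is-logic}, Part~\ref{thm:sound+complete2}) together with the representability result Lemma~\ref{lemma:complete}; no new combinatorial work is needed, since the delicate part has already been discharged there. Throughout, recall that $\models^{\mathsf{sETR*}}_{\mathbf{Sign}(T)}$ is by definition merely the restriction of $\models^{\mathsf{ETR*}}_{\mathbf{Sign}(T)}$ to the subclass $|\mathbf{Mod}^{\mathsf{sETR*}}(T)|$, and that $|\mathbf{Mod}^{\mathsf{sETR*}}(T)| \subseteq |\mathbf{Mod}^{\mathsf{ETR*}}(T)|$ by the corollary preceding this theorem, so that $\mathbf{RelTh}(\Sigma) \models^{\mathsf{sETR*}} \alpha$ unfolds to: every state proper closure relation algebra that models $\mathbf{RelTh}(\Sigma)$ models $\alpha$.

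For \textbf{soundness}, suppose $\mathbf{RelTh}(\Sigma) \vdash^{\mathsf{ETR*}}_{\mathbf{Sign}(\mathbf{RelTh}(\Sigma))} \alpha$. By soundness of $\mathsf{ETR*}$ every $\mathcal{M} \in |\mathbf{Mod}^{\mathsf{ETR*}}(\mathbf{RelTh}(\Sigma))|$ that satisfies $\mathbf{RelTh}(\Sigma)$ also satisfies $\alpha$; a fortiori this holds for every $\mathcal{M}$ in the smaller class $|\mathbf{Mod}^{\mathsf{sETR*}}(\mathbf{RelTh}(\Sigma))|$, which is exactly $\mathbf{RelTh}(\Sigma) \models^{\mathsf{sETR*}}_{\mathbf{Sign}(\mathbf{RelTh}(\Sigma))} \alpha$. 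No obstacle here.

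For \textbf{completeness} I would argue by contraposition. Assume $\mathbf{RelTh}(\Sigma) \not\vdash^{\mathsf{ETR*}}_{\mathbf{Sign}(\mathbf{RelTh}(\Sigma))} \alpha$. By completeness of $\mathsf{ETR*}$ (Thm.~\ref{etr*-is-logic}, Part~\ref{thm:sound+complete2}) we get $\mathbf{RelTh}(\Sigma) \not\models^{\mathsf{ETR*}}_{\mathbf{Sign}(\mathbf{RelTh}(\Sigma))} \alpha$, so there is $\mathcal{M} \in |\mathbf{Mod}^{\mathsf{ETR*}}(\mathbf{RelTh}(\Sigma))|$ with $\mathcal{M} \models^{\mathsf{ETR*}} \mathbf{RelTh}(\Sigma)$ and $\mathcal{M} \not\models^{\mathsf{ETR*}} \alpha$; since $\neg\alpha$ is itself an $\mathsf{ETR*}$-sentence, $\mathcal{M} \models^{\mathsf{ETR*}} \neg\alpha$. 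Now apply Lemma~\ref{lemma:complete} to $\mathcal{M}$: it yields $\mathcal{M}' \in |\mathbf{Mod}^{\mathsf{sETR*}}(\mathbf{RelTh}(\Sigma))|$ such that every $\mathsf{ETR*}$-sentence true in $\mathcal{M}$ is true in $\mathcal{M}'$. Applying this both to the (set of) sentences in $\mathbf{RelTh}(\Sigma)$ and to $\neg\alpha$, we conclude $\mathcal{M}' \models^{\mathsf{sETR*}} \mathbf{RelTh}(\Sigma)$ and $\mathcal{M}' \models^{\mathsf{sETR*}} \neg\alpha$, hence $\mathcal{M}' \not\models^{\mathsf{sETR*}} \alpha$. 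Thus $\mathbf{RelTh}(\Sigma) \not\models^{\mathsf{sETR*}}_{\mathbf{Sign}(\mathbf{RelTh}(\Sigma))} \alpha$, which is the contrapositive of the claimed implication.

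The only point that deserves care — and the genuine content, already packaged inside Lemma~\ref{lemma:complete} via the chain Coro.~\ref{coro-skolem-lowenheim} (downward Skolem--L\"owenheim into a model of power $\le \omega_1$) $\to$ Coro.~\ref{coro:full} (passage to the full proper closure relation algebra preserving truth) $\to$ Prop.~\ref{prop:uniquefull} (uniqueness of the full algebra up to isomorphism, letting one relocate its base set inside $|\mathbf{States}^{\sf L}(\Sigma)|$) — is that restricting attention to base sets contained in $|\mathbf{States}^{\sf L}(\Sigma)|$ does not shrink the model class ``too much'': every $\mathsf{ETR*}$-definable behaviour is still realised by a \emph{state} proper closure relation algebra. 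Were this to fail (e.g.\ if $|\mathbf{Mod}^{\mathsf{sETR*}}(\mathbf{RelTh}(\Sigma))|$ could be empty while $\mathbf{RelTh}(\Sigma)$ has $\mathsf{ETR*}$-models refuting $\alpha$), completeness would break; Lemma~\ref{lemma:complete} is precisely what rules this out. Given that lemma, the remaining step to watch is that it transports $\neg\alpha$, not merely $\alpha$ — which it does, since its statement quantifies over all $\mathsf{ETR*}$-sentences. With those two observations in place the argument closes; nothing else is delicate.
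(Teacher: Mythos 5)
Your proposal is correct and follows essentially the same route as the paper: soundness from the inclusion of state proper closure relation algebras among all $\mathsf{ETR*}$-models (plus soundness of $\mathsf{ETR*}$), and completeness by contraposition, using completeness of $\mathsf{ETR*}$ to obtain a counter-model, passing to $\neg\alpha$, and invoking Lemma~\ref{lemma:complete} to transfer the counter-model into $|\mathbf{Mod}^{\mathsf{sETR*}}(\mathbf{RelTh}(\Sigma))|$. Your explicit remarks that the lemma must transport $\neg\alpha$ and the sentences of $\mathbf{RelTh}(\Sigma)$ only spell out details the paper leaves implicit.
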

\begin{proof}
The \emph{soundness} part of the theorem follows trivially by observing that $|\mathbf{Mod}^\mathsf{sETR*} (\mathbf{RelTh} (\Sigma))| \subset |\mathbf{Mod}^\mathsf{ETR*} (\mathbf{RelTh} (\Sigma))|$.

To prove \emph{completeness} lets assume that $\alpha$ is not provable from $\mathbf{RelTh} (\Sigma)$ (i.e., $\mathbf{RelTh} (\Sigma) \vdash^\mathsf{ETR*}_{\mathbf{Sign} (\mathbf{RelTh} (\Sigma))} \alpha$). Then, there exists $\mathcal{M} \in |\mathbf{Mod}^\mathsf{ETR*} (\mathbf{RelTh} (\Sigma))|$ such that $\mathcal{M} \models^\mathsf{ETR*}_{\mathbf{Sign} (\mathbf{RelTh} (\Sigma))} \alpha$ does not hold. Therefore, by Def.~\ref{def:etr*-satisfaction}, $\mathcal{M} \models^\mathsf{ETR*}_{\mathbf{Sign} (\mathbf{RelTh} (\Sigma))} \neg\alpha$ and, by Lemma~\ref{lemma:complete}, there exists $\mathcal{M}' \in |\mathbf{Mod}^\mathsf{sETR*} (\mathbf{RelTh} (\Sigma))|$ such that $\mathcal{M}' \models^\mathsf{sETR*}_{\mathbf{Sign} (\mathbf{RelTh} (\Sigma))} \neg\alpha$ and, consequently, $\mathcal{M}' \models^\mathsf{sETR*}_{\mathbf{Sign} (\mathbf{RelTh} (\Sigma))} \alpha$ does not hold and, consequently, $\mathbf{RelTh}(\Sigma) \models^\mathsf{sETR*}_{\mathbf{Sign} (\mathbf{RelTh} (\Sigma))} \alpha$ does not hold, either.
\end{proof}

The previous theorem guaranties that given a logic $\mathsf{L}$ and a functor $\mathbf{RelTh}: \mathsf{Sign}^{\sf L} \to \mathsf{Th}^{\sf ETR*}$, for all $\Sigma \in |\mathsf{Sign}^\mathsf{L}|$, the entailment relation $\vdash^\mathsf{ETR*}_{\mathbf{Sign} (\mathbf{RelTh} (\Sigma))} \subseteq 2^{\mathbf{Sen}^\mathsf{ETR*} \circ \mathbf{Sign} (\mathbf{RelTh} (\Sigma))} \times \mathbf{Sen}^\mathsf{ETR*} \circ \mathbf{Sign} (\mathbf{RelTh} (\Sigma))$ provides a sound and complete calculus for reasoning about the formal properties of the class of Kripke models over states from $\mathbf{States}^\mathsf{L} (\Sigma)$. 

Let ${\sf L}$ be a logic and $\mathbf{RelTh}: \mathsf{Sign}^{\sf L} \to \mathsf{Th}^{\sf ETR*}$ a functor, then the functor characterising the category of relational structures for ${\sf L}$ is $\mathbf{RelStr}^{\sf L} = \mathbf{Mod}^\mathsf{sETR*} \circ \mathbf{RelTh}$.

\begin{example}[Relational structure of first order dynamic logic models]
\label{relth-fodl}
Let $\<\mathsf{Sign}^{\sf FODL}, \mathbf{Sen}^{\sf FODL}\>$ the language of first order dynamic logic, $\Sigma = \<C, F, P, A\> \in |\mathsf{Sign}^{\sf FODL}|$ and $X$ be a countable set of individual variable symbols. Let us assume that atomic programs, denoted by symbols in $\Sigma$, are required to be total and functional (i.e., deterministic); then we define $\mathbf{RelTh}(\Sigma) = \<\Sigma^{\mathit{Rel}}, \Gamma^{\mathit{Rel}}\>$ where:
$$
\begin{array}{rcl}
\Sigma^{\mathit{Rel}} & = & \<\{A_a\}_{a \in A}\>\\
\Gamma^{\mathit{Rel}} & =  & \{(\forall x,y)(x\ (A_a\acompo\aconv{A_a})\ameet\aid\ y \iff x\ \aid\ y)\}_{a \in A} \cup\\
                                     &     & \qquad \mbox{[Atomic programs are total]}\\
                                     &     & \{(\forall x, y)(x\ A_a;\aconv{A_a} \ y \implies x\ \aid\ y)\}_{a \in A} \\
                                     &     & \qquad \mbox{[Atomic programs are functional (i.e. deterministic) ]}
                                     \end{array}
$$
Notice that further restriction on the behaviour of atomic actions, like having a precondition and a postcondition, must be specified by means of specific formulae of the form $\mathit{pre}_a \implies [A_a]\mathit{post}_a$, where $a \in A$, and it is the satisfaction relation the key element in ensuring the appropriateness of the relational interpretations.
\end{example}

Finally, the key element in completing the definition of our framework is the definition of the class of models associated to a relational structure. Let $\mathsf{L} = \<\mathsf{Sign}^{\sf L}, \mathbf{Sen}^{\sf L}\>$ be a language with state sublanguage $\<\mathbf{StSen}, \mathbf{RSign}, \mathbf{FSign}, \rho^{Sen}\>$ and whose class of relational structures is given by the functor $\mathbf{RelTh}: \mathsf{Sign}^{\sf L} \to \mathsf{Th}^{\sf ETR*}$ then, models for a signature $\Sigma \in |\mathsf{Sign}^{\sf L}|$ are defined as a category of structures $\mathsf{Struct}^{\sf L}_\Sigma$. In general, such a class of structures acting as models for ${\sf L}$-formulae will be closely related to the relational structures identified by the functor $\mathbf{RelStr}$ and, in most cases, it requires some form of structuring of the states over which the relational structure is defined. Among the plethora of logical formalisms for software specification, some prominent examples for exposing such a diversity of notions of model are:
\begin{inparaenum}[a)]
\item \emph{Linear Temporal Logic}, whose semantics is given in terms of infinite sequences of states, reflecting linear paths in the relational structure (see \cite[pp.~52]{pnueli:tcs-13_1}),
\item \emph{Temporal Logic}, whose semantics is given in terms of pairs consisting of an infinite sequence of states and a natural number, reflecting a linear path in the relational structure and a pointer to a specific position in it (see \cite[Sec~3.3]{manna91}),
\item \emph{Computational Tree Logic} and \emph{Dynamic Logic}, whose semantics is defined over a specific state of the relational structure (see \cite[pp.~166]{benari:acm-sigplan-sigact81} and  \cite[pp.~287]{fischer:stoc77}, respectively), and 
\item \emph{Computational Tree Logic}$*$, whose semantics is given, heterogeneously, in terms of states and linear paths in the relational structure, depending on whether the specific formula is a state formula, or a path formula (see \cite[Sec.~2.3]{emerson:jcss-30_1}).
\end{inparaenum}

Such a diversity prevents us from attempting a confinement of the notion of model, sending us in a direction similar to the one chosen by Meseguer in his formalisation of \emph{Proof Calculus} \cite[Def.~12]{meseguer:lc87}, where proofs are only required to be some form of structure organised as a category. In our case, this last requirement, will be specially useful because model morphisms are of utmost importance for enabling compositional semantics, where the semantics of composite specifications is given as a combination of the semantics of the compounds.

In many cases, like the ones mentioned above, such structures are containers of states whose specific properties can be expressed as \emph{generic} datatypes (see \cite[Sec.~2.4]{backhouse:afp98} for a lightweight introduction to the topic or \cite{fokkinga:phdthesis,fokkinga:mscs-6_1,hoogendijk:phdthesis} for complete presentation including all the formalities). In just a few words, a generic datatype is the fixpoint of a pattern functor that relates two categories in which the objects (resp. morphisms) of the target one are arrangements (according to the structure imposed by the functor) of objects (resp. morphisms) of the source one.

\begin{example}[Models for first order dynamic logic]
\label{relstr-fodl}
Let $\<\mathsf{Sign}^{\sf FODL}, \mathbf{Sen}^{\sf FODL}\>$ be the language of ${\sf FODL}$, $\Sigma \in |\mathsf{Sign}^{\sf FODL}|$, $X$ be a countable set of individual variable symbols and $\mathbf{RelTh}(\Sigma) = \<\Sigma^{\mathit{Rel}}, \Gamma^{\mathit{Rel}}\>$ defined as in Ex.~\ref{relth-fodl}. 

As we shown Def.~\ref{spcra}, $\mathbf{RelStr} (\Sigma)$ is the category whose objects are those state proper closure relation algebras over the signature $\Sigma^{\mathit{Rel}}$ satisfying $\Gamma^{\mathit{Rel}}$ (representing Kripke frames), and whose morphisms are certain proper closure relation algebras homomorphisms between state proper closure relation algebras (representing bounded morphisms between Kripke frames). Finally, $\mathsf{Struct}^{\sf FODL}_\Sigma = \<\mathcal{O}, \mathcal{A}\>$ where:
\begin{itemize}[$-$]
\item $\mathcal{O} = \left\{\<I, \mathcal{M}, s\> \ {\Big |}\ \in |\mathbf{Ints} (\Sigma)|, \mathcal{M} \in |\mathbf{RelStr} (\Sigma)|,  s \in |\mathbf{States} (\Sigma)|_{\mathcal{M}_\mathit{bs}}|\right\}$, and

\item $\mathcal{A} = \left\{\<\sigma, \gamma_h, h\>: \<I, \mathcal{M}, s\> \to \<I', \mathcal{M}', s'\>\ |\ \sigma: I \to I' \in ||\mathbf{Ints} (\Sigma)||, \right.$

$\left. \gamma_h:\mathcal{M} \to \mathcal{M}' \in ||\mathbf{RelStr} (\Sigma)||, h: s \to s' \in ||\mathbf{States} (\Sigma)|_{\mathcal{M}_\mathit{bs}}|| \right\} $
\end{itemize}
Note that we are implicitly saying that $h$ is a function between the base sets of $\mathcal{M}$ and $\mathcal{M}'$, respectively, satisfying the conditions of Def.~\ref{spcra}.
\end{example}

Disregarding the specific features of a logic, by the way in which it's language was defined (i.e., by means of it's syntax $\<\mathsf{Sign}, \mathbf{Sen}\>$, and it's state sublanguage $\<\mathbf{StSen}, \mathbf{RSign}, \mathbf{FSign}, \rho^{Sen}\>$), the satisfaction relation can be split in two, depending on whether the formula under interpretation is a state formula or not. In the case of state formula, its satisfaction is only subject to the combination of:
\begin{inparaenum}[1)]
\item the interpretation, proving meaning to the rigid symbols. and 
\item some notion of current state of the model, providing meaning to flexible symbols.
\end{inparaenum}

The next definition provides an homogeneous definition of satisfaction for formulae from the state sublanguage of a logic.

\begin{definition}[Satisfaction relation for state formulae]
\label{def:satisfiability}
Let $\<\mathsf{Sign}^{\sf L}, \mathbf{Sen}^{\sf L}\>$ be a language of $\mathsf{L}$, with state sublanguage $\<\mathbf{StSen}, \mathbf{RSign}, \mathbf{FSign}, \rho^{Sen}\>$. Let $\Sigma \in |\mathsf{Sign}^{\sf L}|$ and $\alpha \in \mathbf{StSen} (\Sigma)$, then $\models^{\sf L}_\Sigma \subseteq |\mathbf{Ints} (\Sigma) \times \mathbf{States} (\Sigma)| \times \mathbf{StSen}^{\sf L}(\Sigma)$ is defined as:
\begin{center}
$\<I, s\> \models^{\sf L}_\Sigma \alpha$ iff $T \vdash^{\sf Eq}_{(\mathbf{RSign}+\mathbf{FSign})(\Sigma)} \rho^{\mathit{Sen}}_\Sigma (\alpha)$
\end{center} 
\noindent, where the pair of morphisms $\<\mathit{in}_{l}: I \to T, \mathit{id}_{(\mathbf{RSign}+\mathbf{FSign})(\Sigma)}: s \to T\>$ is a pushout for $\<id_{\mathbf{RSign}(\Sigma)}:\<\mathbf{RSign}(\Sigma), \emptyset\> \to I, \mathit{in}_{l}: \<\mathbf{RSign}(\Sigma), \emptyset\> \to s\>$.
\end{definition}

The following example shows how the previous definition can be extended to obtain the satisfaction relation of first order dynamic logic.

\begin{example}[Satisfaction relation for first order dynamic logic]
\label{ex:satisfaction}
Let $\<\mathsf{Sign}^{\sf FODL}, \mathbf{Sen}^{\sf FODL}\>$ be the language of ${\sf FODL}$, with state sublanguage $\<\mathbf{StSen}, \mathbf{RSign}, \mathbf{FSign}, \rho^{Sen}\>$. Let $\Sigma \in |\mathsf{Sign}^{\sf FODL}|$, $\varphi \in \mathbf{StSen} (\Sigma)$, $\alpha, \beta \in \mathbf{Sen} (\Sigma)$ and $\<I, \mathcal{M}, s\> \in |\mathbf{Mod}^{\sf FODL}(\Sigma)|$, then we define $\models^{\sf FODL}_\Sigma \subseteq \mathbf{Mod}^{\sf FODL}(\Sigma) \times \mathbf{Sen}^{\sf FODL}(\Sigma)$ in the following way:
\[
\begin{array}{l}
\<I, \mathcal{M}, s\> \models^{\sf FODL}_\Sigma \varphi \mbox{ iff } \<I, s\> \models^{\sf FODL}_\Sigma \varphi \\
\<I, \mathcal{M}, s\> \models^{\sf FODL}_\Sigma \neg\alpha \mbox{ iff } \<I, \mathcal{M}, s\> \models^{\sf FODL}_\Sigma \alpha \mbox{ does not hold} \\
\<I, \mathcal{M}, s\> \models^{\sf FODL}_\Sigma \alpha \lor \beta \mbox{ iff } \<I, \mathcal{M}, s\> \models^{\sf FODL}_\Sigma \alpha \mbox{ or } = \<I, \mathcal{M}, s\> \models^{\sf FODL}_\Sigma \beta \\
\<I, \mathcal{M}, s\> \models^{\sf FODL}_\Sigma (\exists x)\alpha \mbox{ iff } \mbox{there exists $s' \in |\mathbf{States} (\Sigma)|_{\mathcal{M}_\mathit{bs}}|$ such that if:}\\
- \mbox{the pair $\<\mathit{in}_{l}: I \to T, \mathit{id}_{(\mathbf{RSign}+\mathbf{FSign})(\Sigma)}: s \to T\>$ is} \\
\mbox{the pushout for $\<id_{\mathbf{RSign}(\Sigma)}:\<\mathbf{RSign}(\Sigma), \emptyset\> \to I,  \mathit{in}_{l}: \<\mathbf{RSign}(\Sigma), \emptyset\> \to s\>$, and}\\
- \mbox{the pair $\<\mathit{in}_{l}: I \to T', \mathit{id}_{(\mathbf{RSign}+\mathbf{FSign})(\Sigma)}: s' \to T'\>$ is} \\
\mbox{the pushout for $\<id_{\mathbf{RSign}(\Sigma)}:\<\mathbf{RSign}(\Sigma), \emptyset\> \to I, \mathit{in}_{l}: \<\mathbf{RSign}(\Sigma), \emptyset\> \to s'\>$, }\\
\mbox{then: } \mbox{for all $\beta \in \mathbf{Sen}^{\sf Eq}(\mathbf{RSign}(\Sigma)+(\mathbf{FSign}(\Sigma)/\{x\}))$,} \\
\mbox{$T \vdash^{\sf Eq}_{(\mathbf{RSign}+\mathbf{FSign})(\Sigma)} \mathbf{Sen}^{\sf Eq}(\varphi)(\beta)$ iff $T' \vdash^{\sf Eq}_{(\mathbf{RSign}+\mathbf{FSign})(\Sigma)} \mathbf{Sen}^{\sf Eq}(\varphi)(\beta)$}\\
\mbox{such that $\varphi$ is the identity over the signature $\mathbf{RSign}(\Sigma)+(\mathbf{FSign}(\Sigma)/\{x\})$, }\\
\mbox{mapped to the signature $\mathbf{RSign}(\Sigma)+\mathbf{FSign}(\Sigma)$, and $\<I, \mathcal{M}, s'\> \models^{\sf FODL}_\Sigma \alpha$ }\\
\<I, \mathcal{M}, s\> \models^{\sf FODL}_\Sigma \<P\>\alpha \mbox{ iff } \mbox{there exists $s' \in |\mathbf{States} (\Sigma)|_{\mathcal{M}_\mathit{bs}}|$ such that: } \\
\hfill  \<s, s'\> \in m_\mathcal{M} (P) \mbox{ and } \<I, \mathcal{M}, s'\> \models^{\sf FODL}_\Sigma \alpha
\end{array}
\]
\noindent where $m_\mathcal{M}$ is defined as follows:
\[
\begin{array}{l}
m_\mathcal{M} (P) = P^\mathcal{M} \mbox{, for all $P \in \Sigma$}\\
m_\mathcal{M} (Q + Q') = m_\mathcal{M} (Q) + m_\mathcal{M} (Q')\\
m_\mathcal{M} (Q ; Q') = m_\mathcal{M} (Q) \circ m_\mathcal{M} (Q')\\
m_\mathcal{M} (\varphi?) = \setof{\<s, s\>}{\<I, \mathcal{M}, s\> \models^{\sf FODL}_\Sigma \varphi}\\
m_\mathcal{M} (Q^*) = \(m_\mathcal{M} (Q)\)^*
\end{array}
\]
\end{example}

\subsection{Relational models for propositional and higher order languages}
\label{sec:hoeq}
As we mentioned before, our main interest is providing a modular approach for characterising relational models over concrete clases of values. In the previous sections we confined ourselves to the case of first order states; this is the reason why we choose equational logic as the language for characterising them. From this point of view, propositional logics such as propositional dynamic logic \cite{harel00}, linear temporal logics \cite{pnueli:tcs-13_1}, computational tree logic \cite{pnueli:ieee-focs77,benari:acm-sigplan-sigact81}, computational tree logic star \cite{emerson:jacm-33_1}, are considered to be $0$-order languages, as they do not incorporate any notion of term; $1$-order languages will be those that have a set of terms interpreted over elements from a certain domain, $2$-order languages will be those that include terms that are interpreted over elements from a domain, but also terms that are interpreted over sets of these elements, that is, $1$-order predicates and so on. This is done in accordance with the type theory presented in \cite[Vol.~I,~Ch.~II]{whitehead27} providing syntax for terms of all finite orders.

Next, we define higher-order equational logic \cite{meinke:tcs-100_2} as an extension of the definition of equational logic given in Def.~\ref{eq}. A detailed presentation of higher-order logic and its logical properties can be found in \cite{vanbenthem:hlfcs83}. Higher-order equational logic, as it is presented in the next definition, can be proved equipolent to \cite[Defs.~1.1,~1.2~and~1.3]{meinke:tcs-100_2} and also to the functional reduct of Van Benthem's presentation of higher-order logic based on structured types for functions \cite{vanbenthem:hlfcs83}.

Let $\mathit{order}: \bigcup_{i=0}^{n} \Sigma_i \times \NAT \to \NAT$ be a functions such that for every $0 \leq i \leq n$ if symbol $f \in \Sigma_i$, and for all $0 \leq j \leq \mathit{ar}(f)$, $\mathit{order}(f, j) = k$ if $k < i$ and we expect the $j$-th. argument of $f$ to be a term in $\mathit{Term}(\Sigma_k)$.

\begin{definition}[Higher-order Equational Logic]
\label{hoeq}
The language of \emph{Higher-order Equational Logic} is a structure $\langle {\sf Sign^{HOEq}}, {\bf Sen}^{\sf HOEq} \rangle$ (denoted ${\sf HOEq}$) such that: 
\begin{itemize}[$-$]
\item ${\sf Sign^{HOEq}} = \langle \mathcal{O}, \mathcal{A} \rangle$ where:
\begin{itemize}[$-$]
\item $\mathcal{O} = \left\{ \bigcup_{i=0}^{n} \Sigma_i \ {\Big |}\ \Sigma_0 = \langle \emptyset, \emptyset, \{P_k^0\}_{k \in K_0} \rangle \right.$

\hfill $\left. \Sigma_n = \langle \{C_j^n\}_{j \in J_n}, \{f^n_i\}_{i \in I_n}, \{P_k^n\}_{k \in K_n} \rangle \right\}$, and
\item $\mathcal{A} = \left\{ \{\sigma_n\}_{n \leq n'}: \bigcup_{i=0}^{n} \Sigma_i \to \bigcup_{i=0}^{n'} \Sigma'_i \ {\Big |}\ \right.$

\hfill $\left. \mbox{for all } 0 \leq i \leq n \in \NAT,\\ \sigma_i: \Sigma_i \to \Sigma'_i \in ||{\sf Sign^{Eq}}|| \right\}$;
\end{itemize}

\item ${\bf Sen}^{\sf HOEq}: \mathsf{Sign^{HOEq}} \to \mathbf{Set}$ is defined as follows:
\begin{itemize}[$-$]
\item let $\Sigma = \bigcup_{i=0}^{n} \Sigma_i \in |{\sf Sign^{HOEq}}|$, such that $\Sigma_i = \<C_i, F_i, P_i\>$ then for all $0 < i \leq n$, $\mathit{Term}(\Sigma_i)$ as the smallest set satisfying:
\begin{enumerate}[1)]
\item for all $1 \leq i \leq n$, $C_i \subseteq \mathit{Term}(\Sigma_i)$, and 
\item if $f \in F_i$ and for all $0 \leq j \leq \mathit{ar}(f)$, $t_j \in \mathit{Term}(\Sigma_{\mathit{order}(f, j)})$ then $f (t_1, \ldots,  t_{\mathit{ar}(f)}) \in \mathit{Term} (\Sigma_i)$, and
\end{enumerate}
${\bf Sen}^{\sf HOEq}(\Sigma)$ is the set:
\[
\begin{array}{l}
\left\{ t = t' \ | \ t, t' \in \mathit{Term}(\Sigma_i) \mbox{, for some } 0 \leq i \leq n \right\} \cup\\
\left\{ P(t_1, \ldots, t_{\mathit{ar}(P)}) \ | \ t_i \in \mathit{Term}(\Sigma_{\mathit{order}(P, i)}) \mbox{, for all } 0 \leq i \leq \mathit{ar}(P) \right\}
\end{array}
\]
\item let $\Sigma = \bigcup_{i=0}^{n} \Sigma_i \in |{\sf Sign^{HOEq}}|$, such that for all $0 \leq i \leq n$, $\Sigma_i = \<C_i, F_i, P_i\>$ and $\Sigma' = \bigcup_{i=0}^{n'} \Sigma'_i \in |{\sf Sign^{HOEq}}|$, such that for all $0 \leq i \leq n'$, $\Sigma'_i = \<C'_i, F'_i, P'_i\>$, if $\sigma = \{\sigma_i\}_{0 \leq i \leq n}: \Sigma \to \Sigma' \in ||{\sf Sign^{HOEq}}||$ such that for all $0 \leq i \leq n$, $\sigma_i = \langle {\sigma_C}_i, {\sigma_F}_i, {\sigma_P}_i \rangle$, then we define $\sigma^*: \mathit{Term}(\Sigma) \to \mathit{Term}(\Sigma')$ as follows:
\begin{itemize}[$-$]
\item for all $0 \leq i \leq n$, $c \in C_i$, $\sigma^* (c) = {\sigma_C}_i (c)$, and
\item for all $0 \leq i \leq n$, $f (t_1, \ldots, t_{\mathit{ar}(f)}) \in Term(\langle C_i, F_i, P_i \rangle)$, 

\hfill $\sigma^* (f (t_1, \ldots, t_{\mathit{ar}(f)})) = {\sigma_F}_i (f) (\sigma^* (t_1), \ldots, \sigma^* (t_{\mathit{ar}(f)}))$.
\end{itemize}
Then, ${\bf Sen}^{\sf HOEq}(\sigma)$ is defined as follows:
\begin{itemize}[$-$]
\item ${\bf Sen}^{\sf HOEq}(\sigma)(t = t') = \sigma^* (t) = \sigma^* (t')$, 
\item ${\bf Sen}^{\sf HOEq}(\sigma)(P (t_1, \ldots, t_{\mathit{ar}(P)})) = {{\sigma_P}_k} (P) (\sigma^* (t_1), \ldots, \sigma^* (t_{\mathit{ar}(P)}))$.
\end{itemize}
\end{itemize}
\end{itemize}
We assume the existence of a function $\mathit{order}: \bigcup_{i=0}^{n} \Sigma_i \times \NAT \to \NAT$ such that for every $0 \leq i \leq n$ if symbol $f \in \Sigma_i$, and for all $0 \leq j \leq \mathit{ar}(f)$, $\mathit{order}(f, j) = k$ if $k < i$ and we expect the $j$-th. argument of $f$ to be a term in $\mathit{Term}(\Sigma_k)$.
\end{definition}

Next we define the class of models over which we will interpret the formulae presented in Def.~\ref{hoeq} and a satisfaction relation between these models and these formulae. Note that models must be constructed over a domain able to interpret terms of any possible order; to do this, we consider the set-theoretical superstructure $S^\# = \bigcup_n V^n(S)$ where $V^1 (S) = S$ and $V^{n+1}(S) = V^n(S)\cup\mathcal{P}(V^n(S))$. For the sake of simplifying the next definition we consider the following conventions: $V^0 (S) = \emptyset$ and $V^{[\![i]\!]} = V^i / V^{i-1}$.

\begin{definition}[Interpretations and models for higher-order equational logic]
\label{def:hoeq-models}
Let ${\sf HOEq} = \<\mathsf{Sign}^{\sf HOEq}, \mathbf{Sen}^{\sf HOEq}\>$ be the language from Def.~\ref{hoeq} and $\Sigma \in |\mathsf{Sign}^{\sf HOEq}|$ be the signature $\bigcup_{i=0}^{n} \<C_i, F_i, P_i\>$; then, an \emph{interpretation} of $\Sigma$ is a structure of the form $\<S^\#, \<\emptyset, \emptyset, \overline{P_0}\>, \<\overline{C_1}, \overline{F_1}, \overline{P_1}\>, \ldots, \<\overline{C_n}, \overline{F_n}, \overline{P_n}\>\>$ such that:
\begin{itemize}[$-$]
\item for all $p \in P_0$, there is a $\overline{p} \in \overline{P_0}$ such that $\overline{p} \in \{\mathbf{true}, \mathbf{false}\}$,
\item for all $1 \leq i \leq n$: 
\begin{itemize}[$-$]
\item for all $c \in C_i$, there is a $\overline{c} \in \overline{C_i}$ such that $\overline{c} \in V^{[\![i]\!]} (S)$,
 \item for all $f \in F_i$, there is a $\overline{f} \in \overline{F_i}$ such that: $\overline{f}$ is in
 
 $\left[V^{[\![\mathit{order}(f, 1)]\!]} (S) \times \ldots \times V^{[\![\mathit{order}(f, \mathit{ar}(f))]\!]} (S) \to V^{[\![i]\!]}(S)\right]$, and 
\item for all $p \in P_i$, there is a $\overline{p} \in \overline{P_i}$ such that: $\overline{p}$ is in 

$V^{[\![\mathit{order}(p, 1)]\!]} (S) \times \ldots \times V^{[\![\mathit{order}(p, \mathit{ar}(p))]\!]} (S)$.
\end{itemize}
\end{itemize}
We will say that an interpretation is \emph{extensional} if and only if it satisfies the \emph{extensionality axiom}: \\
$$(\forall x, x')((\forall y)(y \in x \iff y \in x') \implies x = x').$$ 

Let $V = \{V_i\}_{1 \leq i \leq n}$ be sets of variable symbols, then an \emph{assignment} for $V$ is a family of mapping $v_i: V_i \to V^{[\![i]\!]}(S)$.

A structure $\<\mathcal{M}, v\>$ is a \emph{model} for $\Sigma$ and $V$ if and only if $\mathcal{M}$ is an interpretation $\Sigma$ and $\{v_i\}_{1 \leq i \leq n}$ is an assignment for $V$.

Let $V = \{V_i\}_{1 \leq i \leq n}$ be a family of sets of variable symbols then $\mathbf{Mod}^{\sf HOEq} (\Sigma)$ is the class of extensional models for $\Sigma$ and $V$. If $\sigma: \Sigma \to \Sigma' \in ||\mathsf{Sign}^{\sf HOEq}||$ then $\mathbf{Mod}^{\sf HOEq} (\sigma)$ is the natural reduct operation on $\Sigma$'-models to $\Sigma$-models.
\end{definition}

\begin{definition}[Satisfaction relation for higher-order equational logic]
\label{def:hoeq-satisfaction}
Let ${\sf HOEq} = \<\mathsf{Sign}^{\sf HOEq}, \mathbf{Sen}^{\sf HOEq}\>$ be the language from Def.~\ref{hoeq}, if $\Sigma = \bigcup_{i=0}^{n} \<C_i, F_i, P_i\> \in |\mathsf{Sign}^{\sf HOEq}|$, $V = \{V_i\}_{1 \leq i \leq n}$ be a family of sets of variable symbols and $\<\mathcal{M}, v\> \in |\mathbf{Mod}^{\sf HOEq} (\Sigma)|$ then, we define $m_{\<\mathcal{M}, v\>}: \mathit{Term}(\Sigma) \to |\mathcal{M}|$ in the following way: for all $1 \leq i \leq n$
\begin{itemize}[$-$]
\item if $t \in V_i$, then $m_{\<\mathcal{M}, v\>} (t) = v_i (t)$, 
\item if $t \in C_i$, then $m_{\<\mathcal{M}, v\>} (t) = \overline{t}$, and 
\item if $t = f(t_1, \ldots, t_{\mathit{ar}(f)})$ with $f \in F_i$, then

\hspace{0.5in}$m_{\<\mathcal{M}, v\>} (t) = \overline{f}(m_{\<\mathcal{M}, v\>} (t_1), \ldots, m_{\<\mathcal{M}, v\>} (t_{\mathit{ar}(f)}))$.
\end{itemize}

We define $\vDash^{\sf HOEq}_\Sigma \subseteq \mathbf{Mod}^{\sf HOEq} (\Sigma) \times \mathbf{Sen}^{\sf HOEq}(\Sigma)$ in the following way:
\begin{itemize}[$-$]
\item if $p \in P_0$, then $\<\mathcal{M}, v\> \vDash^{\sf HOEq}_\Sigma p$ if and only if $\overline{p}$,
\item if $t, t' \in \mathit{Term}(\Sigma)$, then $\<\mathcal{M}, v\> \vDash^{\sf HOEq}_\Sigma t = t'$ if and only if 

$m_{\<\mathcal{M}, v\>} (t) = m_{\<\mathcal{M}, v\>} (t')$,
\item for all $1 \leq i \leq n$, $p \in P_i$, and $t_1, \ldots, t_{\mathit{ar}(p)} \in \mathit{Term}(\Sigma)$, then 

$\<\mathcal{M}, v\> \vDash^{\sf HOEq}_\Sigma p (t_1, \ldots, t_{\mathit{ar}(p)})$ if and only if $\overline{p} (m_{\<\mathcal{M}, v\>} (t_1), \ldots, m_{\<\mathcal{M}, v\>} (t_{\mathit{ar}(p)}))$
\end{itemize}
\end{definition}

Finally, we define an entailment relation for formulae in the language of Def.~\ref{hoeq}.

\begin{definition}[Entailment relation for higher-order equational logic]
\label{def:hoeq-entail}
Let ${\sf HOEq} = \<\mathsf{Sign}^{\sf HOEq}, \mathbf{Sen}^{\sf HOEq}\>$ be the language from Def.~\ref{hoeq}, if $\Sigma \in |\mathsf{Sign}^{\sf HOEq}|$, $\Gamma \subseteq \mathbf{Sen}^{\sf HOEq}(\Sigma)$, $0 \leq i \leq \mathit{order}(\Sigma)$ and $t, t', t'' \in \mathit{Term}(\Sigma_i)$ and $f \in \Sigma$ and for all $1 \leq i \leq \mathit{ar(f)}$, $t_i, t'_i \in \mathit{Term}(\Sigma_{\mathit{order}(f, i)})$, then the proofs in higher-order equational logic is obtained by the application of the following rules:
$$
\begin{array}{ccc}
\mbox{
\AXC{}
\LL{[Ref]}
\UnaryInfC{$t = t$}
\DP
}
&
,
&
\mbox{
\AXC{$t = t'$}
\LL{[Sym]}
\UnaryInfC{$t' = t$}
\DP
}
\end{array}
$$
$$
\mbox{
\AXC{$t = t'$}
\AXC{$t' = t''$}
\LL{[Trans]}
\BinaryInfC{$t = t''$}
\DP
}
$$
$$
\mbox{
\AXC{$\Gamma \cup \{t = t'\}$}
\LL{[Ax]}
\UnaryInfC{$t = t'$}
\DP
}
$$
$$
\mbox{
\AXC{$\Gamma \cup \{f(t) = f'(t)\}$}
\LL{[Ext]}\RL{; for all $t \in \mathit{Term}(\Sigma)$}
\UnaryInfC{$f = f'$}
\DP
}
$$
$$
\mbox{
\AXC{$t_1 = t'_1$}
\AXC{$\cdots$}
\AXC{$t_k = t'_k$}
\LL{[Func]}
\TrinaryInfC{$f(t_1, \ldots, t_k) = f(t_1, \ldots, t_k)$}
\DP
}
$$
Then, if $\alpha \in \mathbf{Sen}^{\sf HOEq}(\Sigma)$, a proof of $\alpha$ from the set of hypotheses $\Gamma$ is a tree-like structure formed by the application of the previous rules and is denoted $\Gamma \vdash^{\sf HOEq}_\Sigma \alpha$.
\end{definition}

\begin{theorem}[Higher-order equational logic]
Let ${\sf HOEq} = \<\mathsf{Sign}^{\sf HOEq}, \mathbf{Sen}^{\sf HOEq}\>$ be the language from Def.~\ref{hoeq}, $\mathbf{Mod}^{\sf HOEq}: {\mathsf{Sign}^{\sf HOEq}}^\op \to \mathsf{Cat}$ be the functor from Def.~\ref{def:hoeq-models}, $\{\vDash^{\sf HOEq}_{\Sigma}\}_{\Sigma \in |\mathsf{Sign}^{\sf HOEq}|}$ the family of consequence relations from Def.~\ref{def:hoeq-satisfaction}, and $\{\vdash^{\sf HOEq}_{\Sigma}\}_{\Sigma \in |\mathsf{Sign}^{\sf HOEq}|}$ the family of entailment relations from Def.~\ref{def:hoeq-entail}, then:
\begin{enumerate}
\item \label{thm:institution} $\< \mathsf{Sign}^{\sf HOEq}, \mathbf{Sen}^{\sf HOEq}, \mathbf{Mod}^{\sf HOEq}, \{\models^{\sf HOEq}_{\Sigma}\}_{\Sigma \in |\mathsf{Sign}^{\sf HOEq}|}\>$ is an institution,
\item \label{thm:entailment} $\< \mathsf{Sign}^{\sf HOEq}, \mathbf{Sen}^{\sf HOEq}, \{\vdash^{\sf HOEq}_{\Sigma}\}_{\Sigma \in |\mathsf{Sign}^{\sf HOEq}|}\>$ is an entailment system, and
\item \label{thm:sound+complete} $\< \mathsf{Sign}^{\sf HOEq}, \mathbf{Sen}^{\sf HOEq}, \mathbf{Mod}^{\sf HOEq}, \{\vdash^{\sf HOEq}_{\Sigma}\}_{\Sigma \in |\mathsf{Sign}^{\sf HOEq}|}, \right.$\\
$\left. \{\models^{\sf HOEq}_{\Sigma}\}_{\Sigma \in |\mathsf{Sign}^{\sf HOEq}|}\>$ is a sound and complete logic.
\end{enumerate}
\end{theorem}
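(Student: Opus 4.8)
The plan is to dispatch the three parts in order, exploiting that \textsf{HOEq} extends the equational logic of Definition~\ref{eq}, so that most of the signature- and sentence-level bookkeeping is inherited and only the higher-order and semantic content needs genuine attention. For Part~\ref{thm:institution}, I would first verify that $\mathsf{Sign}^{\sf HOEq}$ is a category: a morphism is a family $\{\sigma_i\}_{i\le n}$ of $\mathsf{Sign}^{\sf Eq}$-morphisms, composed and identified componentwise, so this reduces to the already available categorical structure of $\mathsf{Sign}^{\sf Eq}$. Functoriality of $\mathbf{Sen}^{\sf HOEq}$ follows from a routine induction on the formation of $\mathit{Term}(\Sigma)$ showing $(\tau\circ\sigma)^\ast=\tau^\ast\circ\sigma^\ast$ and $\mathit{id}^\ast=\mathit{id}$, lifted to equations and predicate atoms. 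For $\mathbf{Mod}^{\sf HOEq}$ I would observe that the reduct along $\sigma$ only re-indexes the interpreting data $\<\overline{C_i},\overline{F_i},\overline{P_i}\>$ and leaves both the superstructure $S^\#$ and the assignment $v$ (which ranges over the fixed variable family $V$, not over the signature) untouched, so contravariant functoriality is immediate. The only substantive point is the $\models$-invariance condition, which I would reduce to the substitution lemma, for every $t\in\mathit{Term}(\Sigma)$,
\[
m_{\<\mathbf{Mod}^{\sf HOEq}(\sigma)(\mathcal{M}'),v\>}(t)\;=\;m_{\<\mathcal{M}',v\>}(\sigma^\ast(t)),
\]
proved by induction on $t$; since $\mathbf{Sen}^{\sf HOEq}(\Sigma)$ contains only equations and atoms — no connectives, no quantifiers — the induction on sentences is then a one-step unfolding of Definition~\ref{def:hoeq-satisfaction}.

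For Part~\ref{thm:entailment}, I would read $\Gamma\vdash^{\sf HOEq}_\Sigma\alpha$ as the existence of a well-founded (though possibly infinitely branching, because of the premise-indexed rule [Ext]) proof tree for $\alpha$ whose leaves are instances of [Ref] or of [Ax] applied to members of $\Gamma$. Reflexivity and monotonicity are then immediate; transitivity follows by grafting: replace every [Ax]-leaf invoking a hypothesis $\phi_i$ by the supplied derivation of $\phi_i$ from $\Gamma$ and observe the result is again a proof tree; and $\vdash$-translation holds because applying $\mathbf{Sen}^{\sf HOEq}(\sigma)$ node-by-node to a proof tree yields a proof tree, as $\sigma^\ast$ commutes with term formation and every rule schema is closed under this translation. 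This part is entirely parallel to the corresponding step for $\mathsf{Eq}$ and $\mathsf{ETR*}$.

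For Part~\ref{thm:sound+complete}, with Parts~\ref{thm:institution} and~\ref{thm:entailment} in hand, only soundness and completeness remain. Soundness is an induction on proof trees: [Ref], [Sym], [Trans] and [Func] (read as the congruence/replacement rule) are valid because $m_{\<\mathcal{M},v\>}$ is a homomorphism and $=$ is interpreted by true equality, [Ax] is trivial, and the infinitary rule [Ext] is sound precisely because every $\mathcal{M}\in|\mathbf{Mod}^{\sf HOEq}(\Sigma)|$ is extensional in the sense of Definition~\ref{def:hoeq-models}. For completeness I would run the term-model construction: assuming $\Gamma\not\vdash^{\sf HOEq}_\Sigma\alpha$, define on each $\mathit{Term}(\Sigma_i)$ the relation $t\equiv t'$ iff $\Gamma\vdash^{\sf HOEq}_\Sigma t=t'$, which by [Ref]/[Sym]/[Trans] is an equivalence and by [Func] a congruence for all operation symbols (and, being order-respecting, never conflates terms of different orders). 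One then has to present the quotient as an actual object of a superstructure $S^\#$: take $S$ to be the order-$1$ quotient and, proceeding up the orders, code each $\equiv$-class of order-$i$ terms as the set of lower-order classes picked out by the derivable atoms, landing it in $V^{[\![i]\!]}(S)$; the resulting interpretation is extensional exactly because [Ext] forces two symbols that are provably equal on all arguments to be identified by $\equiv$. Equipped with the assignment sending each variable to its own class, this interpretation satisfies every sentence of $\Gamma$ and refutes $\alpha$, so $\Gamma\not\models^{\sf HOEq}_\Sigma\alpha$.

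The step I expect to be the main obstacle is this last construction: verifying that the term quotient genuinely embeds, order by order, into a set-theoretic superstructure with the layer structure $V^{[\![i]\!]}(S)$ of Definition~\ref{hoeq}, and that the embedding is extensional, requires threading the coding of higher-order terms carefully through the levels and invoking [Ext] at each order. By contrast, the first-order-flavoured ingredients — the congruence property, the homomorphism property of $m_{\<\mathcal{M},v\>}$, soundness of the finitary rules, and all of the institution and entailment-system bookkeeping — are routine and mirror the treatment of Definition~\ref{eq}.
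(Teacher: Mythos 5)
Your handling of Parts~\ref{thm:institution} and~\ref{thm:entailment} matches the paper, which treats them as routine consequences of Defs.~\ref{hoeq}--\ref{def:hoeq-entail}; the componentwise category structure, the substitution lemma for $\models$-invariance, and the proof-tree manipulations (including the infinitely branching trees forced by [Ext]) are exactly the kind of bookkeeping the paper waves through, and your sketch of them is fine. Part~\ref{thm:sound+complete} is where you and the paper part ways: the paper does not prove soundness and completeness at all but imports them from \cite[Thm.~2.8]{meinke:tcs-100_2}, relying on the claimed equipollence of its presentation with Meinke's higher-order equational logic. Your soundness argument is unproblematic, but your completeness argument has a genuine gap, and it sits precisely at the step you flag as the ``main obstacle''. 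The term-model construction is not carried through: to be a model in the sense of Def.~\ref{def:hoeq-models} the quotient must be realized as an \emph{extensional} interpretation whose order-$i$ carrier is the full layer $V^{[\![i]\!]}(S)$ of a set-theoretic superstructure, with every $f \in F_i$ interpreted as a total function on entire layers, not just on term-denotable elements. Your proposed coding of an order-$i$ class ``as the set of lower-order classes picked out by the derivable atoms'' is not well defined in this syntax: there is no membership or application symbol, so there is no canonical family of atoms determining such a set, and nothing prevents two terms that are \emph{not} provably equal from receiving the same code. If that happens for the terms occurring in $\alpha$, the constructed model satisfies $\alpha$ and the refutation collapses; ensuring injectivity of the coding on non-provably-equal classes (and extending the function interpretations to the full layers compatibly) is exactly the delicate content of a higher-order completeness theorem, not a corollary of [Ext].

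So the honest assessment is: your route is genuinely different from the paper's, but as written it does not close; either you must carry out the superstructure embedding in detail (essentially reproving Meinke's result, and also dealing with the fact that the calculus of Def.~\ref{def:hoeq-entail} has no congruence rule for the predicate symbols, which your term model would need in order to make the derivable atoms a congruence-invariant set), or you should do what the paper does and reduce Part~\ref{thm:sound+complete} to \cite[Thm.~2.8]{meinke:tcs-100_2} after verifying the equipollence of the two presentations.
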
  
\begin{proof}
The proofs of parts~\ref{thm:institution}~and~\ref{thm:entailment} follow directly from Defs.~\ref{hoeq},~\ref{def:hoeq-models},~\ref{def:hoeq-satisfaction}~and~\ref{def:hoeq-entail} and are analogous to the many examples of institution and entailment system definitions in the literature. The proof for part~\ref{thm:sound+complete} follows from \cite[Thm.~2.8]{meinke:tcs-100_2}.
\end{proof}

This concludes the presentation of higher order equational logic as the language in which the notion of state (or world) is formalised so it can be used in the definition of satisfiability of the formulae in the state sublanguage. Still, the same observations apply to the characterisation of the transition systems over the states. We chose a specific relational language with reflexive and transitive closure as it is a perfect fit for the problem of formally characterising binary relations but it is, by no means, the only option available. Also note that the approach works independently of the particular language of choice, as far as the language can be proved to be a complete logic, in the sense of Def.~\ref{logic}.

\subsection{Discussion}
\label{subsec:discussion}
It is natural to raise a discussion regarding the reaches and limitations of the framework presented above. An important issue, revealed by the previous section, is the problem of the appropriateness of a class of algebras as a representation of the intended relational models for a given logical language. The definitions above provide a general and concrete way of formalising the relational models of logical languages by interpreting them over classes of algebras of relations. Still, even when the abstract model theory of a language cannot serve as the means of analysis or refinement of logical descriptions, the question on how concrete and abstract model theory relate to each other is of relatively big importance. Regarding this, one must observe that different classes of relational models correspond to different restrictions in the relations intervening in the definition of the class of algebras; only those who are axiomatisable in {\sf ETR*} can aspire to have a concrete counterpart defined within the framework presented in this paper. For example, in the hypothetical case we had used {\sf ETR} \cite{tarski:jsl-6_3} as a formal language for describing the relational structure of models, then linear temporal logics could be given semantics while dynamic logics could not (see Ex.~\ref{ex:satisfaction}) as the reflexive and transitive closure cannot be axiomatised in a first-order logic, which is the logical structure underlying {\sf ETR}. This limitation of the framework also reveals a positive element pointing at its generality; the selection of languages for describing states and the relational structure of models can be done in a modular way depending on the needs, as far as it is possible to prove a result analogous to Thm.~\ref{thm:complete}.

An aspect of upmost importance to consider is that many logics, among which we find some deontic logics \cite{castro:ictac07} and substructural logics \cite{paoli02}, are given semantics in such a way that the interpretation of modal operators changes depending on the state in which the formula is being evaluated. A logic with this type of feature requires extending the framework by:
\begin{inparaenum}[1.]
\item definiing a relational signature in which it is possible to distinguish rigid symbols and flexible symbols, analogously to the role played by functors $\mathbf{RSign}$ and $\mathbf{FSign}$, and
\item defining a notion of relational state analogous to the role played by functor $\mathbf{Sts}$.
\end{inparaenum}
These aspects will remain as a further line of research, but there are doubts raised by the fact that such a change would require a relational language together with a corresponding class of models with infinite formulae of higher cardinality putting the existence of a sound and complete calculus at risk.

.
\section{Examples}
\label{examples}
In this section we present some additional examples on how this framework provide, concrete classes of models for different modal logics. 

\subsection{Linear temporal logic}
\label{ex:ltl}
Let $\<\mathsf{Sign}^{\sf LTL}, \mathbf{Sen}^{\sf LTL}\>$ be the language of linear temporal logic defined as in \cite{pnueli:tcs-13_1}. It's syntax and some brief considerations about it's semantics are summarised below.

Let $\Sigma = \{p_i\}_{i \in \mathcal{I}}$ a set of propositional flexible symbols, then $\mathit{FormLTL} (\Sigma)$ is the smallest set such that:
\begin{itemize}[$-$] 
\item $\Sigma \subseteq \mathit{FormLTL} (\Sigma)$, and
\item if $\alpha, \beta \in \mathit{FormLTL} (\Sigma)$, then $\{\neg\alpha, \alpha\lor\beta, {\sf X}\ \alpha, \alpha\ {\sf U}\ \beta\} \subseteq \mathit{FormLTL} (\Sigma)$.
\end{itemize}

The rest of the boolean operators are defined as usual (i.e. $\alpha \land \beta = \neg\(\neg\alpha\lor\neg\beta\)$, $\alpha \implies \beta = \neg\alpha \lor \beta$, etc.) and the rest of the ${\sf LTL}$ temporal operators are defined as follows:
\begin{itemize}[$-$]
\item ${\sf F}\ \phi = \text{true}\ {\sf U}\ \phi$: eventually $\phi$ is true,
\item ${\sf G}\ \phi = \neg\( {\sf F}\ \(\neg\phi\)\)$: $\phi$ always remain true,
\item $\phi\ {\sf R}\ \psi = \neg\( \neg\phi\ {\sf U}\ \neg\psi \)$: known as \emph{release}, $\psi$ remains true until and including once $\phi$ becomes true,
\item $\phi\ {\sf W}\ \psi = \(\phi\ {\sf U} \psi\) \lor {\sf G}\ \phi$: known as \emph{weak until}, $\phi$ remains true, either until $\phi$ becomes true, or forever, and
\item $\phi\ {\sf M}\ \psi = \(\phi\ {\sf R} \psi\) \lor {\sf F}\ \phi$: known as \emph{strong release}.
\end{itemize}

The semantics of ${\sf LTL}$ formulae over a signature $\{p_i\}_{i \in \mathcal{I}}$ is given in terms of infinite paths in a Kripke structure $\mathfrak{K} = \<W, W_0, T, \mathcal{L}\>$ such that:
\begin{inparaenum}[a)]
\item $W_0 \subseteq W$
\item $T \subseteq W \times W$, and
\item $\mathcal{L}: \{p_i\}_{i \in \mathcal{I}} \to 2^W$.
\end{inparaenum}
If $\pi = w_0, w_1, \ldots$ is an infinite sequence such that:
\begin{inparaenum}[a)]
\item for all $i \in \NAT$, $w_i \in W$,
\item $w_0 \in W_0$, and
\item for all $i \in \NAT$, $w_i\ T\ w_{i+1}$;
\end{inparaenum}
and $\alpha \in \mathit{FormLTL} (\{p_i\}_{i \in \mathcal{I}})$, then $\alpha$ is satisfied by $\pi$ in $\mathfrak{K}$ (written $\mathfrak{K}, \pi \models \alpha$) is inductively defined as follows:
\[
\begin{array}{l}
\mathfrak{K}, \pi \models p_i \mbox{ iff } \pi[0] \in \mathcal{L} (p_i)\\
\mathfrak{K}, \pi \models \neg\alpha \mbox{ iff } \mathfrak{K}, \pi \models \alpha \mbox{ does not hold}\\
\mathfrak{K}, \pi \models \alpha\lor\beta \mbox{ iff } \mathfrak{K}, \pi \models \alpha \text{ or } \mathfrak{K}, \pi \models \beta \\
\mathfrak{K}, \pi \models {\sf X}\ \alpha \mbox{ iff } \mathfrak{K}, \pi[1..] \models \alpha \\
\mathfrak{K}, \pi \models \alpha\ {\sf U}\ \beta \mbox{ iff } \mbox{there exists $j \in \NAT$ such that: }\\
\hspace{0.7in} \mbox{$\mathfrak{K}, \pi[j..] \models \beta$ and for all $i \leq j$,  $\mathfrak{K}, \pi[i..] \models \alpha$}
\end{array}
\]
\noindent where $\pi[i]$ denotes the $i^{th}$ element of the sequence $\pi$ and $\pi[i..]$ denotes the subsequence of $\pi$ starting in the $i^{th}$ position.

The state sublanguage $\<\mathbf{StSen}, \mathbf{RSign}, \mathbf{FSign}, \rho^{Sen}\>$ for ${\sf LTL}$ is defined as: let $\{p_i\}_{i \in \mathcal{I}} \in |\mathsf{Sign}^{\sf LTL}|$ and $\sigma: \{p_i\}_{i \in \mathcal{I}} \to \{p'_i\}_{i \in \mathcal{I}'} \in ||\mathsf{Sign}^{\sf LTL}||$ such that $\sigma (p_i) = (p'_i)$,
\begin{itemize}[$-$]
\item $\mathbf{StSen} (\{p_i\}_{i \in \mathcal{I}}) = \{p_i\}_{i \in \mathcal{I}}$, $\mathbf{StSen} (\sigma) = \{\<p_i, p'_i\>\}_{i \in \mathcal{I}}$,
\item $\mathbf{RSign} (\{p_i\}_{i \in \mathcal{I}}) = \<\emptyset, \emptyset, \emptyset\>$, $\mathbf{StSen} (\sigma) = \mathit{id}_{\<\emptyset, \emptyset, \emptyset\>}$,
\item $\mathbf{FSign} (\{p_i\}_{i \in \mathcal{I}}) = \<\emptyset, \emptyset, \{P_i\}_{i \in \mathcal{I}}\>$, 

\hfill $\mathbf{StSen} (\sigma) = \{\<P_i, P'_i\>\}_{i \in \mathcal{I}'}$, and
\item $\rho^{Sen}_{\{p_i\}_{i \in \mathcal{I}}} (p_i) = P_i$, for all $i \in \mathcal{I}$.
\end{itemize}

The functors $\mathbf{Ints}$ and $\mathbf{Sts}$ are defined as in Defs.~\ref{def:interpretations}~and~\ref{def:states}, Thus, providing a characterisation of the set of states $\mathbf{States}^{\sf LTL}$.

Let $\Sigma \in |\mathsf{Sign}^{\sf LTL}|$; then, we define $\mathbf{RelTh}(\Sigma) = \<\Sigma^{\mathit{Rel}}, \Gamma^{\mathit{Rel}}\>$ as follows:
$$
\begin{array}{rcl}
\Sigma^{\mathit{Rel}} & = & \<St_0, T\> \\
\Gamma^{\mathit{Rel}} & = & \{\(\forall x, y\)\(x\ St_0\ y \implies x\ \aid\ y\) \\
                                     &     & \qquad\qquad \mbox{[There is a set of initial states]}\\
                                     &     & \ \ \(\forall x, y\)\(x\ \(T \acompo \aunit\) \ameet \aid\ y \iff x\ \aid\ y\) \\
                                     &     & \qquad\qquad \mbox{[$T$ is total]}\\
                                     &     & \ \ \(\forall x, y\)\(x\ \aconv{T}\acompo T\ y \implies x\ \aid\ y\) \\
                                     &     & \qquad\qquad \mbox{[$T$ is functional]} \qquad\qquad\qquad \}\\
\end{array}
$$

Given $\{p_i\}_{i \in \mathcal{I}} \in |\mathsf{Sign}^{\sf LTL}|$, $\mathbf{Mod}^{\sf LTL} (\{p_i\}_{i \in \mathcal{I}}) = \<\mathcal{O}, \mathcal{A}\>$  where\footnote{As we mentioned in previous sections, and for the sake of the compactness of the presentation, we resort to the formalisation of the generic datatype ${\it List}$ (see \cite[pp.~55--56]{backhouse:afp98}).}:
\begin{itemize}[$-$]
\item $\mathcal{O} = \left\{\<I, \mathcal{M}, \pi\>\ {\Big |}\ I \in |\mathbf{Ints} (\Sigma)|, \mathcal{M} \in |\mathbf{RelStr} (\Sigma)| \text{ and } \right.$

\hfill $\left. \pi \in |{\it List}\ \(\mathbf{States} (\Sigma)|_{\mathcal{M}_{bs}}\)| \text{ such that } \mbox{ for all } i \in \NAT, \pi[i]\ T^\mathcal{M}\ \pi[i+1] \right\} $
\item $\mathcal{A} = \left\{\<\sigma, \gamma_h, h\>: \<I, \mathcal{M}, \pi\> \to \<I', \mathcal{M}', \pi'\>\ {\Big |}\ \sigma: I \to I' \in ||\mathbf{Ints} (\Sigma)||, \right.$

\hfill $\left. \gamma_h:\mathcal{M} \to \mathcal{M}' \in ||\mathbf{RelStr} (\Sigma)||, \text{ and } h: \pi \to \pi' \in ||{\it List}\ \(\mathbf{States} (\Sigma)|_{\mathcal{M}_{bs}}\)|| \right\} $
\end{itemize}

Let $\Sigma \in |\mathsf{Sign}^{\sf LTL}|$, we extend $\models^{\sf L}_\Sigma$ from Def.~\ref{def:satisfiability} as follows:
\[
\begin{array}{l}
\<I, \mathcal{M}, \pi\> \models^{\sf LTL}_\Sigma \alpha \mbox{ iff } \<I, \pi[0]\> \models^{\sf LTL}_\Sigma \alpha \\
\hfill \mbox{, for all $\alpha \in \mathbf{StSen} (\Sigma)$}\\
\<I, \mathcal{M}, \pi\> \models^{\sf LTL}_\Sigma \neg\alpha \mbox{ iff } \<I, \mathcal{M}, \pi\> \models^{\sf LTL}_\Sigma \alpha \mbox{ does not hold}\\
\<I, \mathcal{M}, \pi\> \models^{\sf LTL}_\Sigma \alpha \lor \beta \mbox{ iff } \<I, \mathcal{M}, \pi\> \models^{\sf LTL}_\Sigma \alpha \mbox{ or } \<I, \mathcal{M}, \pi\> \models^{\sf LTL}_\Sigma \beta\\
\<I, \mathcal{M}, \pi\> \models^{\sf LTL}_\Sigma \X\alpha \mbox{ iff } \<I, \mathcal{M}, \pi[1..]\>  \models^{\sf LTL}_\Sigma \alpha\\
\<I, \mathcal{M}, \pi\> \models^{\sf LTL}_\Sigma \alpha\U\beta \mbox{ iff } \mbox{there exists $i \in \NAT$ such that:}\\
\hfill \begin{array}{l} \<I, \mathcal{M}, \pi[i..]\> \models^{\sf LTL}_\Sigma \beta \mbox{, and} \\
                                \mbox{for all $j \in \NAT$, $j < i$ implies $\<I, \mathcal{M}, \pi[j..]\> \models^{\sf LTL}_\Sigma \alpha$.} \end{array}
\end{array}
\]

The reader should note that both $\pi[i]$ and $\pi[i..]$ must be defined within the formal framework chosen for defining ${\sf Struct}^{\sf LTL}$. In the case of our presentation $\bullet [\bullet]:{\sf Struct}^{\sf LTL}_\Sigma \to \mathbf{States} (\Sigma)$ is defined as a functor and $\bullet [\bullet..]:{\sf Struct}^{\sf LTL}_\Sigma \to {\sf Struct}^{\sf LTL}_\Sigma$ as an endofunctor according to the behaviour described above.

\subsection{Computational tree logic}
\label{ex:ctl}
Let $\<\mathsf{Sign}^{\sf CTL}, \mathbf{Sen}^{\sf CTL}\>$ be the language of computational tree logic defined as in \cite{benari:acm-sigplan-sigact81}. As we did in the previous example, we summarise it's syntax and semantics below.

Let $\Sigma = \{p_i\}_{i \in \mathcal{I}}$ a set of propositional flexible symbols, then $\mathit{FormCTL} (\Sigma)$ is the smallest set such that:
\begin{itemize}[$-$] 
\item $\Sigma \subseteq \mathit{FormCTL} (\Sigma)$, and
\item if $\alpha, \beta \in \mathit{FormCTL} (\Sigma)$, then $\{\neg\alpha, \alpha\lor\beta, {\sf EX}\ \alpha, {\sf EG}\ \alpha, $\\
${\sf E}[\alpha\ {\sf U}\ \beta]\} \subseteq \mathit{FormCTL} (\Sigma)$.
\end{itemize}

The rest of the boolean operators are defined as usual and the rest of the ${\sf CTL}$ temporal operators are defined as follows:
\begin{itemize}[$-$]
\item ${\sf EF}\ \phi = {\sf E}[\text{true}\ {\sf U}\ \phi]$: there exists an execution from the current state in which eventually $\phi$ is true,
\item ${\sf EG}\ \phi = {\sf E}[\neg \(\text{false}\ {\sf U}\ \neg\phi\)]$: there exists an execution from the current state in which $\phi$ is always true,
\item ${\sf AX}\ \phi = \neg {\sf EX}(\neg\phi)$: in every successor of the current state $\phi$ is true,
\item ${\sf A}[\phi\ {\sf U}\ \psi] = \neg\({\sf E}[\(\neg\psi\)\ {\sf U}\ \neg\(\phi\lor\psi\)] \lor {\sf EG}\ \(\neg\psi\)\)$: in every execution from the current state $\phi$ is true until $\psi$ becomes true,
\item ${\sf AF}\ \phi = \neg {\sf EG}(\neg \phi)$: in every execution from the current state eventually $\phi$ is true, and
\item ${\sf AG}\ \phi = \neg\({\sf EF}\ \(\neg\phi\)\)$: in every execution from the current state $\phi$ is always true.
\end{itemize}

The semantics of ${\sf CTL}$ formulae over a signature $\{p_i\}_{i \in \mathcal{I}}$ is given in terms of infinite paths in a Kripke structure $\mathfrak{K} = \<W, W_0, T, \mathcal{L}\>$ such that:
\begin{inparaenum}[a)]
\item $W_0 \subseteq W$
\item $T \subseteq W \times W$, and
\item $\mathcal{L}: \{p_i\}_{i \in \mathcal{I}} \to 2^W$.
\end{inparaenum}
If $\alpha \in \mathit{FormCTL} (\{p_i\}_{i \in \mathcal{I}})$, then $\alpha$ is satisfied by $w \in W$ in $\mathfrak{K}$ (written $\mathfrak{K}, w \models \alpha$) is inductively defined as follows:
\[
\begin{array}{l}
\mathfrak{K}, w \models p_i \mbox{ iff } w \in \mathcal{L} (p_i)\\
\mathfrak{K}, w \models \neg\alpha \mbox{ iff } \mathfrak{K}, w \models \alpha \mbox{ does not hold}\\
\mathfrak{K}, w \models \alpha\lor\beta \mbox{ iff } \mathfrak{K}, w \models \alpha \mbox{ or } \mathfrak{K}, w \models \beta\\
\mathfrak{K}, w \models {\sf EX}\ \alpha \mbox{ iff } \mbox{there exists $w' \in W$ such that:} \mbox{$w\ T\ w'$ and $\mathfrak{K}, w' \models \alpha$}\\
\mathfrak{K}, w \models {\sf EG}\ \alpha \mbox{ iff } \mbox{there exists $\pi \in W^\omega$ such that:}\\
\hfill \pi[0] = w \mbox{ and for all $i \in \NAT$, $\pi[i]\ T\ \pi[i+1]$, and} \\
\hfill \mbox{for all $i \in \NAT$, $\mathfrak{K}, \pi[i] \models \alpha$}\\
\mathfrak{K}, w \models {\sf E}[\alpha\ {\sf U}\ \beta] \mbox{ iff } \mbox{there exists $\pi \in W^\omega$ such that:}\\
\hfill \begin{array}{l}
\pi[0] = w \mbox{ and for all $i \in \NAT$, $\pi[i]\ T\ \pi[i+1]$, and}\\
 \mbox{there exists $i \in \NAT$ such that:}\\
 \mbox{$\mathfrak{K}, \pi[i] \models \beta$ and for all $j \in \NAT$, $j < i$ implies $\mathfrak{K}, \pi[j] \models \alpha$} \end{array}
\end{array}
\]
\noindent where $W^\omega$ denotes the set of infinite sequences of elements from $W$, $\pi[i]$ denotes the $i^{th}$ element of the sequence $\pi$.

The state sublanguage $\<\mathbf{StSen}, \mathbf{RSign}, \mathbf{FSign}, \rho^{Sen}\>$ for ${\sf CTL}$, and it's relational theory $\mathbf{RelTh}(\Sigma) = \<\Sigma^{\mathit{Rel}}, \Gamma^{\mathit{Rel}}\>$ are defined as in Ex.~\ref{ex:ltl}.

Given $\Sigma \in |\mathsf{Sign}^{\sf CTL}|$, $\mathbf{Mod}^{\sf CTL} (\Sigma) = \<\mathcal{O}, \mathcal{A}\>$ where:
\begin{itemize}[$-$]
\item $\mathcal{O} = \left\{ \<I, \mathcal{M}, s\> \ {\Big |}\ I \in |\mathbf{Ints} (\Sigma)|, \mathcal{M} \in |\mathbf{RelStr} (\Sigma)|, s \in |\mathbf{States} (\Sigma)|_{\mathcal{M}_{bs}}| \right\}$
\item $\mathcal{A} = \left\{ \<\sigma, \gamma_h, h\>: \<I, \mathcal{M}, s\> \to \<I', \mathcal{M}', s'\>\ {\Big |}\ \sigma: I \to I' \in ||\mathbf{Ints} (\Sigma)||,\right.$

\hfill $\left. \gamma_h:\mathcal{M} \to \mathcal{M}' \in ||\mathbf{RelStr} (\Sigma)||, \text{ and } h: s \to s' \in ||\mathbf{States} (\Sigma)|_{\mathcal{M}_{bs}}|| \right\}$
\end{itemize}

Let $\Sigma \in |\mathsf{Sign}^{\sf CTL}|$, we extend $\models^{\sf L}_\Sigma$ from Def.~\ref{def:satisfiability} as follows:
\[
\begin{array}{l}
\<I, \mathcal{M}, s\> \models^{\sf CTL}_\Sigma \alpha \mbox{ iff } \<I, s\> \models^{\sf CTL}_\Sigma \alpha \mbox{, for all $\alpha \in \mathbf{StSen} (\Sigma)$}\\
\<I, \mathcal{M}, s\> \models^{\sf CTL}_\Sigma \neg\alpha \mbox{ iff } \<I, \mathcal{M}, \tau\> \models^{\sf CTL}_\Sigma \alpha \mbox{ does not hold}\\
\<I, \mathcal{M}, s\> \models^{\sf CTL}_\Sigma \alpha \lor \beta \mbox{ iff } \<I, \mathcal{M}, \tau\> \models^{\sf CTL}_\Sigma \alpha \mbox{ or } \<I, \mathcal{M}, \tau\> \models^{\sf CTL}_\Sigma \beta\\
\<I, \mathcal{M}, s\> \models^{\sf CTL}_\Sigma \Ex\alpha \mbox{ iff }
\mbox{there exists $s' \in |\mathbf{States} (\Sigma)|_{\mathcal{M}_{bs}}|$ such that:}\\
\hfill \mbox{$s\ T^\mathcal{M}\ s'$ and $\<I, \mathcal{M}, s'\> \models^{\sf CTL}_\Sigma \alpha$}\\
\<I, \mathcal{M}, s\> \models^{\sf CTL}_\Sigma \EG\alpha \mbox{ iff }
\mbox{there exists $\pi \in |({\it List}\ \(\mathbf{States} (\Sigma)|_{\mathcal{M}_{bs}}\)|$, such that:}\\
\hfill \begin{array}{l}
\pi[0] = s \mbox{ and for all $i \in \NAT$, $\pi[i]\ T^\mathcal{M}\ \pi[i+1]$, and }\\
\mbox{for all $i \in \NAT$, $\<I, \mathcal{M}, \pi[i]\> \models^{\sf CTL}_\Sigma \alpha$.} \end{array}\\
\<I, \mathcal{M}, s\> \models^{\sf CTL}_\Sigma \eee[\alpha\U\beta] \mbox{ iff }
\mbox{there exists $\pi \in |({\it List}\ \(\mathbf{States} (\Sigma)|_{\mathcal{M}_{bs}}\)|$, such that:}\\
\hfill \begin{array}{l}
\pi[0] = s \mbox{ and for all $i \in \NAT$, $\pi[i]\ T^\mathcal{M}\ \pi[i+1]$, and}\\
\mbox{there exists $i \in \NAT$ such that $\<I, \mathcal{M}, \pi[i]\> \models^{\sf CTL}_\Sigma \beta$ and}\\
\mbox{for all $j \in \NAT$, $j < i$ implies $\<I, \mathcal{M}, \pi[j]\> \models^{\sf CTL}_\Sigma \alpha$.}\end{array}\\
\end{array}
\]

Once again, the reader should note that both $\pi[i]$ and $\pi[i..]$ must be defined within the formal framework chosen for defining ${\sf Struct}^{\sf CTL}$ as we did in Ex.~\ref{ex:ltl}.

\subsection{First order computational tree logic $*$}
Let $\<\mathsf{Sign}^{\sf FOCTL*}, \mathbf{Sen}^{\sf FOCTL*}\>$ be the language of a first order version of computational tree logic $*$ (see \cite{emerson:jcss-30_1,emerson:jacm-33_1} for the definition of computational tree logic $*$). As we did in the previous examples, we summarise it's syntax and semantics below.

Let $\Sigma = \<C, F, P, A\> \in |{\sf Sign}^{\sf FOCTL*}|$ and $X$ a set of first order variable symbols, then $\mathit{TermFODL} (\Sigma)$ is the smallest set such that:
\begin{itemize}[$-$]
\item $C \cup X \subseteq \mathit{TermFOCTL*} (\Sigma)$, and
\item if $f \in F$ and $\{t_1, \ldots, t_{\mathit{arity} (f)}\} \subseteq \mathit{TermFOCTL*} (\Sigma)$, then $f (t_1, \ldots, t_2) \in \mathit{TermFOCTL*} (\Sigma)$
\end{itemize}
Next, we mutually define state and path formulae, as the smallest sets $\mathit{FormStFOCTL*} (\Sigma)$ and $\mathit{FormPtFOCTL*} (\Sigma)$ such that:
\begin{itemize}[$-$]
\item if $t_1, t_2 \in \mathit{TermFOCTL*} (\Sigma)$, then $t_1 = t_2 \in \mathit{FormStFOCTL*} (\Sigma)$, 
\item if $p \in P$ and $\{t_1, \ldots, t_{\mathit{arity} (p)}\} \subseteq \mathit{TermFOCTL*} (\Sigma)$, then $p (t_1, \ldots, t_2) \in \mathit{FormStFOCTL*} (\Sigma)$,
\item if $x \in X$, $\varphi \in \mathit{FormPtFOCTL*} (\Sigma)$ and $\alpha, \beta \in \mathit{FormStFOCTL*} (\Sigma)$, then $\{\neg\alpha, \alpha\lor\beta, (\exists x)\alpha, {\sf E}\ \varphi\} \subseteq \mathit{FormStFOCTL*} (\Sigma)$, and
\item if $\alpha \in \mathit{FormStFOCTL*} (\Sigma)$ and $\varphi, \psi \in \mathit{FormPtFOCTL*} (\Sigma)$, then $\{\alpha, \neg\varphi, \varphi\lor\psi, {\sf X}\ \varphi, \varphi\ {\sf U}\ \psi\} \subseteq \mathit{FormPtFOCTL*} (\Sigma)$, 
\end{itemize}

The additional logical operators are defined as usual, the additional temporal operators yielding state formulae are defined as in Ex.~\ref{ex:ctl}, and those yielding path formulae are defined as in Ex.~\ref{ex:ltl}.

The semantics of ${\sf FOCTL*}$ formulae over a signature $\<C, F, P\>$ and a set of first order variable symbols $X$ is given, as usual, in terms of an interpretation $\<S, \{\overline{c}\}_{c \in C}, \{\overline{f}\}_{f \in F}, \{\overline{p}\}_{p \in P}\>$ of the first order signature $\<C, F, P\>$, satisfying:
\begin{inparaenum}[1)]
\item $\overline{c} \in S$,
\item $\overline{f} \in [S^{\mathit{arity} (f)} \to S]$, for all $f \in F$, and
\item $\overline{p} \subseteq S^{\mathit{arity} (p)} \times S$, for all $p \in P$,
\end{inparaenum}
\noindent and states of a Kripke structure $\mathfrak{K} = \<W, W_0, \mathcal{L}\>$ such that:
\begin{inparaenum}[a)]
\item $W_0 \subseteq W$, and
\item $\mathcal{L}: W \to [X \to S]$.
\end{inparaenum}
Therefore, if $\alpha \in \mathit{FormStFOCTL*} (\<C, F, P\>)$, then $\alpha$ is satisfied by $w \in W$ in $\mathfrak{K}$ (written $\mathfrak{K}, w \models \alpha$) is inductively defined as follows:
\[
\begin{array}{l}
\mathfrak{K}, w \models t_1 = t_2 \mbox{ iff } m_{\<\mathfrak{K}, w\>} (t_1) = m_{\<\mathfrak{K}, w\>} (t_2)\\
\mathfrak{K}, w \models p (t_1, \ldots, t_{\mathit{arity} (p)}) \mbox{ iff } \<m_{\<\mathfrak{K}, w\>} (t_1), \ldots, m_{\<\mathfrak{K}, w\>} (t_{\mathit{arity} (p)})\> \in \overline{P}\\
\mathfrak{K}, w \models \neg\alpha \mbox{ iff } \mathfrak{K}, w \models \alpha \mbox{ does not hold}\\
\mathfrak{K}, w \models \alpha\lor\beta \mbox{ iff } \mathfrak{K}, w \models \alpha \mbox{ or } \mathfrak{K}, w \models \beta \\
\mathfrak{K}, w \models (\exists x) \alpha \mbox{ iff } \mbox{there exists $w' \in W$, $s \in S$ such that:}\\
\hfill \begin{array}{l}
\mathcal{L} (w') = \mathcal{L} (w)[x \mapsto s] \mbox{, and}\\
\mathfrak{K}, w' \models \alpha \end{array}\\
\mathfrak{K}, w \models {\sf E}\ \varphi \mbox{ iff } \mbox{there exists $\pi \in W^\omega$ such that $\mathfrak{K}, \pi \models \varphi$}\\
\ \\
\mathfrak{K}, \pi \models \alpha \mbox{ iff } \mathfrak{K}, \pi[0] \models \alpha \mbox{, $\alpha \in \mathit{FormStFOCTL*} (\<C, F, P\>)$}\\
\mathfrak{K}, \pi \models \neg\varphi \mbox{ iff } \mathfrak{K}, \pi \models \varphi \mbox{ does not hold}\\
\mathfrak{K}, \pi \models \varphi\lor\psi \mbox{ iff } \mathfrak{K}, \pi \models \varphi \mbox{ or } \mathfrak{K}, \pi \models \psi \\
\mathfrak{K}, \pi \models {\sf X}\ \varphi \mbox{ iff } \mathfrak{K}, \pi[1..] \models \varphi \\
\mathfrak{K}, \pi \models \varphi\ {\sf U}\ \psi \mbox{ iff } \mbox{there exists $j \in \NAT$ such that:}\\
\hfill \mbox{$\mathfrak{K}, \pi[j..] \models \psi$ and for all $i \leq j$, $\mathfrak{K}, \pi[i..] \models \varphi$}
\end{array}
\]
\noindent where $W^\omega$ denotes the set of infinite sequences of elements from $W$, $\pi[i]$ denotes the $i^{th}$ element of the sequence $\pi$.
\[
\begin{array}{l}
m_{\<\mathfrak{K}, w\>} (c) = \overline{c}\\
m_{\<\mathfrak{K}, w\>} (x) = \mathcal{L} (w) (x)\\
m_{\<\mathfrak{K}, w\>} (f (t_1, \ldots, t_{\mathit{arity}(f)})) =  \overline{f} (m_{\<\mathfrak{K}, w\>} (t_1), \ldots, m_{\<\mathfrak{K}, w\>} (t_{\mathit{arity}(f)}))
\end{array}
\]
The state sublanguage $\<\mathbf{StSen}, \mathbf{RSign}, \mathbf{FSign}, \rho^{Sen}\>$ for ${\sf FOCTL*}$ is defined exactly as in Ex.~\ref{ex:state-sublanguage}. On the other hand, $\mathbf{RelTh}(\Sigma) = \<\Sigma^{\mathit{Rel}}, \Gamma^{\mathit{Rel}}\>$ is defined as in Ex.~\ref{ex:ltl}.

Given $\Sigma \in |\mathsf{Sign}^{\sf FOCTL*}|$, $\mathbf{Mod}^{\sf FOCTL*} (\Sigma) = \<\mathcal{O}, \mathcal{A}\>$  where\footnote{As we did in the previous example, we resort to the formalisation of the generic datatype ${\it List}$ (see \cite[pp.~55--56]{backhouse:afp98}).}:
\begin{itemize}[$-$]
\item $\mathcal{O} = \left\{\<I, \mathcal{M}, x\>\ {\Big |}\ I \in |\mathbf{Ints} (\Sigma)|, \mathcal{M} \in |\mathbf{RelStr} (\Sigma)| \text{ and } \right.$

\hfill $x \in |\mathbf{States} (\Sigma) + {\it List}\ \(\mathbf{States} (\Sigma)|_{\mathcal{M}_{bs}}\)| \text{ such that}$

\hfill $\left. \text{if } x = in_r (\pi) \text{ then for all } i \in \NAT, \pi[i]\ T^\mathcal{M}\ \pi[i+1] \right\}$,
\item $\mathcal{A} = \left\{\<\sigma, \gamma_h, h\>: \<I, \mathcal{M}, x\> \to \<I', \mathcal{M}', x'\>\ |\ \sigma: I \to I' \in ||\mathbf{Ints} (\Sigma)||, \right.$

\hfill $\begin{array}{l}
\gamma_h:\mathcal{M} \to \mathcal{M}' \in ||\mathbf{RelStr} (\Sigma)||, \\
\left. h: x \to x' \in ||\mathbf{States} (\Sigma)|_{\mathcal{M}_{bs}} + {\it List}\(\mathbf{States} (\Sigma)|_{\mathcal{M}_{bs}}\)|| \right\} \end{array}$
\end{itemize}

Let $\Sigma \in |\mathsf{Sign}^{\sf FOCTL*}|$, we extend $\models^{\sf L}_\Sigma$ from Def.~\ref{def:satisfiability} as follows:
\[
\begin{array}{l}
\<I, \mathcal{M}, in_l (s)\> \models^{\sf FOCTL*}_\Sigma \alpha \mbox{ iff } \<I, s\> \models^{\sf FOCTL*}_\Sigma \alpha \mbox{, for all $\alpha \in \mathbf{StSen} (\Sigma)$}\\
\<I, \mathcal{M}, in_l (s)\> \models^{\sf FOCTL*}_\Sigma \neg\alpha \mbox{ iff }  \<I, \mathcal{M}, in_l (s)\> \models^{\sf FOCTL*}_\Sigma \alpha \mbox{ does not hold}\\
\<I, \mathcal{M}, in_l (s)\> \models^{\sf FOCTL*}_\Sigma \alpha \lor \beta \mbox{ iff } \\
\hfill \<I, \mathcal{M}, in_l (s)\> \models^{\sf FOCTL*}_\Sigma \alpha \mbox{ or } \<I, \mathcal{M}, in_l (s)\> \models^{\sf FOCTL*}_\Sigma \beta\\
\<I, \mathcal{M}, in_l (s)\> \models^{\sf FOCTL*}_\Sigma (\exists x)\alpha \mbox{ iff } \mbox{there exists $s' \in |\mathbf{States} (\Sigma)|_{\mathcal{M}_{bs}}|$ such that if:}\\
- \mbox{the pair $\<\mathit{in}_{l}: I \to T, \mathit{id}_{(\mathbf{RSign}+\mathbf{FSign})(\Sigma)}: s \to T\>$ is}\\
\mbox{the pushout for $\<id_{\mathbf{RSign}(\Sigma)}:\<\mathbf{RSign}(\Sigma), \emptyset\> \to I, \mathit{in}_{l}: \<\mathbf{RSign}(\Sigma), \emptyset\> \to s\>$, and}\\
- \mbox{the pair $\<\mathit{in}_{l}: I \to T', \mathit{id}_{(\mathbf{RSign}+\mathbf{FSign})(\Sigma)}: s' \to T'\>$ is}\\
\mbox{the pushout for $\<id_{\mathbf{RSign}(\Sigma)}:\<\mathbf{RSign}(\Sigma), \emptyset\> \to I,  \mathit{in}_{l}: \<\mathbf{RSign}(\Sigma), \emptyset\> \to s'\>$} \\
\mbox{then: }
\mbox{for all $\beta \in \mathbf{Sen}^{\sf Eq}(\mathbf{RSign}(\Sigma)+(\mathbf{FSign}(\Sigma)/\{x\}))$, }\\
\mbox{$T \vdash^{\sf Eq}_{(\mathbf{RSign}+\mathbf{FSign})(\Sigma)} \mathbf{Sen}^{\sf Eq}(\varphi)(\beta)$ iff $T' \vdash^{\sf Eq}_{(\mathbf{RSign}+\mathbf{FSign})(\Sigma)} \mathbf{Sen}^{\sf Eq}(\varphi)(\beta)$}\\
\mbox{such that $\varphi$ is the identity over the signature $\mathbf{RSign}(\Sigma)+(\mathbf{FSign}(\Sigma)/\{x\})$, }\\
\mbox{mapped to the signature $\mathbf{RSign}(\Sigma)+(\mathbf{FSign}(\Sigma)$, and $\<I, \mathcal{M}, in_r (s')\> \models^{\sf FOCTL*}_\Sigma \alpha$ }\\
\<I, \mathcal{M}, in_l (s)\> \models^{\sf FOCTL*}_\Sigma \eee \varphi \mbox{ iff } \mbox{there exists $\pi \in |({\it List}\ \(\mathbf{States} (\Sigma)|_{\mathcal{M}_{bs}}\)|$, such that:}\\
\hfill \begin{array}{l} 
\pi[0] = s \mbox{ and for all $i \in \NAT$, $\pi[i]\ T^\mathcal{M}\ \pi[i+1]$, and}\\
\<I, \mathcal{M}, in_r (\pi)\> \models^{\sf FOCTL*}_\Sigma \varphi \end{array}
\end{array}
\]
\[
\begin{array}{l}
\<I, \mathcal{M}, in_r (\pi)\> \models^{\sf FOCTL*}_\Sigma \alpha \mbox{ iff } \<I, \mathcal{M}, in_l (\pi[0])\> \models^{\sf FOCTL*}_\Sigma \alpha\\
\hfill \mbox{, for all $\alpha \in \mathit{FormStFOCTL*} (\<C, F, P\>)$}\\
\<I, \mathcal{M}, in_r (\pi)\> \models^{\sf FOCTL*}_\Sigma \neg\varphi \mbox{ iff }  \<I, \mathcal{M}, in_r (\pi)\> \models^{\sf FOCTL*}_\Sigma \varphi \mbox{ does not hold}\\
\<I, \mathcal{M}, in_r (\pi)\> \models^{\sf FOCTL*}_\Sigma \varphi \lor \psi \mbox{ iff } \\
\hfill \<I, \mathcal{M}, in_r (\pi)\> \models^{\sf FOCTL*}_\Sigma \varphi$ or $\<I, \mathcal{M}, in_r (\pi)\> \models^{\sf FOCTL*}_\Sigma \psi \\
\<I, \mathcal{M}, in_r (\pi)\> \models^{\sf FOCTL*}_\Sigma \X \varphi \mbox{ iff } \<I, \mathcal{M}, in_l (\pi[1])\>  \models^{\sf FOCTL*}_\Sigma \varphi \\
\<I, \mathcal{M}, in_r (\pi)\> \models^{\sf FOCTL*}_\Sigma \varphi \U \psi \mbox{ iff } \mbox{there exists $i \in \NAT$ such that:}\\
\hfill \begin{array}{l} 
\<I, \mathcal{M}, in_r (\pi[i..])\> \models^{\sf FOCTL*}_\Sigma \psi \\
\mbox{for all $j \in \NAT$, $j < i$ implies $\<I, \mathcal{M}, in_r (\pi[j..])\> \models^{\sf FOCTL*}_\Sigma \varphi$}\end{array}
\end{array}
\]

Once again, the reader should note that both $\pi[i]$ and $\pi[i..]$ must be defined within the formal framework chosen for defining ${\sf Struct}^{\sf FOCTL*}$ as we did in Exs.~\ref{ex:ltl} and~\ref{ex:ctl}.

\section{Final remarks and Conclusions}
\label{conclusions}
In the present work, we discussed the role of model theory in software design and analysis through logical reasoning. More precisely, we focussed on the inappropriateness of using abstract model theory, a conception where models are unstructured points in a class or, in the best case, a collection of naively defined elements, as such a view conceptualise domains of interpretation as a purely abstract set, disregarding the fact that in the context of software specification we only care about values that can be obtained by the application of the functions declared in the signature of the available modules/components/types/etc.

In contrast to abstract model theory, we rely on the idea, borrowed from initial semantics \cite{ehrig85,goguen:swat74}, of formalising values as terms and, from there, we moved on to formalising interpretations (of the rigid symbols) and states of a system (assigning semantics to flexible ones) as equational theories, prescribing what hold, as what can be proved in the categorical glueing, through a pushout, of the interpretation and the state. 

Logical languages with relational semantics, like most modal and hybrid logics, are ubiquitous in software specification as they generally expose specific, and generally dynamic, properties of software artefacts. Relational models, known under the generic name of Kripke structures, provide the support for understanding the transitions between the states of an evolving software system. Even when most of such languages share the motivation of reflecting certain dynamic behaviour of software artefacts, the type of properties they characterise are of different nature, thus requiring some aspects of the models to be tailored to its specific purpose (for example, linear ordering of states of the relational structure for linear logics, like ${\sf LTL}$, states for pure branching time, like ${\sf CTL}$, and dynamic logics, like ${\sf PDL}$, a mixture of both for non-pure branching time variants, like ${\sf CTL}^*$, etc.).

Therefore, the main result of this paper is the definition, within the field of Institutions, of a unified framework for describing classes of relational models, supported by a sound and complete calculus to reason about them, and whose states provide a concrete representation of values as terms over an appropriate signature. To accomplish that, we split models into their static elements, further classified into rigid and flexible, represented by \emph{interpretations} and \emph{states}, respectively and the dynamic elements, understood as labelled transition systems determined by binary relations, constrained by means of a relational theory. The static elements are completely axiomatised in (higher order) equational logic, while the dynamic elements are characterised by the complete calculus of the elementary theory of binary relations with closure.

While the examples presented in previous sections expose the versatility of the framework, we identify two interesting directions of further investigation. On the one hand, we identify a limitation of the framework in the fact that it cannot support the semantics of logical languages whose transition relations are flexible (ours are interpreted as rigid). Supporting such a semantics requires the extension of the notion of state so it can assign meaning to flexible relational symbols. On the other hand, while the formalisation of the static elements (both the notion of interpretation and state) avoids any (explicit) use of semantics, the relational aspects are represented by the explicit use of the models of the relational theory. Formalising the relational aspects in purely proof theoretical terms, as we did for the static elements of relational models, requires the capability of associating syntactic terms (denoting individuals) of ${\sf ETBR*}$ to states in order to force a unique interpretation, providing the means for expressing atomic relational formulae (like $s_0\ T\ s_1$) so relational structures can be characterised in axiomatic terms.

\bibliography{bibdatabase}

\begin{thebibliography}{10}

\bibitem{pnueli:ieee-focs77}
Pnueli, A.:
\newblock The temporal logic of programs.
\newblock In: Proceedings of 18th. Annual IEEE Symposium on Foundations of
  Computer Science, Los Alamitos, CA, USA, {IEEE} Computer Society, {IEEE}
  Computer Society (1977)  46--57

\bibitem{pnueli:tcs-13_1}
Pnueli, A.:
\newblock The temporal semantics of concurrent programs.
\newblock Theoretical Computer Science \textbf{13}(1) (1981)  45--60

\bibitem{manna95}
Manna, Z., Pnueli, A.:
\newblock Temporal Verification of Reactive Systems.
\newblock Springer-Verlag, New York, NY, USA (1995)

\bibitem{benari:acm-sigplan-sigact81}
Ben-Ari, M., Manna, Z., Pnueli, A.:
\newblock The temporal logic of branching time.
\newblock In: Proceedings of the 8th. {ACM} {SIGPLAN-SIGACT} Symposium on
  Principles of Programming Languages, Williamsburg, Virginia, {A}ssociation
  for the {C}omputer {M}achinery, {ACM} Press (1981)  164--176

\bibitem{emerson:jacm-33_1}
Emerson, E.A., Halpern, J.Y.:
\newblock ``sometimes'' and ``not never'' revisited: on branching versus linear
  time temporal logic.
\newblock Journal of the {ACM} \textbf{33}(1) (1986)  151--178

\bibitem{fischer:stoc77}
Fischer, M.J., Ladner, R.E.:
\newblock Propositional modal logic of programs.
\newblock In Hopcroft, J.E., Friedman, E.P., Harrison, M.A., eds.: Proceedings
  of the 9th. annual {ACM} symposium on theory of computing. STOC, Boulder,
  Colorado, United States, {ACM} Press (1977)  286--294

\bibitem{harel00}
Harel, D., Kozen, D., Tiuryn, J.:
\newblock Dynamic logic.
\newblock Foundations of Computing. The MIT Press, Cambridge, MA, USA (2000)

\bibitem{henriksen:apal-96_1_3}
Henriksen, J.G., Thiagarajan, P.:
\newblock Dynamic linear time temporal logic.
\newblock Annals of Pure and Applied Logic \textbf{96}(1--3) (1999)  187--207

\bibitem{vanbenthem:hlfcs83}
{van Benthem}, J., Doets, K.:
\newblock Higher-order logic.
\newblock In Gabbay, D., Guenthner, F., eds.: Handbook of Philosophical Logic.
  Volume~1.
\newblock second edn. Kluwer Academic Publishers (2001)  275--329

\bibitem{henkin:amm-84_8}
Henkin, L.A.:
\newblock The logic of equality.
\newblock The American Mathematical Monthly \textbf{84}(8) (1977)  597--612

\bibitem{ehrig:cj-35_5}
Ehrig, H., Mahr, B., Orejas, F.:
\newblock Introduction to algebraic specification. part 1: Formal methods for
  software development.
\newblock Computer Journal \textbf{35}(5) (October 1992)  468--477

\bibitem{chang++:barwise90}
Chang, C.C., Keisler, H.J.:
\newblock Model theory. Volume~73 of Studies in Logic and the Foundations of
  Mathematics.
\newblock North Holland (1990)

\bibitem{goguen:jacm-39_1}
Goguen, J.A., Burstall, R.M.:
\newblock Institutions: abstract model theory for specification and
  programming.
\newblock Journal of the {ACM} \textbf{39}(1) (1992)  95--146

\bibitem{rabe:jlc-27_6}
Rabe, F.:
\newblock How to identify, translate, and combine logics?
\newblock Journal of Logic and Computation \textbf{27}(6) (2014)  1753--1798

\bibitem{schlingloff:brink97}
Schlingloff, H., Heinle, W.:
\newblock Relation algebra and modal logic.
\newblock In Brink, C., Kahl, W., Schmidt, K., eds.: Relational methods in
  Computer Science. Advances in computer science.
\newblock Springer-Verlag (1997)  70--89

\bibitem{kripke:apf-16}
Kripke, S.A.:
\newblock Semantical considerations on modal logic.
\newblock Acta Philosophica Fennica \textbf{16} (1963)  83--94

\bibitem{areces:phdthesis}
Areces, C.:
\newblock Logic engineering: the case of description and hybrid logics.
\newblock PhD thesis, Institute for logic, language and computation,
  Universiteit van Amsterdam (2000) Promotor: {J.}{F.}{A.}{K.} {van Benthem}.
  Co-promotor:Maarten {de Rijke}.

\bibitem{aqvist:hpl01}
Aqvist, L.:
\newblock Deontic logic.
\newblock In Gabbay, D., Guenthner, F., eds.: Handbook of Philosophical Logic.
  Volume~2.
\newblock second edn. Kluwer Academic Publishers (2001)  605--714

\bibitem{vonwright:mns-60_237}
von Wright, G.:
\newblock Deontinc logic.
\newblock Mind, New Series \textbf{60} (1951)  1--15

\bibitem{kripke:zmlgm-9_56}
Kripke, S.A.:
\newblock Semantical analysis of modal logic {I}. normal propositional calculi.
\newblock Zeitschrift fur mathematische Logik und Grundlagen der Mathematik
  \textbf{9}(56) (1963)  67--96

\bibitem{kripke:ttm65}
Kripke, S.A.:
\newblock Semantical analysis of modal logic {II}: Non-normal modal
  propositional calculi.
\newblock In Addison, J.W., Henkin, L.A., Tarski, A., eds.: The Theory of
  Models -- Proceedings of the 1965 International Symposium at Berkeley.
  Studies in Logic and the Foundations of Mathematics.
\newblock North Holland (2014)  206--220

\bibitem{goguen:cmwlp84}
Goguen, J.A., Burstall, R.M.:
\newblock Introducing institutions.
\newblock In Clarke, E.M., Kozen, D., eds.: Proceedings of the Carnegie Mellon
  Workshop on Logic of Programs. Volume 184 of Lecture Notes in Computer
  Science., Springer-Verlag (1984)  221--256

\bibitem{tarski:jsl-6_3}
Tarski, A.:
\newblock On the calculus of relations.
\newblock Journal of Symbolic Logic \textbf{6}(3) (1941)  73--89

\bibitem{vanoosten:BRICS-LS-95-1}
van Oosten, J.:
\newblock Basic category theory.
\newblock Lecture Series BRICS-LS-95-1, Basic Research in Computer cience
  (1995)

\bibitem{maclane71}
Lane, S.M.:
\newblock Categories for working mathematician.
\newblock Graduate Texts in Mathematics. Springer-Verlag, Berlin, Germany
  (1971)

\bibitem{pierce91}
Pierce, B.C.:
\newblock Basic category theory for computer scientists.
\newblock The MIT Press (1991)

\bibitem{meseguer:lc87}
Meseguer, J.:
\newblock General logics.
\newblock In Ebbinghaus, H.D., Fernandez-Prida, J., Garrido, M., Lascar, D.,
  Artalejo, M.R., eds.: Proceedings of the Logic Colloquium '87. Volume 129.,
  Granada, Spain, North Holland (1989)  275--329

\bibitem{fiadeiro:icdcwtfm93}
Fiadeiro, J.L., Maibaum, T.S.E.:
\newblock Generalising interpretations between theories in the context of
  $\pi$-institutions.
\newblock In Burn, G., Gay, D., Ryan, M., eds.: Proceedings of the First
  Imperial College Department of Computing Workshop on Theory and Formal
  Methods, London, UK, Springer-Verlag (1993)  126--147

\bibitem{tarlecki:sadt-rtdts95}
Tarlecki, A.:
\newblock Moving between logical systems.
\newblock In Haveraaen, M., Owe, O., Dahl, O.J., eds.: Proceedings of Selected
  papers from the 11th Workshop on Specification of Abstract Data Types Joint
  with the 8th COMPASS Workshop on Recent Trends in Data Type Specification.
  Volume 1130 of Lecture Notes in Computer Science., Springer-Verlag (1996)
  478--502

\bibitem{enderton72}
Enderton, H.B.:
\newblock A mathematical introduction to logic.
\newblock Academic Press (1972)

\bibitem{peirce:maas-9}
Peirce, C.S.:
\newblock Description of a notation for the logic of relatives, resulting from
  an amplification of the conceptions of {Boole's} calculus of logic.
\newblock Memoirs of the American Academy of Science \textbf{9} (1870)
  317--378

\bibitem{demorgan+:tcps-10}
{de Morgan}, A.:
\newblock On the syllogism: {IV}, and on logic of relations.
\newblock Transactions of the Cambridge Philosophical Society \textbf{10}
  (1864)  331--358 Reprinted in \cite{demorgan66}.

\bibitem{peirce:peirce1883}
Peirce, C.S.:
\newblock Note {B}: the logic of relatives.
\newblock In Peirce, C.S., ed.: Studies in logic by members of the {John
  Hopkins University}.
\newblock Little, Brown and Co., Boston (1883)  187--203 Reprinted in
  \cite{peirce83}.

\bibitem{schroder1895}
Sch{\"{o}}der, F.W.K.E.:
\newblock Algebra und Logik der Relative, der Vorlesungen \"{u}ber die Algebra
  der Logik. Volume 3,~Abt.~1.
\newblock Teubner, Leipzig (1895)

\bibitem{whitehead27}
Whitehead, A.N., Russell, B.A.W.:
\newblock Principia mathematica.
\newblock Cambridge University Press, Cambridge, MA, US (1927) 2nd. edition.

\bibitem{tarski:cm58}
Tarski, A.:
\newblock Remarks on predicate logic with infinitely long expressions.
\newblock Colloquium Mathematicum \textbf{6} (1958)  171--176

\bibitem{barwise:jsl-34_2}
Barwise, J.:
\newblock Infinitary logic and admissible sets.
\newblock Journal of Symbolic Logic \textbf{34}(2) (1969)  226--252

\bibitem{karp64}
Karp, C.C.:
\newblock Languages with expressions of infinite length.
\newblock North Holland, Amsterdam (1964)

\bibitem{goldblatt:lncs-130}
Goldblatt, R.:
\newblock Axiomatising the Logic of Computer Programming. Volume 130 of Lecture
  Notes in Computer Science.
\newblock Springer-Verlag (1982)

\bibitem{engeler:mst-1_2}
Engeler, E.:
\newblock Algorithmic properties of structures.
\newblock Mathematical systems theory \textbf{1}(2) (1967)  183--195

\bibitem{engeler:ma-151}
Engeler, E.:
\newblock A reduction principle for infinite formulas.
\newblock Feifei Ma \textbf{151} (1963)  296--303

\bibitem{enderton77}
Enderton, H.B., ed.:
\newblock Elements of Set Theory.
\newblock Academic Press (1977)

\bibitem{kripke80}
Kripke, S.A.:
\newblock Naming and Necessity.
\newblock Harvard University Press (2002)

\bibitem{maibaum:ijcis-9_1}
Maibaum, T.S.E., de~Lucena, C.J.P.:
\newblock Higher order data types.
\newblock International Journal of Computer and Information Science
  \textbf{9}(1) (1980)  31--53

\bibitem{codd:cacm-13_6}
Codd, E.F.:
\newblock Program development by stepwise refinement.
\newblock Communications of the {ACM} \textbf{14}(4) (1971)  221--227

\bibitem{meinke:tcs-100_2}
Meinke, K.:
\newblock Universal algebra in higher types.
\newblock Theoretical Computer Science \textbf{100}(2) (1992)  385--417

\bibitem{burris81}
Burris, S., Sankappanavar, H.P.:
\newblock A course in universal algebra.
\newblock Graduate Texts in Mathematics. Springer-Verlag, Berlin, Germany
  (1981)

\bibitem{frias:jancl-8}
Frias, M.F., Orlowska, E.:
\newblock Equational reasoning in non-classical logics.
\newblock Journal of Applied Non-classical Logics \textbf{8}(1--2) (1998)
  27--66

\bibitem{frias:relmics01+}
Frias, M.F., Baum, G.A., Maibaum, T.S.E.:
\newblock Interpretability of first-order dynamic logic in a relational
  calculus.
\newblock In de~Swart, H., ed.: Proceedings of the 6th. Conference on
  Relational Methods in Computer Science ({RelMiCS}) - {TARSKI}. Volume 2561 of
  Lecture Notes in Computer Science., Oisterwijk, The Netherlands,
  Springer-Verlag (October 2002)  66--80

\bibitem{frias:relmics03}
Frias, M.F., {Lopez Pombo}, C.G.:
\newblock Time is on my side.
\newblock In Berghammer, R., M{\"{o}}ller, B., eds.: Proceedings of the 7th.
  Conference on Relational Methods in Computer Science ({RelMiCS}) - 2nd.
  International Workshop on Applications of Kleene Algebra, Malente, Germany
  (May 2003)  105--111

\bibitem{frias:jlap-66_2}
Frias, M.F., {Lopez Pombo}, C.G.:
\newblock Interpretability of first-order linear temporal logics in fork
  algebras.
\newblock Journal of Logic and Algebraic Programming \textbf{66}(2) (2006)
  161--184

\bibitem{blackburn01}
Blackburn, P., {de Rijke}, M., Venema, Y.:
\newblock Modal logic.
\newblock Number~53 in Cambridge Tracts in Theoretical Computer Science.
  Cambridge University Press (2001)

\bibitem{manna91}
Manna, Z., Pnueli, A.:
\newblock The temporal logic of reactive and concurrent systems --
  Specification.
\newblock Springer-Verlag (1991)

\bibitem{emerson:jcss-30_1}
Emerson, E.A., Halpern, J.Y.:
\newblock Decision procedures and expressiveness in the temporal logic of
  branching time.
\newblock Journal of Computer and System Sciences \textbf{30}(1) (1985)  1--24

\bibitem{backhouse:afp98}
Backhouse, R., Jansson, P., Jeuring, J., Meertens, L.G.:
\newblock Generic programming -- an introduction --.
\newblock In Swierstra, S.D., Oliveira, J.N., Henriques, P.R., eds.:
  Proceedings of Third International School, Advanced Functional Programming -
  AFP'98 (Revised Lectures). Volume 1608 of Lecture Notes in Computer Science.,
  Springer-Verlag (September 1999)  28--115

\bibitem{fokkinga:phdthesis}
Fokkinga, M.M.:
\newblock {Law and Order in Algorithmics}.
\newblock PhD thesis, University of Twente (1992) Advisor: Lambert {G.}{L.}{T.}
  Meertens and Leo {A.}{M.} Verbeek.

\bibitem{fokkinga:mscs-6_1}
Fokkinga, M.M.:
\newblock Datatype laws without signatures.
\newblock Mathematical Structures in Computer Science \textbf{6}(1) (1996)
  1--32

\bibitem{hoogendijk:phdthesis}
Hoogendijk, P.F.:
\newblock A generic theory of datatypes.
\newblock PhD thesis, Department of Mathematics and Computer Science (1997)

\bibitem{castro:ictac07}
Castro, P., Maibaum, T.S.E.:
\newblock A complete and compact propositional deontic logic.
\newblock In Jones, C.B., Liu, Z., Woodcock, J., eds.: Proceedings of 4th
  International Colloquium Theoretical Aspects of Computing - {ICTAC} 2007.
  Volume 4711 of Lecture Notes in Computer Science., Springer-Verlag (2007)
  109--123

\bibitem{paoli02}
Paoli, F.:
\newblock Substructural Logics: A Primer. Volume~13 of Trends in Logic.
\newblock Springer-Verlag (2002)

\bibitem{ehrig85}
Ehrig, H., Mahr, B.:
\newblock Fundamentals of Algebraic Specification 1: Equations and Initial
  Semantics. Volume~6 of {EATCS} Monographs on Theoretical Computer Science.
\newblock Springer-Verlag (1985)

\bibitem{goguen:swat74}
Goguen, J.A., Thatcher, J.W.:
\newblock Initial algebra semantics.
\newblock In: Proceedings of 15th Annual Symposium on Switching and Automata
  Theory ({SWAT} 1974), {IEEE} Computer Society (October 1974)  63--77

\bibitem{demorgan66}
{de Morgan}, A.:
\newblock On the syllogism, and other logical writings.
\newblock Yale University Press (1966)

\bibitem{peirce83}
Peirce, C.S., ed.:
\newblock Studies in logic by members of the {John Hopkins University}.
\newblock Jones and Bartlett Publishers, Amsterdam, The Netherlands and
  Philadelphia, USA (1983) Reprint of \cite{peirce1883}.

\bibitem{peirce1883}
Peirce, C.S., ed.:
\newblock Studies in logic by members of the {John Hopkins University}.
\newblock Little, Brown and Co., Boston (1883) Reprinted in \cite{peirce83}.

\end{thebibliography}
\bibliographystyle{splncs}

\end{document}